\title{ Limit shapes for the dimer model}
\author{Nikolai Kuchumov}
\email{nikolai.kuchumov@gmail.com}
\newtheorem{theorem}{Theorem}
\newtheorem{proposition}{Proposition}
\newtheorem{claim}{Claim}
\newtheorem{remark}{Remark}
\newtheorem{corollary}{Corollary}
\newcommand{\RR}{\mathbb{R}}
\newcommand{\ZZ}{\mathbb{Z}}
\newcommand{\cN}{{\mathcal N}}
\newcommand{\pp}{\partial}
\newcommand{\cS}{\mathcal{S}}
\newcommand{\cA}{\mathcal A}
\newcommand{\fD}{\mathfrak D}
\newcommand{\GT}{\Gamma}
\newcommand{\OO}{\Omega}
\newcommand{\GG}{(\Gamma,\pp \Gamma)}
\newcommand{\GB}{(\Gamma,\chi)}
\newcommand{\hn}{\eta_n}
\newcommand{\avhn}{\bar{\eta_n}}
\newcommand{\Gn}{(\Gamma_n,\chi_n)}
\newcommand{\bb}{\begin{equation}}
\newcommand{\ee}{\end{equation}}
\newcommand{\norm}[1]{\left\lVert#1\right\rVert}
\begin{document}

\begin{abstract}
We prove the existence of a limit shape for the dimer model on planar \\ periodic bipartite graphs with an arbitrary fundamental domain and arbitrary periodic weights. This proof is based on a variational principle that uses the locality of the model and the compactness of the space of states.

\end{abstract}
\maketitle
\tableofcontents

\section{Introduction and Summary}
\label{0}
\subsection{Introduction}

This paper is devoted to the limit shape phenomena in lattice models of equilibrium statistical mechanics 

We use an example of a periodic planar dimer model with an arbitrary fundamental domain and arbitrary periodic weights.
Past results for arbitrarily planar domains were done for domino tilings (the fundamental domain correspondent to lattice $\ZZ^2$ in the dimer model) and weights equal to one \cite{CKP}.

The Dimer model is a stochastic model on finite graphs. We describe a set of configurations (dimer covers) in terms of discrete functions called height functions. When the size of the system is relatively small, the system behaves randomly. 
As the system grows, all height functions gather around exactly one continuous function, called \textit{the limit shape}
\footnote{ There are several lecture notes on modern studies of the limit shape phenomena, see \cite{AO} \cite{K} and references there.}.

The history of studies of limit shape phenomena goes back to the famous work by A.Vershik and S.Kerov on asymptotics for the Plancherel measure on Young diagrams \cite{Vershik-Kerov} using a variational principle and the following works,\cite{Vershik-Kerov2}, \cite{Vershik}.

\begin{figure}[h!] \label{YD}
\centering
\includegraphics[width=0.5\textwidth]{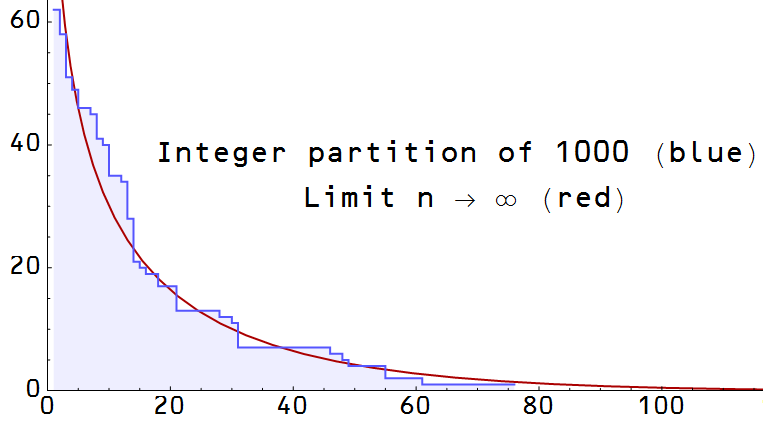}
\caption{The limit shape for the Young diagrams according to the Plancherel measure.}
\end{figure}

In more recent studies of limit shape phenomena the key example is random
domino tilings of Aztec diamond and the arctic circle theorem, \cite{JPS}, \cite{CEP} and \hyperref[Aztec]{Figure \ref{Aztec}.}. Later the technique was generalized to the arbitrary regions in \cite{CKP}.

\begin{figure}[h!]
\begin{minipage}[h]{0.39\linewidth}
\center{\includegraphics[width=0.4\linewidth]{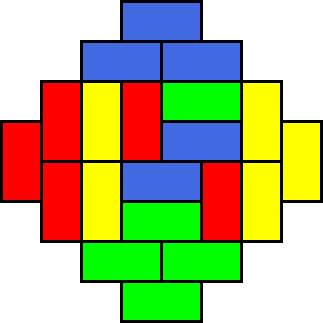} \\ Aztec diamond of size 4 }
\end{minipage}
\hfill
\begin{minipage}[h!]{0.59\linewidth}
\center{\includegraphics[width=0.7\linewidth]{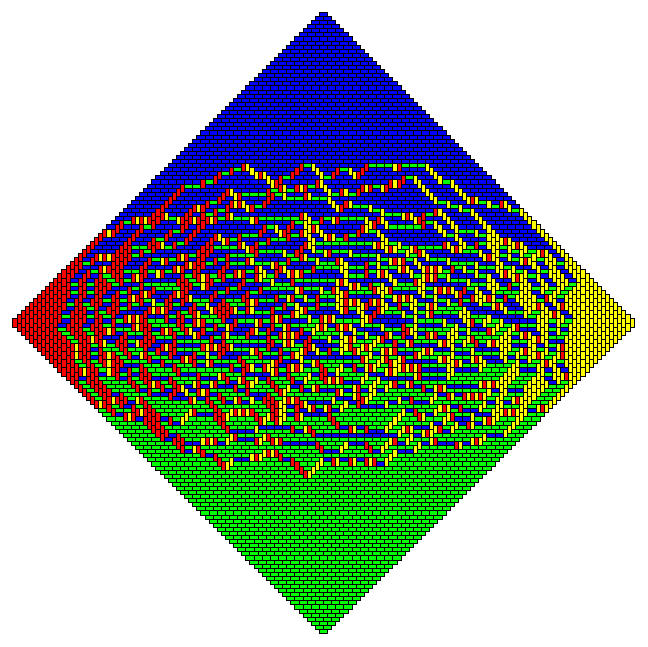} \\ Large scale behavior.}
\end{minipage}

\caption{ The arctic circle theorem on the left with frozen regions near the boundary. There are four types of dominos according to the standard bipartite structure on $\ZZ^2$ that are marked by four colors}
\label{ris:image1}
\label{Aztec}
\end{figure}
Then limit shape was found for 3d Young diagrams (also known as a plane partition) using Wolf crystal contraction, \cite{CK}, see \hyperref[Plane]{Figure \ref{Plane}}. 

\begin{figure}[h!]
\begin{minipage}[h]{0.39\linewidth}
\center{\includegraphics[width=0.6\linewidth]{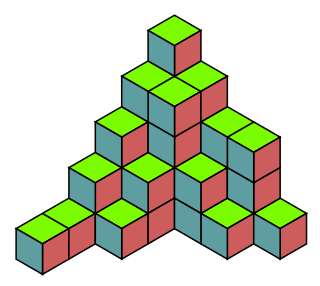} \\ An example of a plane partition.}
\end{minipage}
\hfill
\begin{minipage}[h!]{0.59\linewidth}
\center{\includegraphics[width=0.8\linewidth]{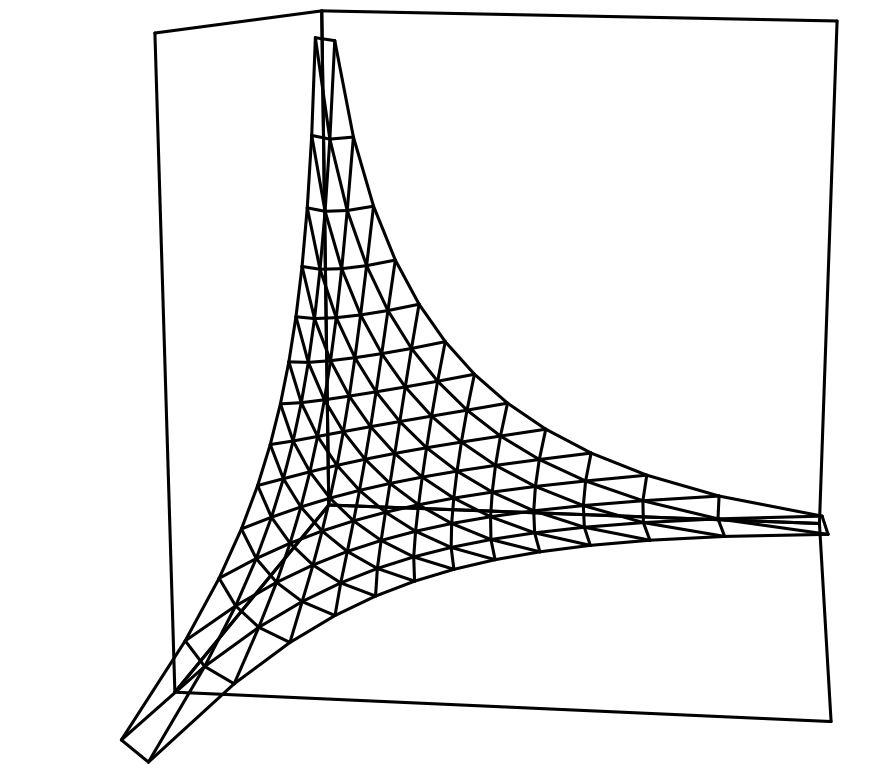} \\ The Wolf crystal around which plane partitions concentrate at large scales.}
\end{minipage}
\caption{The Wolf crystal for the plane partitions.}
\label{Plane}
\end{figure}
Later on there was a generalization of works on asymptotics of Young diagrams for deformed Plancherel measure,\cite{NO}.

Also, there was a work on a dimer model that generalized domino tilings (correspondent to the $\ZZ^2$ case of the dimer model) and built a connection to algebraic geometry,  \cite{KOS}, \cite{KO1}, \cite{KO2}.

Then there were works with purely algebraic methods that give alternative proofs of known theorems (i.e. the arctic circle theorem) and many new results, see \cite{BG}, \cite{BF} and references there.

In the last five years, several studies on limit shape phenomena of six vertexes were done, using algebraic methods for stochastic six-vertex model and a correspondent new type of integrable PDE was found \cite{RS}, \cite{BCG}.

\subsection{Summary}
The plan of the paper is as follows: first comes \hyperref[0]{introduction}, in the \hyperref[1]{second section} we discuss basic definitions of a dimer model, in the \hyperref[2]{third section} we define the probability distribution on a set of the dimer covers, then we define the \hyperref[3]{cutting rule} that plays the crucial role in the proofs. In the
next
section we discuss the $\ZZ^2$-periodic graphs with which we will work through the rest of the paper. In the \hyperref[4]{fifth section} we define the thermodynamic limit of the dimer model on
a plane and formulate the main result of the paper, \hyperref[The_limit]{The Limit Shape Theorem}.

Exactly the same proof works for the six vertex model as well. It will be presented in our next paper.

\subsection{Brief plan of the proofs}
The proof of the limit shape theorem consists of three parts, \textit{a variational principle}, which  we prove in the \hyperref[6]{seventh section} and in the \hyperref[7]{eighth section}. The second part is \hyperref[density]{The Density lemma} that we prove in the tenth section \hyperref[9]{Proofs of properties of height functions}. The third part of the proof is \hyperref[Functional]{the Surface Tension theorem} that is the key theorem in the proof of the main theorem.

A variational principle is a general analytic statement. We prove it under assumption of the \hyperref[Concentration_lemma]{Concentration lemma} and \hyperref[Functional]{the Surface Tension theorem}. The Concentration lemma is a probabilistic measure-theoretical statement which states that all height functions concentrate around an average height function.

In the second part we formulate and prove several properties of height functions, which lead us to \hyperref[density]{the Density lemma}. The idea of this part is that height functions are discrete analogs of Lipschitz functions, so the statements for the height functions and asymptotic height functions are almost the same including the proofs. Then we prove the
\hyperref[density]{Density lemma}
that states that for each asymptotic height function there exists a sequence of normalized height functions that
converges to it.

In the third part, we prove several auxiliary propositions that lead us to  \hyperref[Functional]{the Surface Tension theorem}. The idea is to approximate a square by a torus, then approximate a triangle by squares and approximate an arbitrary domain by triangles.

\section*{Acknowledges}
We are grateful to Institut Henri Poincaré for hospitality during the trimester "Combinatorics and interactions", where the part of the work was done. Also we would like to thank Anatoly Vershik for the excellent introduction to the limit shape phenomena and support, thank Nicolai Reshetikhin for the suggest to look at this problem and inspiring talk in PDMI RAS in 2017 and thank Vadim Gorin for the references. For useful comments and discussions we would like to thank Dmitry Chelkak, Fedor Petrov, Pavel Zatitskii, and Vladimir Fock. For the help in finding typos we would like to thank Masha Smirnova and Igor Skutsenya. Especially we would like to thank Pavlo Gavrilenko for reading the manuscript and Ilia Nekrasov without whose support and remarks the work would be impossible.

\section{Dimers on graphs with boundaries and height functions}
\label{1}
In this section we partly follow \cite{CR} and \cite{GK}.
\subsection{Dimers on graphs with boundaries}

A \textit{graph with a boundary} is a finite graph $\GT$ together with a set $\pp \GT$ of
valence one
vertices. We will refer to such verteces as \textit{boundary vertices} and the other vertices as \textit{internal}.

A \textit{dimer cover} $D$ on a graph with a boundary $\GG$ is a choice of edges of $\GT$, called dimers, so that each vertex, that is not a boundary vertex, is adjacent to exactly one dimer. Note that some of the boundary vertices may be adjacent to a dimer of $D$, and some may not. Dimer covers are also known as dimer configurations or perfect
matchings.
 
We will divide these partitions of boundary vertices into two groups - matched and non-matched and refer to them as \textit{boundary conditions} of dimer covers on $\GT$.


One can parametrize boundary condition by choosing a set of non-matched boundary vertices, let us use notation $\delta D$ for this set.

\begin{equation}
\delta D:=\{ v | v \in \pp \GT, v\text{ is not matched by }  D \}
\end{equation}

Let us call a set of the dimer covers on $\GG$ by $\mathfrak{D}(\GT)$ and denote the set of the dimer covers with fixed boundary conditions by
\bb
\fD(\GT;\delta D_0):=\{ D \in \fD(\GT) | \delta D= \delta D_0 \}
\ee

\subsection{Dimers on bipartite graphs}
\subsubsection{Bipartite structure}
 We will always assume that our graph is a \textit{surface graph}, that is a graph embedded
into
a compact oriented surface $S$ whose faces, i.e. the connected components of $S − \GT$, are contractible.
By embedding of graph with a boundary
into
surface $S$ we mean such embedding
 without self-intersections $\GT \xhookrightarrow[i]{} S$ that $i(\GT)\cap \pp S=i\left(\pp\GT\right) $
and the complement of $\GT$ $\backslash$ $\pp\GT$ in $S$ $\backslash$ $\pp S$ consists of open 2-cells. 

A \textit{bipartite structure} on a graph $\GT$ is a partition of its set of vertices into two groups, say blacks and whites, such that no edge of $\GT$ joins two vertices of the same group.

A bipartite structure induces an orientation on the edges of $\GT$, called the bipartite
orientation: simply orient all the edges from the white vertices to the black ones.

Now our graph is a cell complex and we will use usual boundary operator $\pp$ and standard notation for chain complex with the bipartite orientation. Also we will use following notation for expressions with the boundary operator.Let us denote by $[v]$ the 0-chain correspondent to a vertex $v$, similar for edges and faces. Suppose that $[e]_{12}$ is an edge between vertices $[v_1]$ and $[v_2]$, then we will use following notations:

\bb
\pp([e]_{12})=[v_1]-[v_2]=\sum_{i=1,2}sgn(v_i)[v_i] \in C_1(\GT,\ZZ),
\ee
where
$sgn(v) = 1$ is for the black vertex $v$, and
$sgn(v) = −1$ is for the white vertex.
\begin{figure}[h!]
\centering
\includegraphics[width=0.2\textwidth]{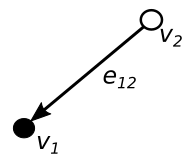}
\caption{An orientation of an edge}
\end{figure}

Equivalently, a bipartite structure can be regarded as a 0-chain

\begin{equation}
\beta = \sum_{v }sgn(v)[v] \in C_{0}( \GT , \ZZ)
\end{equation}
where the sum is over all vertices $v$ of $\GT$.

\subsubsection{Dimer covers}

\begin{figure}[h!]
\centering
\includegraphics[width=0.5
\textwidth]{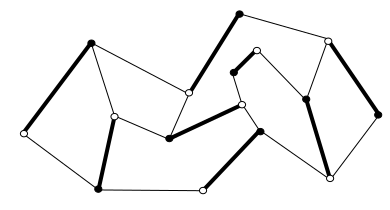}
\caption{An example of a dimer cover on a bipartite graph.}
\end{figure}

Using bipartite orientation, a dimer cover $D \in \fD\GG$ can now be regarded as a 1-chain with $\ZZ$-coefficients

\begin{equation}
D=\sum_{e \in D}sgn(e)[e]  \in C_1(\GT; \ZZ).
\end{equation}
The condition that $D$ is a dimer cover means that

\bb
\pp D=\sum_{v}sgn(v)[v]
\ee
where the sum is
taken
over all internal vertices and some boundary vertices of $\GT$\footnote{One can rewrite it as $\pp D=\sum_{v-black \ vertex}[v]-\sum_{v-white \ vertex}[v]$.}.

\subsubsection{Boundary conditions and decomposition cycles}
One can regard boundary conditions as an element in $C_0(\pp \GT;\ZZ)$, let $D$ be a dimer cover,
\begin{equation}
\delta D= \sum_{v \in \delta D}sgn(v)[v] \ .
\label{boundary_exp}
\end{equation}
Then one can notice that there is a relation between a dimer cover, its boundary conditions and bipartite structure on the graph. This relation is a set-theoretic statement with respect to bipartite orientation.
\begin{equation}
\beta=\sum_{v\text{ is not adjacent to } D}sgn(v)[v]+\sum_{v\text{ is adjacent to } D}sgn(v)[v]=\pp D + \delta D .
\end{equation}

Let $D, D^{\prime}$ be dimer covers with boundary conditions $\delta D$ and $\delta D^{\prime}$.
One can look at the difference $D-D^{\prime}$ in 1-chains for two dimer covers $D$ and $D^{\prime}$.

Then after substitution of (\ref{boundary_exp}) we get that $\partial ( D-D^{\prime} ) = \delta D - \delta D^{\prime}$. In case of equal boundary conditions $D-D^{\prime}$ is a 1-cycle. Otherwise it is true only in relative 1-chains, $C_0 (\Gamma,   \partial \Gamma; \mathbb{Z})$. 

Let us call $D-D^{\prime}$ \textit{decomposition cycles } of dimer covers $D$ and $D^{\prime}$.

\begin{figure}[h!]
\centering
\includegraphics[width=0.4\textwidth]{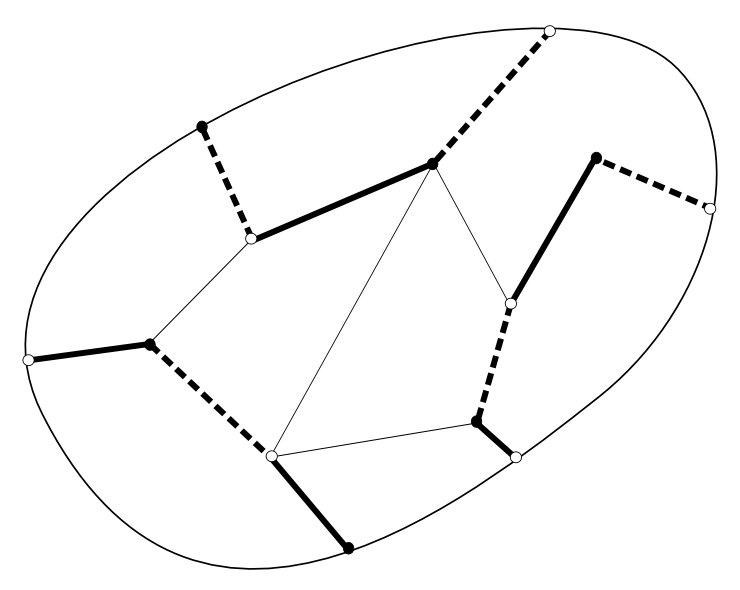}
\caption{An example of two dimer covers $D$ and $D^{\prime}$. Dimers from $D$ are in solid lines and dimers from $D^{\prime}$ in traced lines.}
\label{Decomposition}
\end{figure}

\subsection{Height function}
In this section we want to make a bijection between dimer covers of $\GT$ and some functions on faces of $\GT$. Somehow we want to map a dimer cover to its <<height>>. However, on non simply-connected surfaces there is no global height function and it is not actually a <<height>>, but rather <<The Penrose stairs>>, see \hyperref[Penrose]{Figure \ref{Penrose}.} 

\begin{figure}[h!]
\centering
\includegraphics[width=0.3\textwidth]{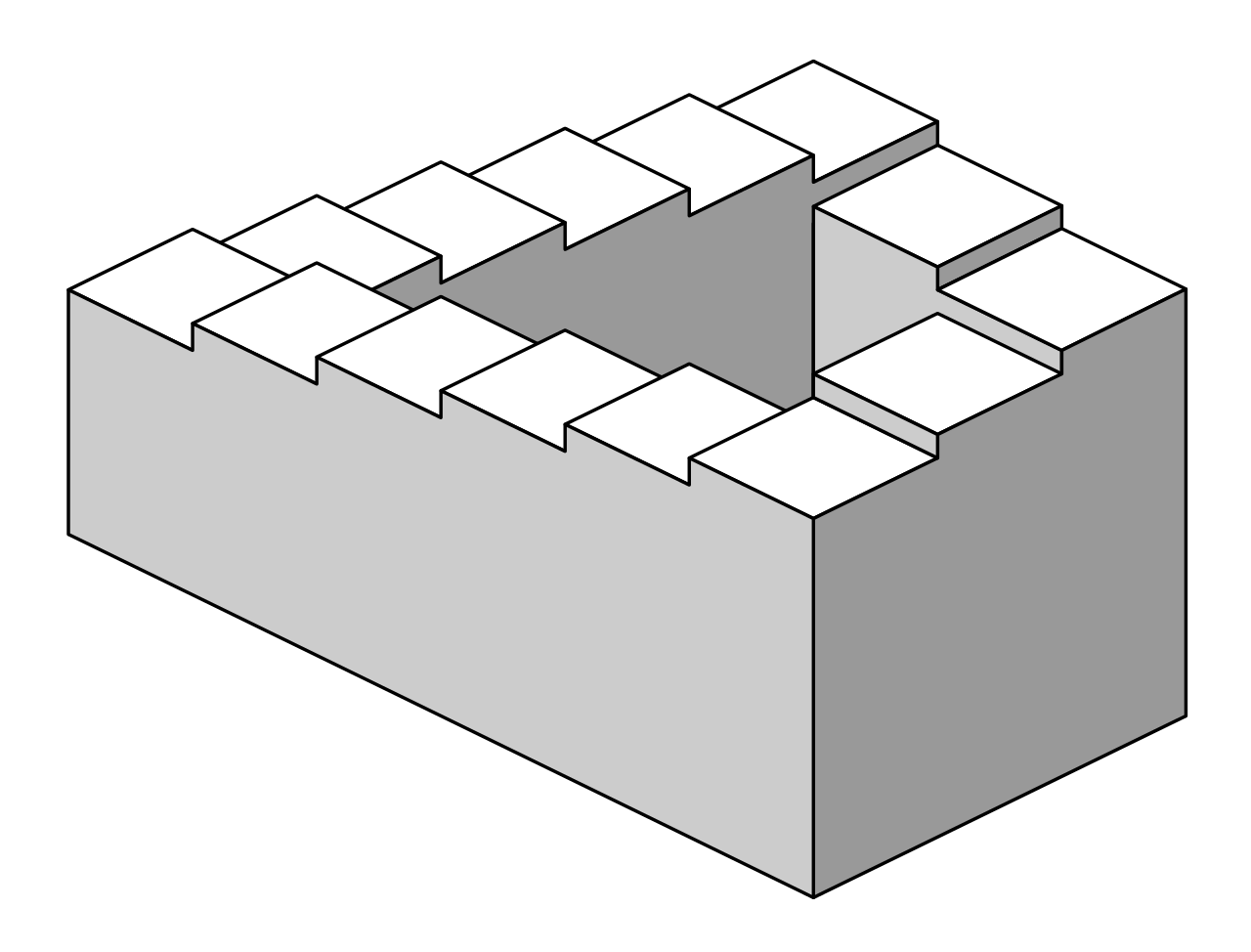}
\caption{<<The Penrose stairs>>.}
\label{Penrose}
\end{figure}

We will consider two cases --- graphs embedded into the torus and graphs embedded into the two-dimensional disk.

\subsubsection{Case of the disk}

Let $\GT$ be a graph with a boundary embedded into disk $\mathbb{D}^2$. We recall that $\GT$
induces
a cell decomposition of $\mathbb{D}^2$. We denote by \textit{boundary cells} such cells that contain boundary vertices.

Let us take two dimer covers, $D$ and $D^{\prime}$. Because $D-D^{\prime}$ is 1-cycle (rel $\pp \GT)$ there is an element $\sigma_{D,D^{\prime}} \in C_2(\GT,\pp\GT; \ZZ)$ such that $\pp\sigma_{D,D^{\prime}}=D-D^{\prime}$. Then let us define 2-cochain $h_{D,D^{\prime}}$ by the following formula:

\begin{equation}
\sigma_{D,D^{\prime}}=\sum_{f} h_{D,D^{\prime}}(f)[f] \in C_{2}(\GT,\pp\GT; \ZZ)
\end{equation}
where the sum
goes
over all faces of $\GT$.
We can
regard
cocycle $h_{D, D^{\prime}}$ as a function on faces of $\GT$. We will call such functions
as \textit{height functions}. 

One can notice that a height function is simply a function of level for decomposition cycles, so it is uniquely defined by $D, D^{\prime}$ up to an additive constant. Hence, one can normalize all height functions by setting $h_{D,D^{\prime}} (f_0) = 0$ for some fixed face $f_0$. Note that a height function satisfy Lipschitz condition in some sense. For any two faces $y$ and $x$ values of a height function at this faces differ at most by the length on the dual graph between the faces.

\begin{figure}[h!]
\centering
\includegraphics[width=0.5\textwidth]{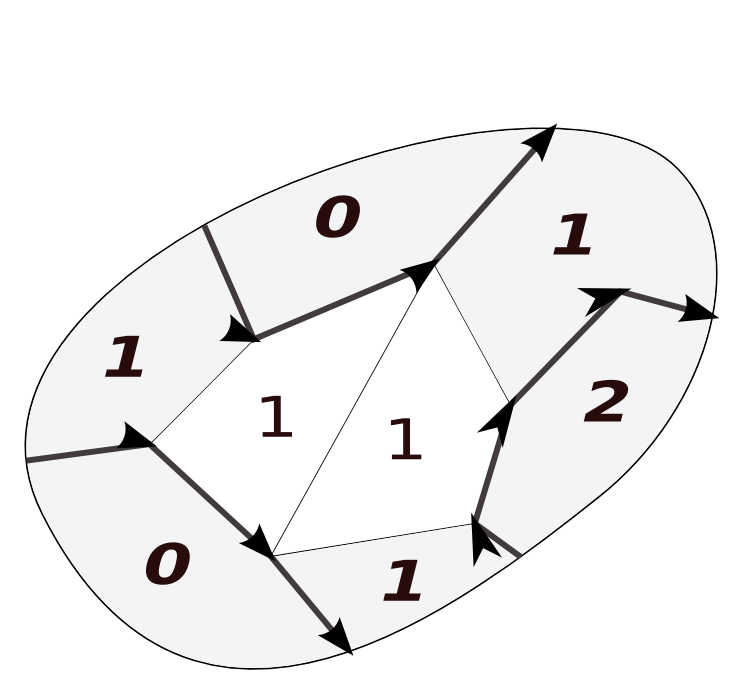}
\caption{Decomposition cycles for Dimer covers from  \hyperref[Decomposition]{Figure \ref{Decomposition}} and correspondent height function.}
\end{figure}

One can think about boundary conditions for dimer covers in terms of height functions. Let us look at two dimer covers $D$, $D^{\prime}$ and
their
height function $h_{D,D^{\prime}}$. A \textit{boundary height function} of $D$ and $D^{\prime}$ is a restriction of the height function to boundary faces. Fix dimer cover $D^{\prime}$ and look at different $D$. It is clear that one can reconstruct boundary conditions of $D$ using boundary height function and $\delta D^{\prime}$. Simply because the height function changes across edges where $D$ and $D^{\prime}$ are different from each other. So it is sufficient to change a boundary condition for $D^{\prime}$ along every edge where the height function changes.

Also note that for any three dimer configurations $D$, $D^{\prime}$ and $D^{\prime \prime}$ on $\GT$, the following cocycle equality holds:

\bb
h_{D,D^{\prime}} + h_{D^{\prime}
,D^{\prime\prime}} = h_{D,D^{\prime \prime}} .
\ee

Later on we will work with height functions $h_{D,D^{\prime}}$ for a fixed reference dimer configuration $D^{\prime}$. For regular graphs, there is an alternative definition of height functions called \textit{absolute height functions} which we review in \hyperref[abs]{the appendix}.

\subsubsection{Case of a torus}

In case of a torus height function it can be defined as a function only locally.because of monodromy along not simply-connected cycles\footnote{For the case of not simply-connected region the height function is not a function, but rather a <<section of $\ZZ$ bundle>>. It defines a function locally on every simply-connected component. So it is uniquely defined up to an action of $\pi_1(S)$, which adds for each loop monodromy along it.}.
Note that in this case there
are
no boundary vertices. Look at \hyperref[height_torus]{Figure 6} for an example.
\begin{figure}[h]
\begin{minipage}[h]{0.49\linewidth}
\center{\includegraphics[width=1\linewidth]{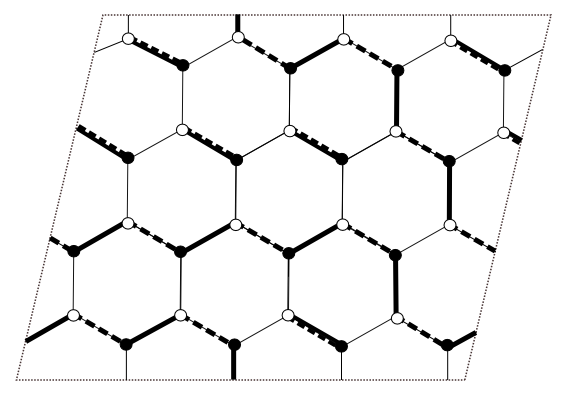} \\ }
\end{minipage}
\hfill
\begin{minipage}[h]{0.49\linewidth}
\center{\includegraphics[width=1\linewidth]{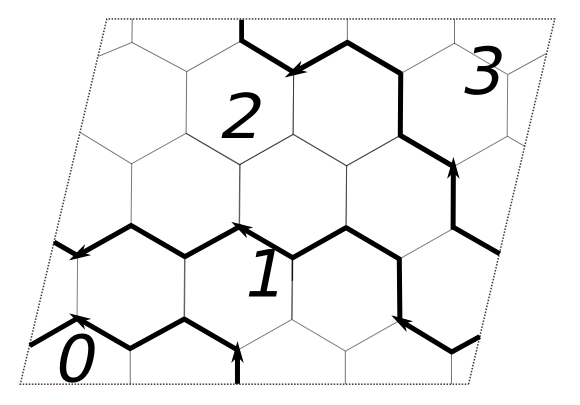} \\ }
\end{minipage}
\label{height_torus}
\caption{An example of a height function for two dimer covers on a torus. Note that is has monodromy $2$ along vertical cycles and $1$ along horizontal ones.}
\end{figure}

\subsection{Newton polygon}

Because $[D]-[D^{\prime}]$ is 1-cycle, it defines a homology class. Due to $H_1(\mathbb{T}^2,\ZZ) \simeq \ZZ^2 $ the homology class can
be
identified with a pair of integers $(s_D,t_D)$ after the choice of a basis $(\gamma_1,\gamma_2)$ in $H_1(\mathbb{T}^2,\ZZ)$.

Each $s_D$ and $t_D$ are intersection numbers of $[D]-[D]^{\prime}$ with $\gamma_1$ and $\gamma_2$, respectively.
In other words it is a monodromy of a
corresponding
height function along the cycles.

We will call this pair of integers $(s_D,t_D)$ a \textit{slope} of dimer cover $D$. Note that it depends on $D^{\prime}$ which is fixed.

In case of a graph $\GT$ embedded into a torus, a set of slopes of the all dimer covers is denoted by $\mathcal{S}(\GT)$,

\bb
\mathcal{S}(\Gamma):=\{ (s_D,t_D) \in \ZZ^2, D\in \mathfrak{D}(\GT) \}
\ee
It is a finite set of points in $\ZZ^2$ and it is uniquely defined up to a change of $D^{\prime}$ that shifts the set of slopes.

\noindent The Newton polygon for the $\GT$ is a convex
hull of the set of slopes,

\bb
N_{\Gamma}:=Conv(\mathcal{S}(\GT))
\ee

For example, see \hyperref[fig:Newton]{Figure \ref{fig:Newton} } for the case of square grid.

Note that $N_G$ is defined up to a linear shift, so if $0 \notin N_G$ we can make a proper linear shift to fix it.

\begin{figure}[h]
\begin{minipage}[h]{0.59\linewidth}
\center{\includegraphics[width=0.4\linewidth]{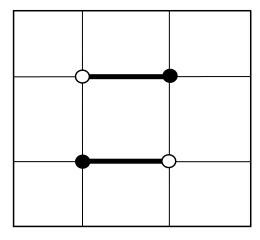} \\ }
\end{minipage}
\hfill
\begin{minipage}[h]{0.40\linewidth}
\center{\includegraphics[width=0.8\linewidth]{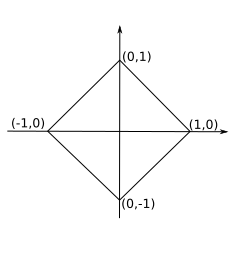} \\ }
\end{minipage}
\caption{An example of the Newton polygon for square grid on the right and
corresponding
$D^{\prime}$ on the left}
\label{fig:Newton}
\end{figure}

\section{Boltzmann distribution on dimer covers}

\label{2}
Let $\GG$ be a graph with a boundary.
A \textit{weight system} on $\GG$ is a map from the set of dimer covers $\mathfrak{D}(\GT)$ to positive numbers $\RR_{>0}$.
A weight system $w$ defines a Boltzmann distribution (also known as a Gibbs measure) on the set of dimer covers.
For $D \in \mathfrak{D}(\GT)$ let us define its probability by \hyperref[Gibbs]{ (\ref{Gibbs})},
\bb
\mathbb{P}(D):=\frac{w(D)}{Z(w;\GT)}
\label{Gibbs}
\ee
where $D \in \fD(\Gamma)$ and $Z(w;\GT)$ is a normalization constant called the partition function:
\bb
Z(w;\GT)=\sum_{D \in \mathfrak{D}(\GT)}w(D).
\ee

We shall focus on a particular type of weight systems called \textit{edge weight systems}.
Let us assign to each edge $e$ of $\GT$ a positive real number $w(e)$ called the weight of the
edge $e$. The associated edge weight system on $\mathfrak{D}(\GT)$ is given by
\bb
w(D)=\prod_{e \in D}w(e)
\ee
where the product
goes
over all edges contained in $D$.

In statistical mechanics, these weights are called Boltzmann weights. Their physical meaning can be expressed by the following formula:
\bb
w(e) = \exp \left( \frac{-E(e)}{kT}\right),
\ee
where $E(e)$ is the energy of the dimer occupying the edge $e$, $T$ is the absolute temperature and $k$ is the Boltzmann constant.

\begin{remark}
Boltzmann distribution on dimer covers
induces
a probability distribution on the set of
corresponding
height functions. Probability of a height function $h_D$ is just probability of the dimer cover $D$.

We will denote by $\bar h$
corresponding
expectation value of height function and by $\bar h(v)$ its value at face $v$.

\end{remark}

\section{The cutting rule for the dimer model}
\label{3}
\subsection{Graphs with boundary conditions}
Let $\GG$ be a graph with a boundary and let us fix a boundary height function $\chi$. Then we will call such a pair a \textit{graph with a boundary condition} and denote by $\GB$.

 Let $\fD \GB$ be a set of dimer covers on $\GT$ with a boundary height function $\chi$. Let us denote a set of height functions with the given boundary condition $\chi$ by $\mathcal{H}(\Gamma,\chi)$. Usually we will be interested in partition functions associated with $\GB$.

\bb
Z \GB:=\sum_{D \in \fD \GB}{w(D)}
\ee

It is straightforward that the partition function of $\GG$ is the sum of $Z(\GB)$ over all boundary height functions:

\bb
Z\GG =\sum_{\chi}{Z{\GB}}
\ee

\subsection{The Cutting Rule}

Let $\GG$ be a graph with a boundary, and let us fix an edge $a\in E(\GT)$. Let us denote by $(\GT_{a}, \pp\GT_{a})$ a graph with a
boundary obtained from $\GG$ as follows: cut the edge $a$ into two edges $a_1$ and $a_2$, and set $\pp \GT_{a}=\pp\GT \cup \{ v_1, v_2 \}$,
where $v_1$ and $v_2$ are the new
valence one
vertices adjacent to $a_1$ and
to $a_2$, respectively.
See on \hyperref[fig:cutting1]{Figure \ref{fig:cutting1}}. 

Note that there is a natural map $\phi_a : \fD\GG \mapsto \fD (\GT_{a},\pp \GT_{a}) $ that simply cuts in two the dimer of $D$ that belongs to the edge $a$.

\begin{figure}[h!]
\centering
\includegraphics[width=0.45\textwidth]{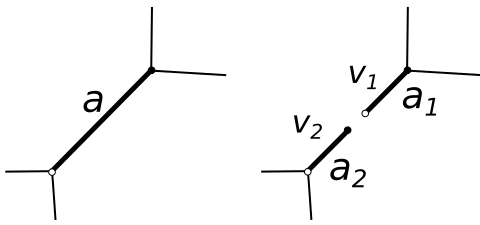}
\caption{An example of the cutting procedure}
\label{fig:cutting1}
\end{figure}

Then we want to find a <<pushforward>> $\phi_{*}$ of weight systems with respect to the cutting procedure described above. Thus we want to construct a weight system on $\fD(\GT_{a},\pp \GT_{a})$  so that it has
the
following property:
\bb
w(D)=\phi_{*}w(\phi(D))
\label{pushforward-weight}
\ee 
It means that all dimer covers that come from $\GT$ will have the same weight.

Then the partition function of $\GG$ can be identified with a part of the partition function of $(\GT_{a},\pp\GT_{a})$ that comes from $\GG$.

\bb
Z\GG=\sum_{D \in \fD \GG}{w(D)}=\sum_{D \in \fD \GG}{\phi_{*} w (\phi (D) )}=
\sum_{D^{\prime} \in \phi (\fD \GG)}{\phi_{*} w (D^{\prime})}
\label{Cutting_property}
\ee

There is a general way to construct $\phi_{*}w$:

For all edges that are not coming from cutting the edge $a$ weights remain as they are on $\GT$.

For the case of edges $a^{\prime}$ and $a^{\prime\prime}$, that come from cutting edge $a$, it is sufficient to take such weights that $w(a)=w(a_1)w(a_2)$. That way the condition $\ref{pushforward-weight}$ will be true.

General way to do it is to take $w(a^\prime)=\sqrt[]{w(a)}t$ and $w(a^{\prime \prime})=\sqrt[]{w(a)}t^{-1}$ for $t>0$. Later on we will use $t:=\sqrt[]{w(a)}$ to make $w(a^\prime)=w(a)$ and $w(a^{\prime\prime})=1$.

Note that due to locality of the cutting procedure cuts of different edges
commute
in a sense that we obtain the same graph with a weight system. This means that the cutting procedure can be extended for a family of edges $\mathcal{A}$. Just iterate the procedure for all edges $a\in \mathcal{A}$. The corresponding map $\phi$ will be just a composition of maps $\phi_{a}$ for edges $a \in \mathcal{A}$, the same for $\phi_{*}$ and the \hyperref[Cutting_property]{Cutting rule} 
 will still holds. 

\subsection{The cutting rule for surface graphs}
\subsubsection{The cutting of graphs with boundaries}

Let $\GB$ be a graph with a boundary condition embedded into surface $\mathcal{S}$ and $\rho$ be a simple curve in $\cS$ that is “in general
position” with respect to $\GT$, in the following sense:
\begin{itemize}
\item it is disjoint from the set of vertices of $\GT$;
\item it intersects the edges of $\GT$ transversally;
\item its intersection with any given face of $\GT$ is connected.
\end{itemize}

Let $\cS_\rho$ be the surface with the boundary obtained by an open cutting $\cS$ along $\rho$.
Also let $\GT_\rho := \GT_{A(\rho)}$ be a graph with a boundary obtained by cutting $\GG$ along the set $A(\rho)$ of edges of $\GT$ that intersect $\rho$.

Obviously, $\GT_\rho$ is a surface graph with a boundary. We will say that it is obtained by the cutting $\GT$ along $\rho$.Let us denote by $\phi_\rho$ the map of the dimer covers and by $w_\rho$ the weight system on $\GT_\rho$ obtained from the weight system on $\GT$.

\begin{figure}[h!]
\begin{minipage}[h!]{0.40\linewidth}
\center{\includegraphics[width=1.1\linewidth]{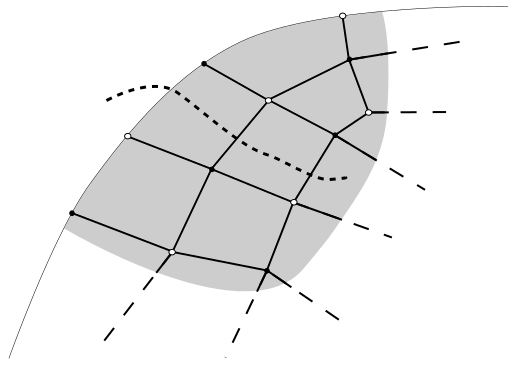} \\ }
\end{minipage}
\hfill
\begin{minipage}[h!]{0.40\linewidth}
\center{\includegraphics[width=1.1\linewidth]{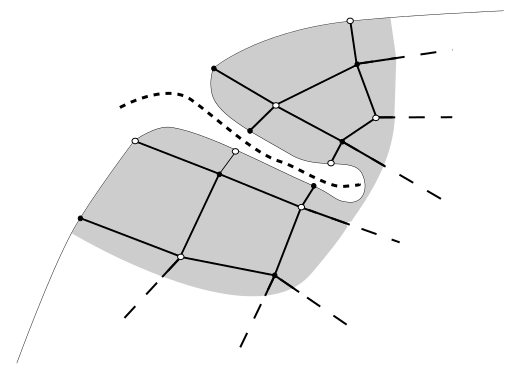} \\ }
\end{minipage}
\caption{An example of a cutting along a curve }
\end{figure}

\bb
Z\GB=\sum_{D \in \fD \GB}{w(D)}=\sum_{D^{\prime} \in \phi_\gamma (\fD \GB)}{w_\rho (D^{\prime})}
\ee

\subsubsection{Cutting graphs with boundary conditions}

The next step is to cut graphs with
boundaries, to do it we need to set a boundary height function along $\GT_\rho$. Basically we want to describe the set $\phi ( \fD \GB  )$ to rewrite the cutting rule in terms of height functions.

Suppose that we have a graph with a boundary condition $\GB$.
There are two types of faces on $\GT_\rho$: each face is either an \textit{old face} that comes from a face in $\GT$ or a \textit{new face} obtained from cutting a face of $\GT$ intersected by the curve $\rho$.

For old faces we can leave old values of $\chi$.
For new faces one may take boundary conditions that are obtained by cutting a dimer cover on $\GT$ which we will parametrize by a boundary height function $\chi_\rho$.

In terms of boundary height function it means the following:
There are faces that intersect with $\rho$, we denote the set of such faces by $\mathfrak{F}(\rho)$. There is a natural map $\psi$ from $\mathfrak{F}(\rho)$ to pairs of boundary faces on $\GT_{\rho}$ that are obtained by cutting faces from $\mathfrak{F}(\rho)$. Then we have one boundary value for a face $f \in \mathfrak{F}(\rho)$ and a pair of values for $\psi(f)=(f_1,f_2)$.
The condition that $\chi_\rho$ is obtained by cutting along $\rho$ means that $\chi(f)=\chi_\rho(f_1)= \chi_\rho(f_2).$ \footnote{In case $\rho$ cuts the graph into two graphs the <<equality>> means that the boundary height functions agree up to an additive  constant.} See \hyperref[Cutting_boundary]{Figure \ref{Cutting_boundary1}}

\begin{figure}[h!]
\begin{minipage}[h!]{0.45\linewidth}
\center{\includegraphics[width=1\linewidth]{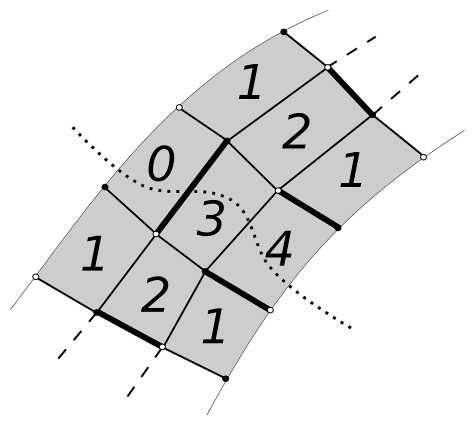} \\ }
\end{minipage}
\hfill
\begin{minipage}[h!]{0.45\linewidth}
\center{\includegraphics[width=1\linewidth]{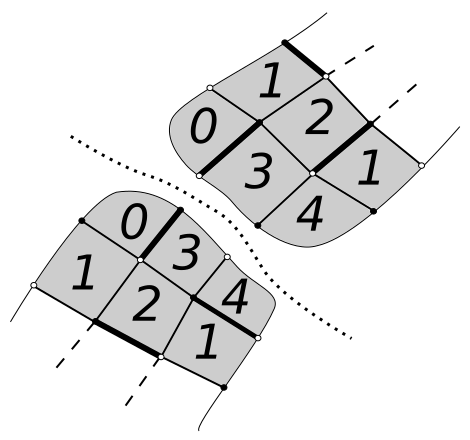} \\ }
\end{minipage}
\caption{An example of a cutting of boundary condition.}
\label{Cutting_boundary1}
\end{figure}

\begin{proposition}[The Cutting rule]
Suppose we cut a graph with a boundary condition $\GB$ into two graphs $\GT^{\prime}$ and $\GT^{\prime \prime}$.Then we can express the partition function of $\GB$ in terms of $\GT^{\prime}$ and $\GT^{\prime \prime}$.
\bb
Z\GB=\sum_{\chi_\rho}{Z(\GT^{\prime},\chi^{\prime}_\rho)Z(\GT^{\prime \prime},\chi^{\prime \prime}_\rho)}
\label{Cutting_boundary}
\ee
Here $\chi_\rho$ means a pair of the boundary height functions $(\chi^{\prime}_\rho,\chi^{\prime \prime}_\rho)$ that are obtained from $\chi$ by cutting along $\rho$. 
\end{proposition}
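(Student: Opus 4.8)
The plan is to deduce the statement from the single-edge cutting rule \eqref{Cutting_property}, applied simultaneously to the whole set $A(\rho)$ of edges crossed by $\rho$, together with the elementary fact that the partition function of a disjoint union of graphs with boundary is the product of the partition functions of the pieces. So the proof splits into an ``unfolding'' step, which rewrites $Z\GB$ as $\sum_{\chi_\rho}Z(\GT_\rho,\chi_\rho)$, and a ``factorization'' step, which splits each $Z(\GT_\rho,\chi_\rho)$ into $Z(\GT',\chi'_\rho)Z(\GT'',\chi''_\rho)$. Essentially all the content lies in correctly identifying the image $\phi_\rho(\fD\GB)$ inside $\fD(\GT_\rho,\pp\GT_\rho)$ in terms of boundary height functions.

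For the unfolding step I would first recall that iterating the construction of $\phi_*$ over all $a\in A(\rho)$ produces the map $\phi_\rho\colon\fD\GB\to\fD(\GT_\rho,\pp\GT_\rho)$ and a pushforward weight system $w_\rho$ with $w(D)=w_\rho(\phi_\rho(D))$, by \eqref{pushforward-weight}. The map $\phi_\rho$ is injective: from $\phi_\rho(D)$ one recovers $D$ by regluing each pair of half-edges $a_1,a_2$, since $a\in D$ exactly when $a_1$ and $a_2$ are simultaneously occupied. Hence
\[
Z\GB=\sum_{D\in\fD\GB}w(D)=\sum_{D'\in\phi_\rho(\fD\GB)}w_\rho(D').
\]
Next one characterizes $\phi_\rho(\fD\GB)$. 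For a crossed edge $a$ the two new boundary vertices $v_1,v_2$ are either both matched (when $a$ carried a dimer of $D$) or both unmatched; passing to the boundary faces of $\GT_\rho$, this says precisely that the boundary height function of $\phi_\rho(D)$ takes equal values on the two copies $f_1,f_2=\psi(f)$ of each cut face $f\in\mathfrak F(\rho)$, while on old faces it is forced to equal $\chi$. Conversely, any dimer cover $D'$ of $\GT_\rho$ whose boundary height function has this form is $\phi_\rho$ of the cover obtained by regluing. Therefore $\phi_\rho(\fD\GB)=\bigsqcup_{\chi_\rho}\fD(\GT_\rho,\chi_\rho)$, the union over all boundary height functions $\chi_\rho$ on $\GT_\rho$ obtained from $\chi$ by cutting along $\rho$; grouping the last display by $\chi_\rho$ gives $Z\GB=\sum_{\chi_\rho}Z(\GT_\rho,\chi_\rho)$.

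For the factorization step I would use that $\rho$ separates $\cS$, so $\GT_\rho$ is the disjoint union $\GT'\sqcup\GT''$ of surface graphs with boundary and each admissible $\chi_\rho$ is literally a pair $(\chi'_\rho,\chi''_\rho)$ of boundary height functions on the two pieces (the old boundary faces being distributed between $\GT'$ and $\GT''$, the new ones appearing on both). A dimer cover of $\GT'\sqcup\GT''$ is a pair of dimer covers, one on each component, and for an edge weight system $w_\rho(D')=\prod_{e\in D'}w_\rho(e)$ factors over the two components, so $Z(\GT_\rho,\chi_\rho)=Z(\GT',\chi'_\rho)\,Z(\GT'',\chi''_\rho)$; substituting this into $Z\GB=\sum_{\chi_\rho}Z(\GT_\rho,\chi_\rho)$ yields \eqref{Cutting_boundary}. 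I expect the main obstacle to be the middle step: proving rigorously that the ``both-or-neither'' occupancy condition on each pair of new boundary vertices is equivalent to the height-matching condition $\chi_\rho(f_1)=\chi_\rho(f_2)$, and keeping track of the additive-constant ambiguity of the height function on each connected component — which is exactly the freedom that lets the two boundary height functions $\chi'_\rho,\chi''_\rho$ be required to agree only up to a constant in the disconnecting case.
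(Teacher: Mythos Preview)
Your proposal is correct and follows the same route as the paper: the paper does not give a formal proof of the proposition but only the remark immediately following it, which is exactly your two-step strategy --- group dimer covers by their height function along $\rho$ (your ``unfolding'' step) and then observe that, once this is fixed, the covers on the two pieces are independent so the partition function factors (your ``factorization'' step). Your write-up is more careful than the paper's remark, in particular in making explicit the injectivity of $\phi_\rho$, the identification of its image via the both-or-neither occupancy condition, and the additive-constant caveat, but the underlying argument is the same.
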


\begin{remark}
The cutting rule has the following combinatorial explanation: 
We need to sum up over all dimer covers. One way to do it is to calculate partition functions $Z(h(\rho))$ with given height function $h(\rho)$ along $\rho$. And then to sum up $Z(h(\rho))$ over all $h(\rho)$. The result is the same because we just permute terms in a finite sum.

Then, we can interpret each $Z(h(\rho))$ as a product of two partition functions because $\rho$ cuts $\GT$ into two graphs where dimer covers are independent. Thus until we fix the boundary condition along $\rho$ to be $h(\rho)$, we are calculating the original sum $Z(h(\rho))$.
\end{remark}

\section{Periodic graphs}
\label{4}
\subsection{Periodic graphs}

In this section we follow 2nd and 3rd paragraphs of \cite{KOS} .

Let $G$ be a $\ZZ^2$-periodic planar graph. By this we mean that $G$ is a bipartite graph embedded
into
the plane $\mathbb{R}^2$ (we denote the embedding by $\phi$)
so that translations in $\ZZ^2$ act by color-preserving isomorphisms of $\phi(G)$ – isomorphisms which preserve bipartite structure: map black vertices to black vertices and white to white.

Let $G(n)$ be the quotient of $G$ by the action of $n\ZZ^2$. It is a finite bipartite graph on a torus. $G(1)$ is called a fundamental domain.
An example is $\ZZ^2$ itself with fundamental domain on \hyperref[fig:Newton]{Figure \ref{fig:Newton} }.

Later on we will need
finite subgraphs in $G$ and call them
planar periodic graphs. $G$ will be fixed in all following statements and theorems. Note that periodic planar graphs are also graphs with boundaries. We will denote Newton polygon corresponding to $G(1)$ by $N_G$.

We will fix a weight system on $G(1)$ and then continue it to a weight system on $G$ by periodicity.\footnote{Weight systems on $G(n)$ can be obtained by the same construction.}

\subsection{Thermodynamic limit of the dimer model on the torus.}

Let us look at $G(n)$, one can calculate its partition function $Z(G(n))$ in limit as $n \to \infty$.

Because actual partition function grows rapidly it is convenient to normalize it, 
\begin{equation} F:=\lim_{n \to \infty}  n^{-2}\log(Z(G(n))). \end{equation}
This limit exists and the answer is the following \cite{KOS}:

\begin{equation} F=\iint\limits_{ \vert z \vert = \vert w \vert = 1} \frac{1}{(2 \pi i^2)} \log P(z,w) \frac{\,dz\,dw}{zw},\end{equation}

Here $P(z,w)$ is a Laurent polynomial in $z$ and $w$, with Newton polygon  $N_{G}$. It depends only on the fundamental domain and the weight system on it.

For the case of square lattice with all weights equal 1 the polynomial is the following:

\begin{equation} P(z,w)=\frac{(1+z)^2}{z}+\frac{(1+w)^2}{w} \end{equation}

\subsection{Surface tension}
For fixed $(s, t) \in \mathbb{R}^2$ we denote by $\mathfrak{D}_{s,t} (G(n))$ the set of dimer covers on $G(n)$, that have a slope $(\lfloor{ns}\rfloor,\lfloor {nt}\rfloor)$. \footnote{$\mathfrak{D}_{s,t}(G(n))$ is not empty for $(s,t) \in N_G$}.

Consider the normalized partition function of dimer covers with the fixed slope $s,t$

\begin{equation}
Z_{s,t}(T_n) = n^{-2} \log \sum_{D \in \mathfrak{D}_{s,t}(G(n))} w(D).
\label{surface3}
\end{equation}
This partition function is called the surface tension. It has a limit as $n \to \infty$ and the answer is
\begin{equation}  \lim_{ n \to \infty }n^{-2} \log(  Z_{s,t}(G(n))) = \sigma(s,t) \end{equation}

Here $\sigma$ is minus Legendre transform of Ronkin function of $P(z,w)$,
\begin{equation}R(B_x,B_y):=\iint_{|z|=e^{B_x}, |w|=e^{B_y}} {\log P(z,w)}\frac{dzdw}{zw}\end{equation}

From the corollary 3.7 from \cite{KOS}, $\sigma$ is a concave function \footnote{In our notations of $\sigma$ we follow \cite{CKP}, so they differ from the notations in \cite{KOS} by the sign of $\sigma$.}.

\section{Thermodynamical limit of the dimer model on the plane}
\label{5}
Let $\OO$ be a compact connected simply-connected domain in $\RR^2$. In this section we will look at sequences of increasing graphs embedded into $\OO$ and formulate the limit shape theorem for it. 

In following paragraphs we assume that all functions are bounded real-valued functions on $\Omega$, we will denote a set of this functions on $\OO$ by $\mathcal{B}(\Omega)$. We recall that there is a standard norm on $\mathcal{B}(\Omega)$, $\norm{f}:=\max_{x \in \Omega}{|f(x)|}.$

Suppose that we have a graph with a boundary $\GG$ embedded into $\Omega$. Then  height functions  on $\GG$ can be treated as bounded functions on $\Omega$ assigning its value at point $x$ to be $h(f)$ if $x$
is
contained in one face $f$ and average of values $h(f)$ over all faces $f$ containing $x$.
\subsection{Asymptotic height functions}

Due to the Lipschitz condition
on
height functions one might expect that in a limit they will converge to continuous Lipschitz functions.

Let $Lip(\Omega)$ be the space of all Lipschitz functions on $\Omega$.
\bb
Lip(\Omega):=\{ f \in \mathcal{B}(\Omega) |  \exists C>0 : \  \forall x,y \in \Omega \\ \ |f(x)-f(y)|\leq C |x-y| \}.
\ee
For each function $f$ such a minimal constant $C$ from the definition is called the Lipschitz constant for the function $f$ and $f$ is called $C$-Lipschitz in this case. Lipschitz functions are continuous and
differentiable almost everywhere due to Rademacher's theorem.

Let us define a set of asymptotic height functions by
$$\mathscr{H} (\Omega):=\{ f \in Lip(\Omega) \ | \ \nabla f \in N_G \text{ almost everywhere} \}.$$
Later on we will prove \hyperref[density]{The Density Lemma}, that justifies this definition. 

Let us take an asymptotic height function $\chi$ that is fixed on $\pp \Omega$. We will call such a pair $(\Omega,\chi)$ \textit{a domain with a boundary condition} and we will look at asymptotic height functions that coincide with $\chi$ on $\pp \Omega$, let us denote the set of such asymptotic height functions by $\mathscr{H}(\Omega,\chi)$.

Note that $N_G$ is defined up to a linear shift, so if $0 \notin N_G$ we can make a proper linear shift to fix it.

\subsection{Approximations of domains}
Let $d_n: \RR^2 \to \RR^2 $ be a dilatation $d_n: (x,y) \mapsto (n^{-1} x,n^{-1} y).$ 
Let us look at rescaled embedding $\phi_{n}:=d_n \circ \phi$, and denote rescaled $G$ by $G_{n}:=\phi_{n}(G)$.
Also we need to rescale height functions on $\Omega_n$ by a factor $n^{-1}$, let us denote them by $\hn$ and its expectation values by
$\avhn$. We will call them 
\textit{normalized height functions}.

Let $(\Omega,\chi)$ be a domain with a boundary condition. Then we will call a sequence of graphs with boundary conditions $\{ \GT_{n},\chi_{n} \}$ \textit{an approximation} of $(\Omega,\chi)$ (where $\{\chi_n\}$
are
normalized boundary height functions) if

\begin{enumerate}
\item $\GT_{n} \subset G_n \cap \Omega$
\item each $\GT_{n}$ admits at least 1 dimer cover with normalized boundary height function $\chi_{n}.$
\item $\norm{\chi_n- \chi} \to 0$ as $n \to \infty$.
\item $\Gamma_{n}$ tends to $\OO$ with respect to Hausdorff distance\footnote{Here $d_H$ means Hausdorff distance, $d_{H}(X,Y)=\inf\{\epsilon \geq 0\,: \ X\subseteq Y_{\epsilon }\ \text{ and } \ Y\subseteq X_{\epsilon }\}$, where  $X_{\epsilon }$ is $\epsilon$-neighborhood of $X$.}, $d_H(\Omega,\Gamma_{n}) \to 0,$ as $n \to \infty$ \footnote{Actually we will assume that the sum of areas of faces of $\Gamma_{n}$ converges to the euclidean area of $\Omega$.}.
\end{enumerate}

\subsection{The Limit Shape Theorem}

\begin{theorem}[The limit shape theorem]
Let $(\Omega,\chi)$ be a domain with a boundary condition and $(\Omega_{n},\chi_{n})$ be an approximation of $(\Omega, \chi)$, then

$$\lim_{ n \to \infty } {n^{-2} \log Z \left( \Omega_{n},\chi_{n} \right) } = \iint_{\Omega}{\sigma (\nabla g) dx dy}$$
where $g$ is the maximizer of the functional  $\mathcal{F} (h) := \iint_{\Omega}{\sigma (\nabla h) dx dy}$ on the set $\mathcal{H}(\Omega,\chi)$.

Moreover, let $\hn$ be a random height function on $( \Omega_{n}, \chi_{n} )$. Then we have the convergence in probability for $\hn$, that is for each $c>0$

$$\mathbb{P} \left( \norm{\hn - g}>c\right) \to 0 
\text{ as } n \to \infty.$$
So all the height functions converge to the <<limit shape>> $g$ pointwise in probability.
\label{The_limit}
\end{theorem}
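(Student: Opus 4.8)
The plan is to follow the now-standard large-deviations strategy for variational problems of this type, as in \cite{CKP}, but carried through with the cutting rule as the main combinatorial tool in place of the explicit coupling-function estimates available only in the $\ZZ^2$ case. The argument splits into three self-contained pieces that are combined at the end. First, I would establish the \emph{upper bound}: for every $\OO$ and every boundary condition,
$$\limsup_{n\to\infty} n^{-2}\log Z(\Omega_n,\chi_n) \le \iint_\Omega \sigma(\nabla g)\,dx\,dy.$$
The idea is to subdivide $\Omega$ into small squares $Q_i$ of side $\epsilon$, apply the cutting rule along the grid of square boundaries to factor $Z(\Omega_n,\chi_n)$ into a product over the pieces times a sum over all admissible boundary height functions along the cut locus, then bound the number of such boundary height functions subexponentially in $n$ (this is where the Lipschitz property of height functions enters: the boundary data lives on roughly $\epsilon^{-1}$ curves of length $O(n)$, so the count is $e^{o(n^2)}$). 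On each square the partition function with slope close to the average gradient is controlled by $e^{n^2(\epsilon^2\sigma(\text{slope})+o(1))}$ via the Surface Tension theorem, and concavity of $\sigma$ together with Jensen lets one pass from the piecewise-constant gradient of the optimal discretization to $\iint_\Omega \sigma(\nabla g)$ as $\epsilon\to 0$.

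Second, the \emph{lower bound}:
$$\liminf_{n\to\infty} n^{-2}\log Z(\Omega_n,\chi_n) \ge \iint_\Omega \sigma(\nabla h)\,dx\,dy \quad\text{for every } h\in\mathscr H(\Omega,\chi).$$
Here I would use the Density Lemma to produce a sequence of normalized height functions $\hn^{(h)}$ converging to $h$, localize again by the cutting rule so that $Z(\Omega_n,\chi_n)$ dominates the single product term whose boundary data along the cut locus is the restriction of $\hn^{(h)}$, and then on each small square invoke the Surface Tension theorem from below — $Z_{s,t}(G(n))$ already realizes the right exponential rate for the slope $(s,t)=\nabla h|_{Q_i}$. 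Taking $h=g$ and combining with the upper bound gives the variational formula for the free energy.

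Third, the \emph{concentration} statement. By the Concentration lemma it suffices to show that the expected height function $\avhn$ converges to $g$. Suppose not; then along a subsequence $\avhn\to h^\ast\ne g$ in $\CB(\Omega)$ (compactness of the space of states, i.e. of $C$-Lipschitz functions with values in a bounded set, gives a convergent subsequence, and $h^\ast\in\mathscr H(\Omega,\chi)$ is automatic since the defining constraints are closed). A by-now routine strict-concavity argument shows that if $h^\ast$ is not the unique maximizer then the measure restricted to height functions staying within $c$ of $h^\ast$ has exponentially negligible total mass compared with $Z(\Omega_n,\chi_n)$ — one runs the upper-bound argument but with the extra constraint that on one of the squares the slope is bounded away from $\nabla g$, which costs a definite $\delta\epsilon^2 n^2$ in the exponent by strict concavity of $\sigma$ on the interior of $N_G$ — contradicting that $\avhn$ is the barycenter. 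Hence $\avhn\to g$, and the Concentration lemma upgrades this to convergence in probability of $\hn$ to $g$, which is the assertion.

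The main obstacle I anticipate is the first step in its quantitative form: controlling the cut-locus boundary sum in the upper bound \emph{uniformly} as the mesh $\epsilon\to 0$, since the number of squares grows while one simultaneously wants the per-square error $o(1)$ in the exponent to be uniform — this forces the Surface Tension convergence to be made effective (a rate, or at least equicontinuity in the slope and in the domain) rather than merely pointwise, and reconciling the Hausdorff/area approximation of $\Omega_n$ with the rigid lattice structure of $G_n$ near the square boundaries is the delicate bookkeeping. A secondary subtlety is verifying admissibility of the cut boundary data (each piece must still admit at least one dimer cover with the prescribed $\chi_\rho$), which is exactly what condition (2) in the definition of an approximation and the compatibility $\chi(f)=\chi_\rho(f_1)=\chi_\rho(f_2)$ in the cutting rule are designed to handle, but it must be checked at every scale.
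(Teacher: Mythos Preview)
Your outline is correct and follows the large-deviations template of \cite{CKP}: matching upper and lower exponential bounds for $Z$ via the cutting rule on a fine mesh, then concentration deduced from sharpness of the bounds plus uniqueness of the maximizer. The paper organizes things differently, in a sense inverting your order. It first establishes concentration of $\hn$ around its mean $\avhn$ by a martingale/Azuma argument (the Concentration Lemma) that makes no reference to $\sigma$ whatsoever; compactness of $\mathscr H(\Omega,\chi)$ then gives a subsequential limit $g$ of $\avhn$, and the concentration estimate localizes $Z(\Omega_n,\chi_n)$ to $Z(\Omega_n\mid\delta,g)$ up to $o(1)$ in the exponent. Only at that point is the cutting-rule machinery invoked, packaged as a separate Surface Tension Theorem computing $\lim n^{-2}\log Z(\Omega_n\mid\delta,g)=\mathcal F(g)$ for the single fixed $g$; that $g$ is the maximizer is read off at the end. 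Your route proves the free-energy formula and concentration in one stroke but must control the per-square surface-tension error uniformly as the mesh shrinks (exactly your anticipated obstacle); the paper's route is more modular, handles one $g$ at a time inside the Surface Tension Theorem, and its concentration input needs no concavity of $\sigma$ at all. One slip in your sketch: from ``the mass near $h^\ast$ is exponentially negligible'' you cannot directly contradict $\avhn\to h^\ast$, since a barycenter can sit at a point of low mass; the standard repair is to cover $\mathscr H(\Omega,\chi)\setminus B_c(g)$ by finitely many small balls, each of exponentially small mass by your upper bound, which gives $\mathbb P(\norm{\hn-g}>c)\to 0$ directly without passing through $\avhn$.
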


\section{The proof of the variational principle}
\label{6}
In this section we prove the \hyperref[The_limit]{The Limit Shape Theorem} under assumptions of \hyperref[Concentration_lemma]{the Concentration Lemma} and \hyperref[Functional]{the Surface Tension Theorem}. First we prove that all height functions converge to the limit shape, then we show that the partition function  is localized around the limit shape and finally notice that the limit shape is the maximizer of the surface tension functional.

\subsection{Convergence of height functions to the limit shape}
Let $(\Omega_n,\chi_n)$ be an approximation of a domain with a boundary condition $(\Omega, \chi)$.

Consider the sequence of average height functions, $\{ \avhn \}$. By the $\hyperref[density]{density \  lemma}$ there is a sequence of asymptotic height functions $\{g_n\}$, such that $\norm{g_n-\avhn}\leq \frac{C}{n}$. Due to compactness of $\mathcal{H}(\Omega,\chi)$ $\{g_n\}$ has a convergent subsequence, let us denote its limit by $g$. Without loss of generality we suppose that convergent subsequence is $\{g_n\}$ itself. Now we denote by $B_C(\avhn)$ balls of radius $C$ around $\avhn$.

From the $\hyperref[Concentration_lemma]{Concentration \ lemma}$ and its corollary we get that 

\bb
\mathbb{P}\left( \hn \notin B_{C}(\avhn)\right)\leq n^2\exp(-K n C^2/2)
\ee
for some real positive constant $K$.

Consider balls of the radius $\frac{C}{n^{1/3}}$ around $\{\avhn\}$, let us call them $B_{1/3}(\avhn)$. Note that $\{g_n\}$ lie in these balls due to $\frac{C}{n^{1/3}} \geq \frac{C}{n}$ and similar bound takes place. The difference between $B_C(\avhn)$ and $B_{1/3}(\avhn)$ is that the new sequence of balls becomes smaller as $n \to \infty$ and all sequences of height functions lying in the balls converge to the same limit as $g_n$ which is $g$.
\bb
\mathbb{P}\left( \hn \notin B_{1/3}(\avhn)\right)\leq n^2\exp(-K {n^{1/3}}C^2/2) \to 0 \text{ as } n \to \infty
\label{conc_lim}
\ee
for $K>0$ by the \hyperref[density]{Density \ lemma}.

So all height functions converge pointwise in probability to $g$.

\subsection{The localization of the partition function}
Let us recall an expression for probability that a random normalized height function lies in a set $\cS$, $\mathbb{P}(\hn \in \cS)=\frac{Z(\Omega_n |\cS)}{Z(\Omega_n,\chi_n)}$ where $Z(\Omega_n| \cS)$ is a partition function where we sum only dimer covers with normalized height functions lying in the set $\cS$. Let us take $\cS=B_\delta(g)$ for sufficiently small fixed $\delta>0$ that we sent to $0$ in the end of the proof.

\bb
1=\mathbb{P}(\hn \in B_\delta (g))+\mathbb{P}(\hn \notin B_\delta (g))
\ee

\bb
\lim_{n \to \infty}{\frac{Z(\Omega_n|\delta, g)}{Z(\Omega_n)}}=1-\lim_{n 
\to \infty}\mathbb{P}(\hn \notin B_\delta (g))
\label{division}
\ee

 Then ($\hyperref[conc_lim]{ \ref{conc_lim}}$) says that probability in the right hand side vanishes as $n \to \infty$. Taking logarithm of (\ref{division}) and multiplying by $n^{-2}$ we get that 
\bb
\lim_{n \to \infty} n^{-2}\log(Z(\Omega_n))=\lim_{n \to \infty} n^{-2}\log(Z(\Omega_n|\delta, g)).
\ee
One can express the right hand side by \hyperref[Functional]{the Surface Tension Theorem}, 

\bb
\lim_{n \to \infty} n^{-2}\log Z(\Omega_n)=\lim_{n \to \infty} n^{-2}\log Z(\Omega_n|\delta,g)=\mathcal{F}(g)
+o_\delta(1)
\label{converg}
\ee
where $o_\delta(1)\to 0$ as $\delta\to 0$, so after taking limit $\delta \to 0$ we get the second part of the proof.

\subsection{The Surface tension functional and the limit shape}

By proposition 2.4 from \cite{CKP} there exists the unique maximizer of the surface tension functional on $\mathcal{H}(\Omega,\chi)$. Then $g$ is the maximizer of the surface tension functional, because $\mathcal{F}(g)$ is proportional to $\lim_{n \to \infty} n^{-2}\log Z(g,\delta)$, but all the height functions concentrate around $g$ by the $\hyperref[conc_lim]{ (\ref{conc_lim})}$.

\section{The concentration lemma}
\label{7}

In this section we follow section <<6.2. Robustness.>> from \cite{CEP}.

\begin{claim}
\label{Concentration_lemma}
Let $(\Gamma, \partial \Gamma)$ be a graph with a boundary. Let us fix a boundary condition and take a face $v$ on $\Gamma$.

Then $h (v)$ is a random variable and the following estimate for an expectation value is true

Let us take $a>0$, then
 \bb
 \mathbb{P} \left( | h(v)-   \bar h(v)|> a \cdot \sqrt[]{m}\right) < 2 \exp(-a^2/2)
 \ee
where $m$ --- is the distance on $\Gamma^{*}$ from $v$ to the nearest boundary face, $m=dist(v, \partial \Gamma)$.
\end{claim}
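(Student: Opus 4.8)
The plan is to exhibit $h(v)$ as a martingale with bounded increments and then apply the Azuma--Hoeffding inequality. The natural filtration is obtained by exploring the dimer cover starting from the nearest boundary face $v_0$ (at distance $m$ from $v$ in $\Gamma^*$) and moving along a shortest dual path $v_0 = f_0, f_1, \dots, f_m = v$ toward $v$. Concretely, I would condition step by step on the values of the random height function $h$ at the faces $f_0, f_1, \dots, f_m$; equivalently, since $h$ changes across an edge exactly according to which of the two relevant dimer positions is occupied, conditioning on $h(f_0),\dots,h(f_k)$ is the same as conditioning on the dimer configuration along the edges crossed by the path up to step $k$. Set $X_k := \mathbb{E}\big[h(v) \mid h(f_0),\dots,h(f_k)\big]$. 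Then $X_0 = \mathbb{E}[h(v) \mid h(f_0)] = \bar h(v)$ after normalizing $h(f_0)=\chi(f_0)$ to its (deterministic) boundary value, and $X_m = h(v)$.

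The sequence $(X_k)_{k=0}^m$ is a Doob martingale by construction. The key estimate is that the increments are bounded: $|X_{k} - X_{k-1}| \le 1$ (or some fixed constant depending only on the local geometry of the fundamental domain, which for the normalized setup one can absorb into the definition of distance on $\Gamma^*$). This follows from the Lipschitz property of height functions already recorded in the excerpt: any two height functions for dimer covers agreeing on $f_0,\dots,f_{k-1}$ but differing in the dimer crossed at step $k$ differ at $f_k$ by $1$, and since the height function is $1$-Lipschitz along the dual graph, revealing one more crossing can shift the conditional expectation of $h(v)$ by at most the amount $h(f_k)$ itself can shift, i.e. by at most $1$. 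Making this robust argument precise — showing that the conditional law of $h(v)$ given the exploration through $f_k$ is a "translate" (in the appropriate stochastic-domination sense) of the law given the exploration through $f_{k-1}$, so that the two conditional expectations differ by at most the jump of $h$ at $f_k$ — is the heart of the matter and is exactly the "robustness" argument from section 6.2 of \cite{CEP}. With the bounded-difference bound in hand, Azuma's inequality gives
\bb
\mathbb{P}\big( |h(v) - \bar h(v)| > a\sqrt{m} \big) = \mathbb{P}\big( |X_m - X_0| > a\sqrt{m} \big) \le 2\exp\!\left( -\frac{a^2 m}{2 m}\right) = 2\exp(-a^2/2),
\ee
which is the claimed estimate.

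I expect the main obstacle to be the bounded-increment step: one must argue that conditioning on one additional face-value along the path can move the conditional expectation of $h(v)$ by no more than a constant, uniformly over the (random, configuration-dependent) choice of exploration path and over all boundary conditions. The cleanest route is a coupling: given two admissible partial explorations differing only at the last crossing, couple the two conditional measures on dimer covers of the unexplored region so that the coupled covers differ by a collection of alternating cycles, whence the two resulting height functions at $v$ differ by at most $1$; averaging the coupled pair gives $|X_k - X_{k-1}| \le 1$. A secondary technical point is the choice of the exploration path — it should be a fixed shortest dual path from $v$ to $\partial\Gamma$, chosen deterministically in advance, so that $m$ is genuinely $\mathrm{dist}(v,\partial\Gamma)$ and the martingale has exactly $m$ steps. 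The corollary alluded to in Section~\ref{6} (the union bound over the $\le n^2$ faces, giving the $n^2 \exp(-KnC^2/2)$ factor) then follows immediately by summing this tail bound over all faces $v$, with $m \gtrsim n$ for interior faces.
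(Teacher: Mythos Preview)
Your proposal is correct and matches the paper's proof essentially line for line: a Doob martingale along a fixed shortest dual path from the boundary to $v$, the bounded-increment estimate $|X_k-X_{k-1}|\le 1$, and Azuma's inequality. The only refinement is that the paper isolates the bounded-increment step as a separate monotone-coupling lemma (if boundary conditions satisfy $f\le g$ then $\bar h_f\le\bar h_g$ pointwise), which is cleaner than your ``alternating cycles'' phrasing and is exactly the stochastic-domination statement you allude to.
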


\begin{proof}

Let us recall that $\mathcal{H}(\Gamma,\chi)$ is a set of height functions on $(\Gamma,\partial \Gamma)$. Consider a probability space $(\mathcal{H}(\Gamma,\chi), \Omega, \mathbb{P})$, where $\Omega:=2^{ \mathcal{\mathcal{H}(\Gamma,\chi)}}$ is $\sigma$-algebra of subsets of $\mathcal{H}(\Gamma,\chi)$, and $\mathbb{P}$ is the Gibbs measure on $\mathcal{H}(\Gamma,\chi)$.

Let $\rho$ be the shortest path on $\Gamma$ from $\partial \Gamma$ to the face $v$ ,  $\rho:= (x_1, x_2 \ldots x_{m})$, where $x_1 \in \pp \Gamma$ and $x_{m}=v$.

We consider a family of equivalence relations on $\mathcal{H}(\Gamma,\chi)$ : $h_1 \sim_{k} h_2 $ if and only if, $h_1(x)=h_2(x)$ for all $x= x_1, \cdots, x_{k} $, so height functions agree at first $k$ points of path $\rho$. It is straightforward that from $h \sim_{k} h^{\prime}$ follows $h \sim_{k-1} h^{\prime}$, thus we may consider a decreasing filtration of $\sigma$-algebras, $\mathcal{F}_k$ in  $\Omega$:

\bb
\mathcal{F}_k:=\{ \text{Equivalence classes under} \sim_{k} \}
\ee

\bb
\mathcal{F}_{k-1} \subset \mathcal{F}_k
\ee

Because boundary conditions are fixed we have $\mathcal{F}_1=\mathcal{H}(\Gamma,\chi)$.
Let us take
the
following sequence of conditional expectation values:
\bb 
M_k:=\mathbb{E} (h(v) \mid \mathcal{F}_k ).
\ee 

$M_k$ is an expectation value of a height function at face $v$ with random values \footnote{Which obey the Gibbs measure.} at first $k-1$ \footnote{We recall that the value at the first face of the path $\rho$ is fixed by the boundary condition and random values start from the second face, thus $M_1$ is an expectation value of $h(v)$. In our parametrization of $\rho=\{x_1, \cdots, x_{k} \}$ where $x_1\in \pp\Gamma$} points of the path $\rho$ (values at
these
points
define
an equivalence class in $\mathcal{F})$. Note that $M_1= \bar h(v)$ because there are no random values \footnote{We do not change $\sigma$-algebra.} and $M_m=h(v)$ because we leave values totally random according to the Gibbs measure. Because $\mathcal{F}_k$ is a filtration, $M_k$ is a martingale:
\bb
\mathbf{E} ( M_{k+1} \mid \mathcal{F}_k )=M_k,
\ee
 which is due to the tower rule: if $\mathcal{G}_1$ and $\mathcal{G}_2 $ are two two $\sigma$-algebras such that $\mathcal{G}_1 \subset \mathcal{G}_2,$ we have
$\mathbb{E}[\mathbb{E}[X \mid \mathcal{G}_2]\mid \mathcal{G}_1] = \mathbb{E}[X \mid \mathcal{G}_1]$.

Also, let us show that $|M_k - M_{k+1}| \leq 1 $.

We need to take an equivalence class in $\mathcal{F}_k$ and then prove the inequality $|M_k - M_{k+1}| \leq 1 .$

A value of a height function from an equivalence class from 
$\mathcal{F}_k$ 
in $x_{k}$ is fixed by the equivalence class and a value in $x_{k+1}$ is a convex combination of the two possible values: $h(x_k)-1$ and $h(x_k)$ with \footnote{Two possible values could be $h(x_k)-1$ and $h(x_k)$ depending on an orientation.} coefficients proportional to the partition function of the dimer covers with this value at $x_{k+1}$.

Note that expectation values for the equal boundary conditions are equal.

On the other hand a value of a height function from
$\mathcal{F}_{k+1}$
at $x_{k+1}$ is fixed by its equivalence class and is either $h(x_k)-1$ or $h(x_k)$ for the same $h(x_k) \in \ZZ$. So $M_k$ and $M_{k+1}$ are expectation values of height functions with boundary conditions that differ less than $1$.

By \hyperref[auxiliary]{the coupling lemma} it is true for expectations.

Note that $M_k$ is a martingale such that $|M_k-M_{k+1}|\leq 1$. Applying Azuma inequality \cite{Azuma} for $M_k$, we get that

\bb
\mathbb{P} (|M_m-   M_1| > a \cdot \sqrt[]{m}) < 2\exp(-a^2 /2 )
\ee
that is the same as
\bb
\mathbb{P} (|h(v)-   \bar h(v)| > a \cdot \sqrt[]{m}) < 2\exp(-a^2/2 ).
\ee

\end{proof}

\begin{corollary}
Moreover if we take an approximation of a domain with a boundary condition $(\Omega,\chi)$ we get the following bound for normalized height function. Let $\hn$ be a random height function on $\Omega_n$ , then

\begin{enumerate}
\item 

\bb
\mathbb{P} (|\hn(v)-   \avhn(v)| > a) < 2\exp(- \frac{1}{2} a^2 n )
\label{ineq}
\ee

\item
\bb
\mathbb{P}(\hn \notin B_{\delta}(\avhn))<K n^2 \exp( -\frac{1}{2} \delta^2 n))
\ee \label{prob}
for some $K>0$ and sufficiently small $\delta$.
\end{enumerate}

\label{concentration}
\end{corollary}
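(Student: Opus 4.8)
The plan is to derive both estimates directly from \hyperref[Concentration_lemma]{the Claim above}, by (i) converting the combinatorial distance $m=\mathrm{dist}_{\Gamma_n^*}(v,\pp\Gamma_n)$ appearing there into a bound linear in $n$, and (ii) taking a union bound over all faces of $\Gamma_n$.

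For part (1): the normalized height function is $\hn=n^{-1}h$ and $\avhn=n^{-1}\bar h$, so the event $\{|\hn(v)-\avhn(v)|>a\}$ coincides with $\{|h(v)-\bar h(v)|>na\}$. I would apply the Claim with the parameter $a'$ chosen so that $a'\sqrt{m}=na$, i.e. $a'=na/\sqrt{m}$, which gives $\mathbb{P}(|\hn(v)-\avhn(v)|>a)<2\exp(-n^2a^2/(2m))$. The one geometric input needed is that $m=\mathrm{dist}_{\Gamma_n^*}(v,\pp\Gamma_n)\le Cn$ uniformly over faces $v$ and over $n$, for a constant $C$ depending only on $\Omega$ and on the fixed periodic graph $G$. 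This holds because $\Gamma_n\subset G_n\cap\Omega$, $\Omega$ has finite diameter, and in the rescaled graph $G_n$ two faces sharing an edge lie within Euclidean distance $O(1/n)$, so a dual path from $v$ to $\pp\Gamma_n$ has combinatorial length $O(n)$; hence $n^2/m\ge n/C$ and, after absorbing $C$ into the constant (or, equivalently, normalizing the scale of $\Omega$), one obtains the stated bound $2\exp(-\tfrac12 a^2n)$.

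For part (2): since $\hn-\avhn$ is piecewise constant on faces, and at a point lying on the common boundary of several faces its value is the average of the values on those faces, one has $\norm{\hn-\avhn}=\max_{x\in\Omega}|\hn(x)-\avhn(x)|=\max_{v}|\hn(v)-\avhn(v)|$, the maximum over faces $v$ of $\Gamma_n$. Therefore $\{\hn\notin B_\delta(\avhn)\}\subseteq\bigcup_{v}\{|\hn(v)-\avhn(v)|>\delta\}$, and applying part (1) with $a=\delta$ together with a union bound gives $\mathbb{P}(\hn\notin B_\delta(\avhn))\le(\#\text{faces of }\Gamma_n)\cdot 2\exp(-\tfrac12\delta^2n)$. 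Finally, $\#\{\text{faces of }\Gamma_n\}\le K'n^2$, again because $\Gamma_n\subset G_n\cap\Omega$, $\Omega$ is bounded, and $G_n$ has $O(n^2)$ faces per bounded region; taking $K=2K'$ finishes the claim. The bound in fact holds for every $\delta>0$; the hypothesis of $\delta$ being small is only needed further downstream.

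The probabilistic content is routine: it is an immediate specialization of the Claim followed by a union bound. The only place requiring genuine (if elementary) care is the uniformity of the two geometric estimates, $\mathrm{dist}_{\Gamma_n^*}(v,\pp\Gamma_n)=O(n)$ and $\#\{\text{faces of }\Gamma_n\}=O(n^2)$. Both rest on $G$ being a fixed $\ZZ^2$-periodic graph with a single bounded fundamental domain, and on the fourth defining property of an approximation (so that $\Gamma_n$ genuinely fills $\Omega$ and its faces do not degenerate in size). I expect this uniformity bookkeeping to be the main, though not deep, obstacle.
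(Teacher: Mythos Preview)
Your argument for part (1) is essentially the paper's: rescale the event by $n$, apply the Claim, and convert $m=\mathrm{dist}_{\Gamma_n^*}(v,\pp\Gamma_n)$ into an $O(n)$ bound using that $\Omega$ has bounded diameter and $G$ is periodic. The paper phrases the rescaling slightly differently (dividing the Claim inequality by $n$ and then renaming constants), but the content is identical.

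For part (2) you genuinely diverge from the paper. You argue via a union bound: $\{\hn\notin B_\delta(\avhn)\}=\bigcup_v\{|\hn(v)-\avhn(v)|>\delta\}$, then sum the pointwise bounds over the $O(n^2)$ faces. The paper instead writes $\mathbb{P}(\hn\in B_\delta(\avhn))$ as a \emph{product} $\prod_v\mathbb{P}(|\hn(v)-\avhn(v)|\le\delta)$, Taylor expands $\log(1-p)$, and then uses $1-x\le\exp(-x)$ to extract the complementary bound. Your route is shorter and avoids the independence-of-coordinates hypothesis that the product formula implicitly invokes (the events $\{|\hn(v)-\avhn(v)|\le\delta\}$ for different faces $v$ are certainly not independent). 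What the paper's approach would buy, were the product formula justified, is a slightly sharper estimate on $\mathbb{P}(\hn\in B_\delta)$ itself; but for the purposes of the Limit Shape Theorem only the complementary probability matters, and your union bound delivers exactly the required $Kn^2\exp(-\tfrac12\delta^2 n)$ with no loss. Your observation that the smallness of $\delta$ is not actually used here is also correct.
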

\begin{proof}

Dividing by $n$ inequality in the \hyperref[Concentration_lemma]{Concentration lemma} we get that
\bb
\mathbb{P} (|\hn(v)-   \avhn(v)| > n^{-1}a \cdot \sqrt[]{m}) < 2\exp(-a^2/2 )
\ee

After some rescaling we get the following expression from \ref{ineq},
\footnote{
We can bound $n^{-1/2} \sqrt[]{m}$ by $\sqrt[]{Diam(\Omega)+o(n)}$ because the length of a path is less then the maximal length of a path that is bounded by the diameter of the domain $\Omega$ (in Euclidean metric) and $n^{-1/2}$ is adsorbed due to rescaling of the domain (length of a path between fixed points in $\Omega$ is proportional to $n$ after rescaling). Then we define new constant $c:=a/2 \times n^{-1/2} \sqrt[]{m} \times Diam(\Omega)$
}
\bb
\mathbb{P} (|\hn(v)-   \avhn(v)| > c) < 2\exp(-  c^2 n ). \label{conc}\ee

Then to obtain probability that $\hn \notin B_\delta(g)$ we find $\mathcal{P}(  \hn \in B_\delta(g))$. To do it we need to multiply probabilities that $|\hn(v)-\avhn(v)| \leq \delta $ for all faces $v$ of graph $\Gamma$:

\bb
\mathbb{P}(\hn\in B_{\delta}(\avhn))=\prod_{v \ - \ face \  of\ \Gamma}\mathbb{P}(|\hn(v)-\avhn(v)|\leq \delta)
\ee
Then each factor is equal to $1-\mathbb{P}(|\hn(v)-\avhn(v)|> \delta)$, that we can estimate by using the equality $1-\mathbb{P}=\exp(\log (1-\mathbb{P}))$ and  Taylor expanding of $\log$.
Let us denote $\mathbb{P}(|\hn(v)-\avhn(v)|> \delta)$ by $p$.
\footnote{
$ \log(1-p)=-p+o(p^2)$, and the bound for the factor is $\exp(\log(1-p))=\exp(-p+o(p^2)).$
}
Then the bound
of
the whole product is $\exp((-p+o(p^2) )\times n^2 K)$ where we use the fact that in the domain 
there are
around $K n^2$ vertices for some $K>0.$

Finally we bound $-p$ by $-2\exp(- \delta^{2} n/2)$ using \ref{conc} and get the following 
expression for the probability that a random normalized height function lies in $\delta$-ball around the average normalized height function.
\bb
\mathbb{P}\left(\hn\in B_{\delta}(\avhn)\right)=\exp(-K n^2 \exp(-\delta^2 n/2))).
\ee
It is clear that it converges to one. After using the inequality $1-x\leq\exp(-x)$ and the small change of variables we get $\ref{prob}$.\footnote{Here we have the following change of variables  $\mathbb{P}\left(\hn\in B_{\delta}(\avhn)\right)=1-\mathbb{P}\left(\hn\notin B_{\delta}(\avhn)\right)$,so we can apply the inequality $\exp(-Kpn^2)\ge 1-Kpn^2=1-\mathbb{P}\left(\hn\notin B_{\delta}(\avhn)\right)$ }
\end{proof}

\subsection{Coupling lemma}
\begin{claim} \label{auxiliary}
Let $(\Gamma, \partial \Gamma)$ be a graph with a boundary and let $f$ and $g$ be two boundary height functions on $\Gamma$. We denote by $\bar h_g$ an average height function on $(\Gamma, g)$, $\bar h_f$ on $(\Gamma, f)$.
 We Suppose that $f \leq g$, then $\bar h_f \leq \bar h_g$ pointwise. 

\end{claim}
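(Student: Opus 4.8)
The plan is to prove the coupling lemma by an explicit monotone coupling of the two Gibbs measures, constructed edge by edge via the cutting rule. First I would observe that it suffices to prove the statement when $f$ and $g$ differ at a single boundary face, since any $f\le g$ can be connected to $g$ by a finite chain of boundary height functions $f=\chi_0\le\chi_1\le\cdots\le\chi_N=g$, each pair differing by one unit at one face; monotonicity then follows by transitivity. So reduce to the case $g=f+\mathbf 1_{f_0}$ for a single boundary face $f_0$, which in terms of boundary conditions $\delta D$ amounts to flipping whether a single boundary vertex adjacent to that face is matched.

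Next I would set up the coupling. The standard tool here is the fact that for bipartite dimers the height function has a natural lattice (FKG-type) structure: given two dimer covers $D,D'$ with $\delta D\le\delta D'$ in the appropriate sense, the superimposition $D-D'$ decomposes into loops and paths, and ``rerouting'' along these gives a pair $(D\vee D', D\wedge D')$ with the extreme boundary conditions. Concretely, I would use the double-dimer / decomposition-cycle picture from the earlier sections: the symmetric difference of $D$ (sampled from $(\Gamma,f)$) and $D'$ (sampled from $(\Gamma,g)$) consists of alternating cycles together with exactly one alternating path joining the two boundary faces where $f$ and $g$ disagree (this is where $\partial(D-D')=\delta D-\delta D'$ enters). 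Along cycles the two height functions can be made to agree; along the single path the height of $D'$ lies everywhere weakly above that of $D$. The weights are compatible because, by the cutting rule and the multiplicativity $w(D)=\prod_{e\in D}w(e)$, rerouting along a cycle is weight-preserving in the coupled measure — this is the standard Wilson/Kenyon style argument that the measures on cycles factor. I would phrase this as: the measure $\mathbb P_f\times\mathbb P_g$ conditioned on the set of decomposition cycles is itself a product over the connected components of $D-D'$, and on each component we may choose the coupling so that $h_{D'}\ge h_D$ holds on that component.

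Once the monotone coupling $(D,D')$ with $D\sim\mathbb P_f$, $D'\sim\mathbb P_g$, and $h_D\le h_{D'}$ pointwise almost surely is constructed, the conclusion is immediate: $\bar h_f(v)=\mathbb E[h_D(v)]\le\mathbb E[h_{D'}(v)]=\bar h_g(v)$ for every face $v$. I would also remark that this is exactly the ``monotonicity in boundary conditions'' property used in the Azuma-martingale step of the Concentration lemma, so no stronger quantitative statement is needed here.

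The main obstacle I expect is making the rerouting-along-cycles argument fully rigorous on a general surface graph with a boundary, rather than on $\mathbb Z^2$: one must check that the decomposition cycles really do split into closed loops plus a single path connecting the flipped boundary faces, that the weight of a configuration is unchanged under flipping a loop in the coupled model (so the coupling genuinely has the right marginals), and that ``flipping'' can be done consistently and independently across components. An alternative, perhaps cleaner, route that sidesteps the explicit coupling is to prove the FKG/monotonicity inequality directly from the cutting rule by induction on the number of edges: peel off one edge $a$ adjacent to the flipped boundary vertex, write $Z(\Gamma,\chi)=\sum_{\text{state of }a}Z(\Gamma\setminus a,\chi')$, and use the inductive hypothesis together with the log-concavity of the relevant partition-function ratios; the hard part there is the base case / the log-concavity input. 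Either way, the crux is the same positive-association phenomenon, and I would present the loop-rerouting coupling as the conceptually transparent proof while noting the inductive alternative.
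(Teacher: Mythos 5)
Your proposal takes a genuinely different route from the paper's own proof. The paper constructs the monotone coupling by induction on the number of internal faces: it fixes an internal face $w$ adjacent to a boundary face where $f<g$, decomposes each Gibbs measure into the (at most two) pieces obtained by fixing the height at $w$, applies the inductive hypothesis on the smaller graph with $w$ promoted to the boundary, and superposes the resulting couplings. Your main route is a direct coupling via the decomposition cycles of $D-D'$ and the $\vee/\wedge$ operations on heights, which is morally the Holley/FKG argument for distributive lattices. The precise fact underlying it, which your sketch states only as ``rerouting along a cycle is weight-preserving,'' is the identity $w(D)\,w(D')=w(D\vee D')\,w(D\wedge D')$: this holds because $D\vee D'$ and $D\wedge D'$ use the same multiset of edges as $D$ and $D'$, and edge weights multiply. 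From it the Holley inequality follows with equality, and Strassen's theorem produces the monotone coupling. As written, though, the per-component re-coupling does not close on its own: the map $(D,D')\mapsto(D\vee D',D\wedge D')$ has $2^{\#\text{loops}}$-element fibers and does not preserve either marginal, so one cannot simply ``choose the coupling on each component'' and expect the marginals to come out right --- this is exactly the gap that the Holley/Strassen framework is designed to fill, and you correctly flag it as the main obstacle. Once repaired, your route is valid and arguably more conceptual, but it imports a heavier black box (Holley, Strassen, and the distributive-lattice structure of dimer covers) than the paper's elementary induction, which needs none of that. Your inductive alternative via the cutting rule is closer in spirit to the paper's proof, but the paper conditions on the height value at an adjacent internal face rather than deleting an edge, and it requires no log-concavity input at all.
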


\begin{proof}

Let $\mathcal{H}(\Gamma,g)$ and $\mathcal{H}(\Gamma,f)$ be sets of height functions on $\Gamma$ with boundary height functions $f$ and $g$. Denote induced Gibbs measures by $\mu_f$, $\mu_g$.

Let us prove that $f \leq g$ $\Rightarrow$ $\bar h_f  \leq \bar h_g $.
To do this we need to build a \textit{coupling} of measures $\mu_f$ and $\mu_g$. 
It is a probability measure $\pi$ on $H(\Gamma,g) \times H(\Gamma,f)$, such that its projection on $H(\Gamma,g)$ gives $\mu_g$ and the same is true respectively  for $f$. The most important constraint is that $\mathbb{P}_{\pi}\left( h_1,h_2 \right) =1$ for only pairs of height functions such that $h_1 \leq h_2$. In case we have such a measure, we get $\bar h_f \leq \bar h_g$ pointwise, because $\bar h_f$ and $\bar h_g$ can be computed using $\pi$ due to the fact that each measure can be obtained as a projection of $\pi$.

Let us prove it using induction by number of internal vertices of $\Gamma$.
Base of induction where the set of internal vertices is empty is trivial.
Moreover the case of $f=g$ is trivial\footnote{in case of equal boundary conditions expectations are just the same.} and it is sufficient to prove for $f<g$ at some boundary face $v$. Note that $h_f(v)\leq h_g(v)+1$. Let us pick an internal face $w$ adjacent to $v$. There are two possible values for $h_g(w)$ - $h$ and $h-1$ for some $h$ ( respectively  $h^{\prime}$ and $h^{\prime}-1$ for $h_f(w)$ for some $h^{\prime}$).  
\bb
\mu_f=\alpha \mu_f^{+}+(1-\alpha) \mu_f^{-}
\ee
\bb
\mu_g=\beta \mu_g^{+}+(1-\beta) \mu_g^{-}
\ee
where $\alpha$ is proportional to the partition function of the dimer covers with the extra boundary condition $f^{+}$ and $\beta$ is proportional to the partition function with the extra boundary condition $f^{-}$ at $w$. And $\mu_f^{\pm}$ (respectively $g$) is the Gibbs measure on height functions on $(\Gamma,\partial \Gamma)$ with an extra boundary condition at $w$. And thus we can use our induction hypothesis to conclude that the coupling exists for any pair measures with an extra boundary condition at $w$.

Then let us consider the case of $\beta=1$. In this case we have two couplings $\pi^{+}$ and $\pi^{-}$ between $\mu_f^{+}, \mu_f^{-}$ and $\mu_g^{+}$. We can take the coupling to be a superposition: $\pi=\alpha\pi^{+}+(1-\alpha)\pi^{-}$.
\bb
\mu_f=\alpha \mu_f^{+}+(1-\alpha) \mu_f^{-}
\ee
\bb
\mu_g=\beta \mu_g^{+}
\ee

The case of arbitrary $\beta$ is just a consequence of the previous one. 

\bb
\mu_f
\ee
\bb
\mu_g=\beta \mu_g^{+}+(1-\beta) \mu_g^{-}
\ee
We know from the previous case that there exist couplings of $\mu_f$ with both $\mu_g^{+}$ and $\mu_g^{-}$. Then we can just take a superposition of
 $\mu_f$ with the $\mu_g^{+}$ and $\mu_g^{-}$ with the weights $\beta$ and $1-\beta$.

\end{proof}

\section{Proofs of properties of height functions}
\label{8}
Here we formulate and prove several propositions about height functions, all these statements are quite the same and the simplest case is $1$-Lipschitz functions where the structure is easier to understand and that we prove in the next section\footnote{1-Lipschitz functions are asymptotic height functions for the case of $N_G=\mathbb{D}^1$.}. 

\subsection{The main construction for $1$-Lipschitz functions}

Let $(\Omega,\chi)$ be a domain with a boundary condition. One can ask a question, under which constraints on $\chi$ it admits a continuation to a $1$-Lipschitz function on $\Omega$.
\begin{proposition}
Suppose that we have a domain with a boundary condition $(\Omega, \chi)$.
Then $\chi$ admits a continuation to a $1$-Lipschitz function if the following inequality holds,
\bb
|\chi(x)-\chi(y)|\leq|x-y|, \forall x,y \in \Omega.
\label{Lipschitz1}
\ee
\end{proposition}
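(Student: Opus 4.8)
The plan is to prove the contrapositive-free direction directly: given a boundary function $\chi$ on $\partial\Omega$ satisfying $|\chi(x)-\chi(y)|\le|x-y|$ for all $x,y\in\partial\Omega$, construct an explicit $1$-Lipschitz extension to all of $\Omega$. The natural candidate is the classical McShane--Whitney formula: set
\bb
g(z):=\inf_{y\in\partial\Omega}\bigl(\chi(y)+|z-y|\bigr),\qquad z\in\Omega.
\ee
First I would check that $g$ is well-defined (the infimum is finite because $\chi$ is bounded and $\Omega$ is compact) and that $g$ is $1$-Lipschitz on $\Omega$: for $z_1,z_2\in\Omega$ and any $y\in\partial\Omega$ we have $\chi(y)+|z_1-y|\le\chi(y)+|z_2-y|+|z_1-z_2|$, and taking the infimum over $y$ gives $g(z_1)\le g(z_2)+|z_1-z_2|$; symmetry yields $|g(z_1)-g(z_2)|\le|z_1-z_2|$.

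Next I would verify that $g$ actually restricts to $\chi$ on $\partial\Omega$. For $x\in\partial\Omega$, taking $y=x$ in the infimum gives $g(x)\le\chi(x)$. For the reverse inequality, the hypothesis \eqref{Lipschitz1} gives $\chi(x)\le\chi(y)+|x-y|$ for every $y\in\partial\Omega$, hence $\chi(x)\le\inf_{y}(\chi(y)+|x-y|)=g(x)$. Therefore $g(x)=\chi(x)$, so $g\in Lip(\Omega)$ with Lipschitz constant $1$ and $g|_{\partial\Omega}=\chi$, which is exactly the desired continuation.

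I expect the only genuine subtlety, rather than an obstacle, to be a matter of bookkeeping: one should make sure $|z-y|$ denotes Euclidean distance inside $\Omega$ and that this is consistent with the metric used in \eqref{Lipschitz1}; since $\Omega$ is convex-free but compact, connected and simply connected in $\RR^2$, and the statement is phrased with the ambient Euclidean distance $|x-y|$, there is no issue with geodesics versus chords. One could equally use the sup-version $g(z)=\sup_{y\in\partial\Omega}(\chi(y)-|z-y|)$; both give valid $1$-Lipschitz extensions, and they are respectively the pointwise-largest and pointwise-smallest such extensions, a remark worth recording since later sections deal with maximizers. No compactness beyond boundedness of $\chi$ is actually needed for the construction, though compactness of $\Omega$ guarantees $g$ is bounded, so $g\in\mathcal B(\Omega)$ as required.
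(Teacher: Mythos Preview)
Your proposal is correct and follows essentially the same approach as the paper: define the extension by $g(z)=\inf_{y\in\partial\Omega}(\chi(y)+|z-y|)$, check it is $1$-Lipschitz, and verify it agrees with $\chi$ on $\partial\Omega$ via the two-sided inequality using $y=x$ and the hypothesis \eqref{Lipschitz1}. Your remark that the sup-version gives the pointwise-smallest extension anticipates exactly what the paper records later for the more general support-function setting.
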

\begin{proof}
Due to the fact that the pointwise minimum of $1$-Lipschitz functions is a $1$-Lipschitz function, we can construct an extension by the following formula,
\bb
h(x):=\min_{y\in \ \pp\Omega}{\chi(x)+|x-y|}
\ee

We know that $h$ is a $1$-Lipschitz function and we need to show that $h$ coincides with $\chi$ on $\pp \Omega$.
To do this we can prove that $\chi(x) \leq h(x) \leq \chi(x),\forall x \ \in\pp\Omega $. 

The first inequality archives the equality for the function from the family correspondents to the point $y=x$.

The second inequality $\chi(x) \leq h(x)$ holds for all functions among which we take the minimum,
\bb
\chi(x)\leq \chi(y)+|x-y|
\ee
It is just the Lipschitz condition \ref{Lipschitz1}. 
\end{proof}
In following sections we repeat almost the same propositions thrice.

\subsection{Piecewise linear approximations of asymptotic height functions}
In this subsection we recall piecewise linear approximations of Lipschitz functions that we use in the proofs.

Let us take $\ell > 0$ and take a triangular lattice with equilateral triangles of side $\ell$. 
We map an asymptotic height function $h \in \mathscr{H}(\Omega)$ to a piecewise linear approximation, that is linear on every triangle, moreover it is the unique linear function that agrees with $h$ at vertices of the triangle. Let us  denote this approximation of $h$ by $\hat h$.
Then the following statements are true by the Lemma 2.2,from \cite{CKP},

\begin{claim} \label{analitic1} 
Let $h \in \mathscr{H}(\Omega)$ be asymptotic height function and let $\epsilon>0$. Then for sufficiently small $\ell>0$, on at least $1-\epsilon$ fraction of the triangles in the $\ell$-mesh of triangles that intersect $\Omega$ we have the following property: $\norm{h-\hat h}\leq \ell \epsilon$.
\end{claim}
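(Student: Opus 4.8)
The statement to prove is Claim~\ref{analitic1}: for an asymptotic height function $h$ and any $\epsilon > 0$, once $\ell$ is small enough, on at least a $(1-\epsilon)$-fraction of the $\ell$-mesh triangles intersecting $\Omega$ we have $\norm{h - \hat h} \leq \ell\epsilon$ (the norm restricted to the triangle). The plan is to reduce everything to a statement about where $\nabla h$ fails to be (nearly) constant at scale $\ell$, and to control that bad set by a measure-theoretic argument using Rademacher differentiability. Since $h \in Lip(\Omega)$, it is differentiable almost everywhere, and at a point $x_0$ of differentiability the affine function $L_{x_0}(x) := h(x_0) + \nabla h(x_0)\cdot(x - x_0)$ approximates $h$ well: $|h(x) - L_{x_0}(x)| = o(|x - x_0|)$. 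On a triangle of side $\ell$ containing such a good point $x_0$, the interpolant $\hat h$ — being the affine function matching $h$ at the three vertices — is within $O(\ell)$-times-the-oscillation of $L_{x_0}$, hence $\norm{h - \hat h}_{\triangle} \le o(\ell) + (\text{error from } \hat h \text{ vs } L_{x_0})$, which we want to push below $\ell\epsilon$.

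First I would make precise the quantitative "good point" notion: for $\delta > 0$, call $x_0$ $\delta$-good at scale $\ell$ if $|h(x) - L_{x_0}(x)| \le \delta\ell$ for all $x$ with $|x - x_0| \le 2\ell$ (say). By Rademacher plus Egorov-type reasoning: the set of points where $h$ is differentiable has full measure, and the moduli $r \mapsto \sup_{|x-x_0|\le r}|h(x)-L_{x_0}(x)|/r$ tend to $0$ pointwise a.e., so by Egorov they tend to $0$ uniformly off a set of measure $< \epsilon\,|\Omega|/2$ (using the $C$-Lipschitz bound to handle the exceptional set's contribution). Hence for $\ell$ small enough, outside a set $E_\ell$ with $|E_\ell| < \epsilon\,|\Omega|/2$, every point is $\delta$-good at scale $\ell$ with $\delta := \epsilon/10$. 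Next I would translate this into a count of bad triangles: a triangle is "bad" only if it meets $E_\ell$; since each such triangle has area $\asymp \ell^2$ and they tile $\Omega$ (up to boundary triangles, whose total number is $O(\ell^{-1})$, hence a vanishing fraction as $\ell \to 0$), the number of bad triangles is at most $|E_\ell|/(\text{min triangle area}) + O(\ell^{-1}) \le \epsilon \cdot (\text{total number of triangles})$ for $\ell$ small.

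Then on a good triangle $T$ with a $\delta$-good vertex-region point $x_0$, the three vertices $p_1, p_2, p_3$ satisfy $|h(p_i) - L_{x_0}(p_i)| \le \delta\ell$, so the affine interpolant $\hat h$ and the affine function $L_{x_0}$ differ by at most $\delta\ell$ at all three vertices; since the difference of two affine functions is affine, it is bounded by $\delta\ell$ on the whole of $T$ (max of an affine function on a simplex is attained at a vertex). Therefore on $T$, $\norm{h - \hat h}_T \le \norm{h - L_{x_0}}_T + \norm{L_{x_0} - \hat h}_T \le \delta\ell + \delta\ell = 2\delta\ell = \ell\epsilon/5 \le \ell\epsilon$, as required. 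The main obstacle I anticipate is the careful bookkeeping at the boundary of $\Omega$ and in the Egorov step: one must ensure that the exceptional set of non-uniform-differentiability points genuinely has small \emph{measure} (so it touches only few triangles) rather than merely being small in some weaker sense, and one must check that boundary triangles — those only partially inside $\Omega$ — are a negligible fraction, which follows from $\pp\Omega$ having measure zero (it is the boundary of a compact domain) so its $\ell$-neighborhood has area $\to 0$. Everything else is the routine affine-interpolation estimate above, and indeed the paper cites \cite{CKP}, Lemma 2.2, for precisely this, so I would also remark that the statement is a direct transcription of that lemma to our setting where $\nabla h \in N_G$ a.e.\ plays no special role (only the Lipschitz property is used).
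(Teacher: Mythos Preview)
Your argument is correct and is essentially the standard Rademacher--Egorov proof of this approximation lemma; the paper itself does not give a proof but simply attributes the statement to Lemma~2.2 of \cite{CKP}, whose proof follows the same route you outline. One small wording issue: you write ``a triangle is `bad' only if it meets $E_\ell$'' and then bound the number of such triangles by $|E_\ell|/(\text{triangle area})$, but that area bound requires the stronger (and true) fact that a bad triangle is \emph{entirely contained} in $E_\ell$ (since it has no $\delta$-good point at all); with merely ``meets'' the counting would fail. This is only a phrasing slip --- your definition of ``good triangle'' (contains a $\delta$-good point) makes the containment automatic --- so the proof goes through once that sentence is tightened.
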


\subsection{The density lemma and the criteria of existence.}
Here we prove a sequence of auxiliary propositions that lead us to the Density Lemma and the Criteria of existence.

\subsubsection{Criteria for asymptotic height functions.}
Before the propositions we need to introduce the support function of $N_G$. It is a function $\theta: \RR^2 \to \RR^2$ defined by the following formula,

\bb
\theta(x):=\min\left(\lambda \geq 0 \ \ | \ \ \frac{x}{\lambda} \in N_G\right) .
\ee
It is useful to keep in mind the case of 1-Lipschitz functions, where $N_G$ is just a unit disk and $\theta(x)=|x|$.
The support function has a value $1$ on $\pp N_G$ and its slope lays on $\pp N_G$, so $\theta$ is an asymptotic height function. Its value at $x=(0,0)$ is $0$ and it is the maximal asymptotic height function among such asymptotic height functions that have this property in the sense that for any such asymptotic height function $\phi$ the following inequality holds, 
\bb
\phi(x)\leq \theta(x,) \forall x \in \RR^2.
\ee
Note that we can rewrite it for any asymptotic height function $\psi$ substituting $\phi(x)=\psi(x)-\psi(y)$

\bb
\psi(x)-\psi(y)\leq\theta(x,y), \forall x,y \in \RR^2
\ee

Let us call it the \textit{Support Lipschitz condition} \footnote{which becomes usual $1$-Lipschitz condition for $N_G=\mathbb{D}^1)$}.

Suppose that we have a domain with a boundary condition, $(\Omega,\chi)$. One can ask a question, under which constraints on $\chi$ it has a continuation to an \textit{asymptotic height function} on $\Omega$.

\begin{proposition}
Let $(\Omega,\chi)$ be a domain with a boundary condition.
Then, $\chi$ admits a continuation to an asymptotic height function if it satisfies the following inequality,
\bb
\forall x,y \in \pp\Omega  \ |\chi(x)-\chi(y)|\leq \theta(x-y)
\label{Lipschitz}
\ee
\
\end{proposition}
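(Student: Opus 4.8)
The plan is to mimic the proof for the $1$-Lipschitz case, replacing the Euclidean distance $|x-y|$ by the support function $\theta(x-y)$ throughout. Concretely, I would define the candidate extension by the same ``infimal convolution'' formula
\bb
h(x):=\min_{y\in\pp\Omega}\bigl(\chi(y)+\theta(x-y)\bigr),
\ee
and argue that (i) $h$ is well-defined (the min is attained by compactness of $\pp\Omega$ and continuity of $\theta$), (ii) $h$ restricts to $\chi$ on $\pp\Omega$, and (iii) $h\in\mathscr H(\Omega)$, i.e.\ $h$ is Lipschitz and $\nabla h\in N_G$ almost everywhere.

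For (ii), the argument is identical to the $1$-Lipschitz case: for $x\in\pp\Omega$ the choice $y=x$ gives $h(x)\le\chi(x)+\theta(0)=\chi(x)$, while for every $y\in\pp\Omega$ hypothesis \eqref{Lipschitz} gives $\chi(y)+\theta(x-y)\ge\chi(x)$, hence $h(x)\ge\chi(x)$. For (iii) the key point is that $\theta$ itself satisfies the Support Lipschitz condition (it is subadditive, $\theta(a+b)\le\theta(a)+\theta(b)$, since $N_G$ is convex and $0\in N_G$), so each function $x\mapsto\chi(y)+\theta(x-y)$ is an asymptotic height function — its gradient lies in $\partial N_G\subset N_G$ wherever it exists. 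Then I would invoke the stability of $\mathscr H(\Omega)$ under pointwise minima: the pointwise infimum of a family of functions whose gradients lie in the convex set $N_G$ again has gradient in $N_G$ almost everywhere (this is the analogue of ``pointwise min of $1$-Lipschitz is $1$-Lipschitz'' used in the previous proposition, and follows because at a.e.\ point of differentiability of the min, the min locally agrees with one of the functions, or one uses that $\nabla h(x)$ is a convex combination of gradients of active functions). Finiteness of the relevant family can be arranged, or one simply notes $\pp\Omega$ is compact and $\theta$ continuous so the infimum is a minimum of a compact family.

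The main obstacle — really the only non-routine point — is verifying that the pointwise minimum of asymptotic height functions is again an asymptotic height function, i.e.\ that $N_G$-constrained gradients are preserved under $\min$. Unlike the $1$-Lipschitz case where this is a one-line classical fact, here one must use convexity of $N_G$ and the fact that $\theta$ is the support (gauge) function of $N_G$, which already appeared in the discussion above via the Support Lipschitz condition $\psi(x)-\psi(y)\le\theta(x-y)$. The cleanest route is: $h$ satisfies the Support Lipschitz condition because each building block does and the condition is preserved under $\min$ (if $\psi_\alpha(x)-\psi_\alpha(y)\le\theta(x-y)$ for all $\alpha$, then $\inf_\alpha\psi_\alpha(x)-\inf_\alpha\psi_\alpha(y)\le\sup_\alpha(\psi_\alpha(x)-\psi_\alpha(y))\le\theta(x-y)$ after the standard $\varepsilon$-argument), and conversely the Support Lipschitz condition for a Lipschitz function is equivalent to $\nabla h\in N_G$ a.e.\ (differentiate along a ray and use that $N_G=\{p:\langle p,u\rangle\le\theta(u)\ \forall u\}$). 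So I would first record this equivalence as a short lemma-within-the-proof, then everything else is a direct transcription of the $1$-Lipschitz argument.
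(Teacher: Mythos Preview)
Your proposal is correct and follows essentially the same approach as the paper: both define the extension by the infimal convolution $h(x)=\min_{y\in\pp\Omega}\bigl(\chi(y)+\theta(x-y)\bigr)$, verify boundary agreement via the two inequalities $\chi\le h\le\chi$ (the first from hypothesis \eqref{Lipschitz}, the second from $y=x$), and invoke closure of asymptotic height functions under pointwise minimum. Your write-up is in fact more detailed than the paper's, which simply asserts the closure-under-$\min$ step without the Support-Lipschitz justification you supply.
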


\begin{proof}
Let us take a family of asymptotic height functions, $\{\chi(y)+\theta(x-y) \}, x\in \ \Omega$ and $y \in \ \pp \Omega$. Then, let us take the pointwise minimum over this family.

\bb
h_{\max}(x):=\min_{y\in \pp\Omega}(\chi(y)+\theta(x-y))
\ee
$h_{\max}$ is an asymptotic height function because the pointwise minimum of asymptotic height functions is again an asymptotic height function. So we need to show that $h_{\max}$ agrees with $\chi$ on $\pp\OO$.

To do this let us prove that $\chi \leq h_{\max}\leq \chi$ on the boundary of the domain.
Let $x$ be an arbitrary point $x \in \pp\OO$, the first inequality 
\bb
\chi(x)\leq \chi(y)+\theta(x-y).
\ee

That is just (\ref{Lipschitz}).
The second inequality becomes an equality for the asymptotic height function from the family correspondents to the point $x=y$.
\end{proof}
Due to (\ref{Lipschitz}) $h_{\max}$ is the maximal extension of $\chi$ to an asymptotic height function $\OO$. The minimal extension, $h_{\min}$ has the similar form,
\bb
h_{\min}(x):=\max_{y\in \pp\Omega}(\chi(y)-\theta(x-y)).
\ee

\subsection{The criteria for height functions.}
Let $(\Gamma,\chi)$ be a graph with a boundary condition.
One can a question under which assumptions on $\chi$ it extends to a height function on $\Gamma$.

Before formulating the criteria we need to introduce the support height function that is a height function defined on the whole $G$.

Let $x \in G$ and let us look at all height functions that have the value $0$ at face $x$ and let us take the pointwise maximum among such height functions. Note that a height function of periodic dimer covers of $G$ are defined up to an additive constant that we can pick so that a value at $x$ is $0$.

\bb
\hat\theta(x,y):= \max \{ h(x) \ | \ h\textrm{ is a height function such that } h(y)=0 \}.
\ee

\begin{proposition}
Let $\GB$ be a graph with a boundary. Then $\chi$ admits an extension to a height function on $\Gamma$ if the following inequality holds,

\bb
\chi(x)-\chi(y) \leq \hat \theta(x,y)
\label{Height_Lipschitz}
\ee
\end{proposition}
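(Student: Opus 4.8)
The plan is to mimic exactly the proof of the previous proposition (the criterion for continuation to an asymptotic height function), replacing the continuous support function $\theta$ by the discrete support height function $\hat\theta$ and the domain $\Omega$ by the graph $\Gamma$ with its faces. The key structural facts I will use are: (i) $\hat\theta(x,y)$ is itself a height function in the variable $x$ for each fixed reference face $y$, by construction as a pointwise maximum of height functions all normalized to vanish at $y$; (ii) the pointwise minimum of two height functions on $\Gamma$ is again a height function (the discrete analogue of the statement used for $1$-Lipschitz and asymptotic height functions); and (iii) a ``triangle inequality'' / subadditivity for $\hat\theta$, namely $\hat\theta(x,y)\le \hat\theta(x,z)+\hat\theta(z,y)$, together with $\hat\theta(x,x)=0$, which follows from composing height functions via the cocycle identity $h_{D,D'}+h_{D',D''}=h_{D,D''}$ and the extremality in the definition of $\hat\theta$.

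First I would fix, for every boundary face $y\in\partial\Gamma$, the height function $x\mapsto \chi(y)+\hat\theta(x,y)$ on $\Gamma$; this is an honest height function since it differs from $\hat\theta(\cdot,y)$ by the additive constant $\chi(y)$. Then I would set
\bb
h_{\max}(x):=\min_{y\in\partial\Gamma}\bigl(\chi(y)+\hat\theta(x,y)\bigr),
\ee
and invoke (ii) (iterated over the finitely many boundary faces) to conclude $h_{\max}$ is a height function on $\Gamma$. It remains to check that $h_{\max}$ restricted to $\partial\Gamma$ equals $\chi$. For $x\in\partial\Gamma$, taking $y=x$ in the minimum and using $\hat\theta(x,x)=0$ gives $h_{\max}(x)\le\chi(x)$; conversely, for every $y\in\partial\Gamma$ the hypothesis $\chi(x)-\chi(y)\le\hat\theta(x,y)$ rearranges to $\chi(x)\le\chi(y)+\hat\theta(x,y)$, so $\chi(x)\le h_{\max}(x)$. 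Hence $h_{\max}\big|_{\partial\Gamma}=\chi$ and $h_{\max}$ is the desired extension; as in the continuous case it is in fact the maximal such extension, with the minimal one given by $h_{\min}(x):=\max_{y\in\partial\Gamma}\bigl(\chi(y)-\hat\theta(x,y)\bigr)$.

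The main obstacle is establishing the two discrete lemmas that this argument treats as black boxes: (a) that $\min(h_1,h_2)$ of two height functions on $\Gamma$ is again a height function — since height functions on $\Gamma$ are characterized by the ``Support Lipschitz'' increments $h(x)-h(y)\le\hat\theta(x,y)$ (the analogue of the discrete Lipschitz condition noted after the definition of height functions), one checks that this class of inequalities is stable under pointwise minimum, using that for any three faces $\min(h_1,h_2)(x)-\min(h_1,h_2)(y)$ is bounded above by $\max(h_1(x)-h_1(y),\,h_2(x)-h_2(y))\le\hat\theta(x,y)$, and that the resulting function still takes half-integer (suitably normalized) values realized by an actual dimer cover; (b) the subadditivity $\hat\theta(x,y)\le\hat\theta(x,z)+\hat\theta(z,y)$ and normalization $\hat\theta(x,x)=0$, which I would get from the definition by optimizing over the dimer covers realizing each term and gluing them via the cocycle identity. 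Once these are in place — and they are the genuinely model-specific inputs, the rest being the same soft argument used twice already in the paper — the proposition follows verbatim from the pointwise-minimum construction above.
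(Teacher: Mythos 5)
Your proof is essentially identical to the paper's: both define $h_{\max}(x):=\min_{y\in\partial\Gamma}\bigl(\chi(y)+\hat\theta(x,y)\bigr)$, invoke closure of the set of height functions under pointwise minimum, and verify $h_{\max}\big|_{\partial\Gamma}=\chi$ by taking $y=x$ for one inequality and the hypothesis $\chi(x)-\chi(y)\le\hat\theta(x,y)$ for the other, noting $h_{\min}$ by symmetry. The paper likewise leaves the min-closure fact and the normalization $\hat\theta(x,x)=0$ as unproved black boxes, so your additional discussion of these lemmas is commentary rather than a departure in method.
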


\begin{proof}
Let us take a family of height functions, $\{\chi(y)+\hat\theta(x,y) \} $ and let us take the pointwise minimum over this family.

\bb
h_{\max}(x):=\min_{y \in \pp\Gamma}(\chi(y)+\hat\theta(x,y))
\ee
$h_{\max}$ is a height function because the pointwise minimum of height functions is again a height function. So we need to show that $h_{\max}$ agrees with $\chi$ on $\pp\Gamma$.

To do it let us prove that $\chi \leq h_{\max}\leq \chi$ on the boundary faces.
Let $x$ be an arbitrary boundary face, $x \in \pp\OO$. It is sufficient to show that there exists a height function from the family such that it satisfies the first inequality for the continuation that is the following,

\bb
\chi(x)\leq \chi(y)+\hat\theta(x,y).
\ee
It is the statement that the height function $\chi(x)-\chi(y)$ that has a value $0$ at $y$ is less than the maximal height function with this property,
\bb
\chi(x)-\chi(y) \leq +\hat\theta(x,y).
\ee

The second inequality becomes an equality for the point $x=y$.

\end{proof}
Note that due to (\ref{Height_Lipschitz}) $h_{\max}$ is the maximal extension of $\chi$ to a height function $\Gamma$. The minimal extension of $\chi$ can be constructed by almost the same way as the maximal. Let us define the minimal extension $h_{min}$,
\bb
h_{\min}(x):=\max_{y \in \pp\Gamma}(\chi(y)-\hat\theta(x,y)).
\ee

\subsection{Convergence of the maximal extensions}
Notice that the strategy of proof of the criteria is almost the same. The key point is the use of the support functions.  To understand a convergence of maximal extensions we need to understand a convergence of the support functions.
Before the proposition we need to normalize the support height function,
\bb
\theta_n:=\frac{\hat\theta}{n}.
\ee

\begin{proposition}
Let us consider a support height function $\theta_n(,y)$ and the asymptotic height function $\theta(,y)$, where a face $y$ contains $(0,0)$\footnote{ So $\theta_n(,y)$ is defined at $(0,0)$ with the value $0$.}.
Then there is the following convergence,
\bb
\theta_n(,y) \to \theta(,y) \textsf{ as } n\to \infty,
\ee

\end{proposition}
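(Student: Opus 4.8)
The plan is to sandwich $\theta_n(\cdot,y)$ between an upper and a lower bound that both converge to $\theta(\cdot,y)$, exploiting that $\hat\theta(\cdot,y)$ is the \emph{maximal} height function on $G$ vanishing at $y$ while $\theta(\cdot,0)$ is the \emph{maximal} asymptotic height function vanishing at the origin. Throughout, convergence is meant in the locally uniform sense, with $\theta_n$ regarded as a function on $G_n=\phi_n(G)\subset\RR^2$.

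First I would record equicontinuity. Across any edge of $G$ a height function changes by a bounded amount $c_0=c_0(G(1))$; since $\hat\theta(\cdot,y)$ is a supremum of height functions, the same $c_0$-Lipschitz estimate (with respect to the dual-graph distance) passes to $\hat\theta(\cdot,y)$, because the maximiser at one face still nearly maximises at a neighbouring one. After rescaling by $d_n$ a single graph step has Euclidean length of order $n^{-1}$, so $\{\theta_n\}$ is equi-Lipschitz with a constant depending only on $G(1)$, and uniformly bounded on compacts since $\theta_n(y)=0$ with $0\in y$. By the Arzelà--Ascoli theorem the family is precompact in $C(K)$ for every compact $K$, so it is enough to prove the pointwise convergence $\theta_n(x)\to\theta(x,0)$ on a dense set of $x\in\RR^2$.

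For the upper bound $\limsup_n\theta_n(x)\le\theta(x,0)$ I would argue that every subsequential limit $g$ of $\{\theta_n\}$ is an asymptotic height function vanishing at the origin: it is Lipschitz as a locally uniform limit of equi-Lipschitz functions, it vanishes at $0$, and $\nabla g\in N_G$ almost everywhere because the coarse-grained slope of any height function on $G$ lies in $N_G$ — this is exactly the characterisation of the Newton polygon, cf.\ \cite{KOS}. The maximality of $\theta$ then forces $g\le\theta(\cdot,0)$, which is the claimed bound. (Equivalently, telescoping an arbitrary height function along an almost straight lattice path from $y$ to $x$ and bounding each block of fundamental domains by $\sup_{u\in N_G}\langle u,\cdot\rangle$ gives $h(x)-h(y)\le n\,\theta(x,0)+o(n)$ directly, hence $\theta_n(x)\le\theta(x,0)+o(1)$.)

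The lower bound $\liminf_n\theta_n(x)\ge\theta(x,0)$ is where the bulk of the work lies, and it is the step I expect to be the main obstacle. The idea is to produce, for each $\varepsilon>0$, one explicit height function on $G$ vanishing at $y$ whose value at the face near $nx$ is at least $n(\theta(x,0)-\varepsilon)$. Since $N_G$ is a lattice polygon, its rational points are dense, so I may pick $u=(a/k,b/k)\in N_G$ with $\langle u,x\rangle\ge\theta(x,0)-\varepsilon$; by the non-emptiness of $\mathfrak D_{a/k,\,b/k}(G(k))$ the integer slope $(a,b)$ is realised by a periodic dimer cover $D_u$ of $G(k)$, and the corresponding height function $h_{D_u,D'}$ on $G$, normalised to vanish at $y$, grows linearly at asymptotic rate $u$, so its value at the face near $nx$ is $n\langle u,x\rangle+O(1)$. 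Since $\hat\theta(\cdot,y)$ dominates every height function vanishing at $y$, this yields $\theta_n(x)\ge\langle u,x\rangle-o(1)\ge\theta(x,0)-\varepsilon-o(1)$, and letting $\varepsilon\to0$ completes the pointwise lower bound. The delicate points here are purely bookkeeping: the exact relation between the Newton polygons of $G(1)$ and $G(k)$ (so that $(a,b)$ really is an achievable slope), and the normalisation identifying the linear growth rate of a slope‑$(a,b)$ height function with $u$ through intersection numbers, monodromies and the embedding $\phi$. Combining the two bounds with the equicontinuity of the first step gives $\theta_n(\cdot,y)\to\theta(\cdot,y)$ locally uniformly.
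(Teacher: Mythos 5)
Your argument and the paper's share the same key step for the lower bound: bounding $\theta_n$ from below by normalized height functions of dimer covers whose slopes approach the extremal point of $\partial N_G$ in a given direction. The paper invokes Proposition 3.2 of \cite{KOS} to produce such a sequence, whereas you construct it explicitly from a rational slope $u\in N_G$ realized by a periodic dimer cover of $G(k)$ and lifted to $G$; these are the same idea, yours spelled out in more detail. The genuine difference is in completeness: the paper's proof only asserts that along each direction the slope of $\theta_n$ is at least that of $D_n$, and then concludes, leaving both the matching upper bound (why the slope of $\theta_n$ cannot \emph{overshoot} $\partial N_G$) and the uniformity over $x$ unstated. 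You supply exactly these missing pieces, via the observation that any subsequential limit of $\theta_n$ is itself an asymptotic height function vanishing at the origin and hence $\le\theta(\cdot,0)$ by maximality, plus an Arzel\`a--Ascoli precompactness argument that promotes pointwise convergence on a dense set to locally uniform convergence. So you take essentially the paper's route for the lower bound while making the upper bound and the mode of convergence explicit; the result is a more complete proof resting on the same core mechanism.
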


\begin{proof}
By definition, $\theta$ and $\theta_n$ have the same values at $0$. And to prove a convergence we need to understand that the slope of $\theta_n$ converges to the slope of $\theta$.

Let us take an arbitrary direction in $\RR^2$, the slope of $\theta$ along the direction defines a point on $\pp N_G$, denote the point $(s,t)$.

From Proposition 3.2 in \cite{KOS} we know that there exists a sequence of dimer covers of $G$ with the normalized slopes converge to $(s,t)$ as $n\to \infty$, denote the sequence $D_n$.

By the definition of $\theta_n$ its slope is not less than the slope of $D_n$. So the slope of $\theta_n$ converges to the slope $\theta$.
\end{proof}

Now we are ready to prove a convergence of the maximal extensions.

\begin{proposition}
Suppose that $(\OO,\chi)$ is a domain with a boundary condition, and $\Gn$ its approximation. Let $\eta_n^{\max}$ be the maximal extension of $\chi_n$.
Then 
\bb
\eta_n^{\max} \to h_{\max} \textbf{ as } n \to \infty
\ee
\end{proposition}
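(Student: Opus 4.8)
The plan is to combine the explicit infimum representations of the two extensions with the convergence $\theta_n(\cdot,y)\to\theta(\cdot,y)$ just established. Recall that
\[
h_{\max}(x)=\min_{y\in\pp\OO}\bigl(\chi(y)+\theta(x-y)\bigr),\qquad
\eta_n^{\max}(x)=\min_{y\in\pp\GT_n}\bigl(\chi_n(y)+\theta_n(x,y)\bigr),
\]
so the statement reduces to comparing two parametrized families of ``support'' functions whose index sets $\pp\OO$ and $\pp\GT_n$ are close in the Hausdorff metric. The first step is to refine the previous proposition to a uniform statement, jointly in both arguments: every $\theta_n(\cdot,y)$ obeys the Support Lipschitz condition with the same constant determined by $N_G$, and $\hat\theta$ is invariant under simultaneous $\ZZ^2$-translation of its two arguments (by periodicity of $G$), so the family $\{(x,y)\mapsto\theta_n(x,y)\}$ is equi-Lipschitz on $\OO\times\OO$ and, after the value-at-a-reference-face normalization, uniformly bounded there. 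Applying the previous proposition after a $\ZZ^2$-translation gives pointwise convergence for every $y$, and the Arzel\`a--Ascoli theorem then upgrades this to
\[
\sup_{x,y\in\OO}\bigl|\theta_n(x,y)-\theta(x-y)\bigr|\xrightarrow[n\to\infty]{}0 .
\]

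The second step is the two-sided comparison, carried out uniformly in $x$. For the upper bound on $\eta_n^{\max}(x)$, fix $x$, pick $y^{*}\in\pp\OO$ attaining the minimum for $h_{\max}(x)$, and choose $y_n\in\pp\GT_n$ with $|y_n-y^{*}|\le d_H(\OO,\GT_n)\to 0$; then
\[
\eta_n^{\max}(x)\le \chi_n(y_n)+\theta_n(x,y_n)\le \chi(y^{*})+\theta(x-y^{*})+\varepsilon_n = h_{\max}(x)+\varepsilon_n,
\]
where $\varepsilon_n$ collects $\norm{\chi_n-\chi}$, the Lipschitz oscillation of $\chi$ over the displacement $|y_n-y^{*}|$, the uniform error $\sup_{x,y}|\theta_n(x,y)-\theta(x-y)|$, and the Lipschitz oscillation of $\theta$ over $|y_n-y^{*}|$; each of these tends to $0$ independently of $x$. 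The reverse bound is symmetric: take a minimizer $y_n^{*}\in\pp\GT_n$ for $\eta_n^{\max}(x)$ and a point $y'\in\pp\OO$ within Hausdorff distance of it, and estimate $h_{\max}(x)\le\chi(y')+\theta(x-y')\le \eta_n^{\max}(x)+\varepsilon_n'$ in the same way. Combining the two bounds yields $\norm{\eta_n^{\max}-h_{\max}}\le\varepsilon_n+\varepsilon_n'\to 0$, where the norm is taken over $\OO$, the two functions being compared on a common domain via the convention of Section~\ref{5} (recall that the faces of $\GT_n$ fill out $\OO$ in the limit).

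The step I expect to be the \emph{main obstacle} is precisely the transfer of near-minimizers across the moving index sets $\pp\GT_n$: one must ensure that the chosen point of $\pp\GT_n$ near $y^{*}$ (resp.\ of $\pp\OO$ near $y_n^{*}$) exists and that all resulting error terms are bounded by quantities independent of $x$, which forces the use of the Lipschitz---not merely continuous---nature of both $\chi$ (it is the restriction of an asymptotic height function) and $\theta$, together with the ``areas converge'' strengthening of the Hausdorff hypothesis in the definition of an approximation. A minor additional point to verify is that $\theta_n(x,y)$ is well defined and Lipschitz for $x$ ranging over all of $\OO$ rather than only over faces of $G_n$, which follows by extending it through the face-average convention; once these technicalities are settled the argument above closes.
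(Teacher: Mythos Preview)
Your proposal is correct and follows essentially the same approach as the paper: both start from the explicit infimum formulas for $h_{\max}$ and $\eta_n^{\max}$ and deduce the convergence from $\chi_n\to\chi$ together with $\theta_n\to\theta$. The paper argues informally that the minimizer for $\chi_n+\theta_n$ must lie in the face containing the minimizer $p$ for $\chi+\theta$ (else the limit would be the wrong, larger value $\chi(z)+\theta(x-z)$), whereas you make the argument quantitative via an Arzel\`a--Ascoli upgrade to uniform convergence of $\theta_n$ and an explicit two-sided bound with near-minimizers transferred across $\pp\OO$ and $\pp\GT_n$; this is the same strategy, executed more carefully.
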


\begin{proof}
Let us look at expressions of the maximal extensions,
\bb
h_{\max}(x):=\min_{y\in \pp\Omega}(\chi(y)+\theta(x,y))
\eta_n^{\max}(x):=\min_{y \in \pp\Gamma}(\chi_n(y)+\theta_n(x,y)).
\ee
To prove the convergence we need to show that the pointwise minimum of $\chi_n(y)+\theta_n(x,y)$ converges to the pointwise minimum of $\chi(p)+\theta(x-p)$. Because $\Gn$ is an approximation, $\chi_n(y)$ converges to $\chi(y)$ and $\theta_n(x,y)$ converges to $\theta(x-y)$. So if $\chi(p)+\theta(x-p)$ archives its minimum at some point $p$, $\chi_n(y)+\theta_n(x,y)$ must archive its minimum at a face that contains $p$\footnote{On the other points of the face, a value of $\chi+\theta$ is $o(n^{-1})$-close to $\chi(p)+\theta(x-p)$ which is $o(n^{-1})$-close to $\chi_n(y)+\theta_n(x,y)$.}.  Otherwise it will converge to $\chi(z)+\theta(x,z)$ for some point $z$, that is bigger then a value of $\chi+\theta$ at point $p$.
\end{proof}

\subsection{Density lemma}
Now we are ready to prove the density lemma.
Before the formulation and its proof note that for a domain with a boundary condition $(\Omega, \chi)$ there is a tautological way to express an asymptotic height function $f$ in terms of its own values on $\Omega$.
\bb
\hat f(x):=\min_{y \in \ \Omega}(f(y)+\theta(x,y))
\label{support123}
\ee

Again we need to show that that $\hat f \leq f \leq \hat f$,
first inequality $f\leq f(y)+\theta(x,y)$ is just the consequence of the Support-Lipschitz condition. And the equality for the second inequality is archived at for the function correspondents to $y=x$.

\begin{theorem}[The Density lemma]
Let $( \Omega_{n}, \chi_{n} )$ be an approximation of $(\Omega, \chi)$.
Then for every $h \in \mathcal{H}(\Omega,\chi) $ there exists a sequence of normalized height functions $\eta_n$, such that $\norm{h-\eta_{n} }\leq \frac{C}{n}$, where $C$ is a positive constant. 

Vice versa, for every normalized height function $\eta_{n}$ on $(\Omega_{n},\chi_{n})$ there exists an asymptotic height function $h \in \mathcal{H}(\Omega,\chi)$ such that $\norm{h-\eta_{n}} \leq n^{-1}C$.
\label{density}
\end{theorem}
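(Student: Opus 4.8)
The plan is to follow the announced philosophy that ``height functions are discrete analogues of Lipschitz functions'': I will prove both directions by realising a discrete and a continuous height function through the support-function extension formulas established in the previous propositions, and interpolating between them. Write $\theta$ for the support function of $N_G$ and $\theta_n=\hat\theta/n$ for the normalised support height function of $G$; recall from the proposition above that $\theta_n\to\theta$, and in fact $\theta_n(x,y)=\theta(x-y)+O(1/n)$ uniformly on bounded regions, since a periodic height function cannot beat the support function by more than a bounded additive amount. Recall also that by $(\ref{support123})$ and the Criteria propositions, membership in $\mathscr H(\Omega,\chi)$ is governed by the Support-Lipschitz condition $h(x)-h(y)\le\theta(x-y)$ together with agreement with $\chi$ on $\pp\Omega$, and likewise the maximal and minimal extensions $h_{\max},h_{\min}$ of $\chi$ are honest elements of $\mathscr H(\Omega,\chi)$, with their discrete counterparts converging to them.

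I would treat the \emph{converse} direction first, as the easier half. Given a normalised height function $\eta_n$ on $(\Gamma_n,\chi_n)$, I would extend it to all of $\Omega$ by the tautological formula $h(x):=\min_y(\eta_n(y)+\theta_n(x,y))$, the minimum taken over faces $y$ of $\Gamma_n$ and $\theta_n$ extended to $\Omega$ by interpolation; on $\Gamma_n$ this returns $\eta_n$ exactly as in $(\ref{support123})$, using that $\theta_n$ obeys the triangle inequality, which follows directly from the definition of $\hat\theta$. Off $\Gamma_n$ the function $h$ is $\theta_n$-Lipschitz, hence $\theta$-Lipschitz up to an $O(1/n)$ deficit; squeezing $h$ between $h_{\min}$ and $h_{\max}$ built from the true $\theta$ (equivalently, clipping the gradient back into $N_G$ on the resulting exceptional set of measure $O(1/n)$) turns it into an honest element of $\mathscr H(\Omega,\chi)$ within $C/n$ of $\eta_n$. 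The boundary agreement is handled by $\norm{\chi_n-\chi}\to 0$, and the thin shell $\Omega\setminus\Gamma_n$ by the Hausdorff convergence $\Gamma_n\to\Omega$ together with the Lipschitz bound; both contribute only $O(1/n)$ once one tracks the rates, which is the standard situation for an approximation.

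For the \emph{forward} direction I would start from $h\in\mathscr H(\Omega,\chi)$ and first replace it by its piecewise-linear interpolant $\hat h$ on the $\ell$-mesh of equilateral triangles: by Claim \ref{analitic1}, for $\ell$ small one has $\norm{h-\hat h}\le\ell\epsilon$ off an $\epsilon$-fraction of the triangles, and on each triangle $\hat h$ is affine with gradient $p_T\in N_G$ on the good triangles (since $\nabla h\in N_G$ a.e.\ and $N_G$ is convex), while on the bad ones I would clip $p_T$ to its nearest point of $N_G$ at an extra $O(\epsilon)$ cost in the sup-norm. Then, for $n\gg 1/\ell$, on the part of $G_n$ lying in each triangle $T$ I would use Proposition 3.2 of \cite{KOS} to produce a periodic dimer cover of $G$ whose normalised slope is arbitrarily close to $p_T$; by periodicity this gives a local height function tracking the affine function $n\hat h|_T$ up to a bounded additive error (one fundamental period plus the distance to a fixed reference face), i.e.\ up to $O(1/n)$ after normalising. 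Finally I would patch: along the mesh edges the target values of adjacent triangles agree up to $O(1/n)$ because $\hat h$ is continuous, so the local height functions can be reconciled on an $O(1)$-wide collar of the $\ell$-mesh (of relative area $o(1)$) using the discrete analogue of ``the pointwise minimum of height functions is again a height function'', and then forced to equal $\chi_n$ along $\pp\Gamma_n$ by the maximal/minimal extension formulas of the Criteria proposition together with the proposition on convergence of maximal extensions. This yields a normalised height function $\eta_n$ with $\norm{\eta_n-h}\le C(\epsilon+\ell)+C/n$; taking $\ell\asymp 1/n$, letting $\epsilon$ be small but fixed and then $\epsilon\to 0$ along a diagonal subsequence, gives the stated bound $\norm{\eta_n-h}\le C/n$.

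I expect the patching step in the forward direction to be the main obstacle: turning the locally defined height functions, each pinned only up to an additive $O(1)$ constant on its own triangle and a priori incompatible across the mesh edges and on the $\epsilon$-fraction of bad triangles, into one globally consistent height function that still matches $\chi_n$ on the boundary and stays $O(1/n)$-close to $h$. This is exactly where the discrete stability of height functions under pointwise minima and the convergence $\theta_n\to\theta$ do the real work, while the continuous counterpart of every step is routine --- which is the sense in which the proofs for height functions and asymptotic height functions ``are almost the same''.
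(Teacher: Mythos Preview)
Your converse direction is essentially the paper's argument: extend $\eta_n$ by the tautological support-function formula and then squeeze between $h_{\min}$ and $h_{\max}$.

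Your forward direction, however, takes a genuinely different and considerably heavier route than the paper. The paper does \emph{not} triangulate, build local periodic dimer covers, and patch. It simply reuses the tautological formula with the roles of $\theta$ and $\theta_n$ reversed: choose one point $p_f$ in each face of $\Gamma_n$ and set
\[
\hat\eta_n(x)\;:=\;\min_{f}\bigl(h(p_f)+\theta_n(x,f)\bigr).
\]
Since each $\theta_n(\cdot,f)$ is a height function and the pointwise minimum of height functions is again a height function, $\hat\eta_n$ is automatically a global height function --- no patching is needed at all. Closeness to $h$ follows exactly as in $(\ref{support123})$: the Support-Lipschitz inequality gives one direction, and the term $f=x$ gives the other, both up to an $O(1/n)$ defect coming from $|\theta_n-\theta|=O(1/n)$ and from $|h(p_x)-h(x)|=O(1/n)$. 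The boundary is then fixed by squeezing between $h^n_{\min}$ and $h^n_{\max}$, invoking the convergence-of-maximal-extensions proposition. So the two halves of the paper's proof are literally the same formula with $\theta$ and $\theta_n$ swapped, which is precisely the ``almost the same proof'' philosophy advertised.

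What your approach buys is a more constructive picture (explicit local models with prescribed slopes), but it comes at a real cost. The patching step you flag as the main obstacle is genuinely delicate, and your rate bookkeeping does not close: from $\norm{\eta_n-h}\le C(\epsilon+\ell)+C/n$ with $\ell\asymp 1/n$ you cannot recover $C/n$ by a diagonal in $\epsilon$ --- you only get $o(1)$, since Claim~\ref{analitic1} gives no quantitative link between $\epsilon$ and $\ell$ for a general Lipschitz $h$. The paper's global-minimum formula sidesteps both issues simultaneously and delivers the $C/n$ bound directly.
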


\begin{proof}
The strategy is to use \hyperref[support123]{the tautological expression} for function $h$ and its analogue for height functions.

Let us take a family of points $F:=\{p_f\}$ on $\Omega$, where $f$ is a face on $\Omega_n$ and for every face $f$ there exists a point from the family that contains exactly one point from the family. 

Let us define the height function that approximates $f$.
\bb
\hat\eta_{n}(x):=\min_{y \in \{p_f \}_{f \in \Omega}}(f(y)+\theta_n(x,y)),
\ee
where $f(y)$ is a value of $f$ at a point from the family $\{p_f\}$ corresponding to the face $y$. 
We claim that $\hat\eta_n$ approximates $f$. The proof is almost the same as for the maximal extension. Let us show that $\hat\eta_n(x) \leq f(x) \leq \hat\eta_n(x)$.

Notice that we have to show it up to an error of order $o(n^{-1})$\footnote{We can add a term $C n^{-1}$ for sufficiently large $C$ to make the first inequality true, while the limit as $n\to \infty$ will be the same.}.

Note that $f(y)+\theta_n(x,y)$ converges to $f(y)+\theta(x,y)$ as $n \to \infty$. Thus the second inequality becomes the Support Lipschitz condition.

The equality for the first inequality is achieved at the face containing point $x$. If $x \in F$ the equality is exact, otherwise it is true up to an error of order $n^{-1}$ which is $f(x)-f(x^{\prime})$ where $x^{\prime}$ is a point from the family correspondent to the face $x$.

It possible that $\hat\eta_{n}$ will have a wrong boundary condition, so we need to balance it between the maximal and the minimal height functions on $\Omega_n$. 
\bb
\eta_{n}:=\max(h_{max}^n,(\min(h_{\min}, \hat \eta_{n})))
\ee

The proof of the second part of the statement is almost the same,
let us define an asymptotic height function that approximates $\hn$.
\bb
f(x):=\min_{y \in \{p_f \}_{f \in \Omega}}\hn(x)+\theta(x,y)).
\ee
\end{proof}

\section{The Surface tension theorem.}
\label{9}
Finally, we can formulate the main auxiliary theorem of the proof.

\begin{theorem}
Let $\Omega$ be a compact, connected, simply-connected domain in $\mathbb{R}^2$. Then let $(\Omega.\chi)$ be the domain with a boundary condition $g \in \mathscr{H}(\Omega,\chi)$  and $A_{n}=(\Omega_{n},\chi_{n})$ be an approximation of $(\Omega,\chi)$. We recall that 
$Z(g, \delta \, | \, A_{n})$ is a partition function of configurations with height functions $\delta$-close to $g$. Then

\begin{equation}
\lim_{ \delta \to 0}\lim_{n \to \infty}{ n^{-2} \log Z(g,\delta \ | A_{n})}= \int_{\Omega}{\sigma(\nabla g) dx dy}
\end{equation} 
\label{Surface}
\end{theorem}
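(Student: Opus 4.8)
The plan is to reduce the general domain to a torus computation via a three-stage approximation scheme, exactly along the lines sketched in the introduction: first establish the statement when $g$ is \emph{linear} and $\Omega$ is a square (equivalently, a torus), then extend to piecewise-linear $g$ on a polygon by additivity of the log-partition function under the cutting rule, and finally pass to arbitrary $g \in \mathscr{H}(\Omega,\chi)$ by piecewise-linear approximation. For the torus base case: if $g$ has constant gradient $(s,t)$, then the height functions $\delta$-close to $g$ on $\Omega_n$ are exactly (up to boundary effects of lower order) the dimer covers on $G(n)$ with slope $(\lfloor ns\rfloor,\lfloor nt\rfloor)$, so $n^{-2}\log Z(g,\delta\,|\,A_n) \to \sigma(s,t) = \iint_{\Omega}\sigma(\nabla g)\,dx\,dy$ directly from the definition of surface tension in Section~4 (here I use that $\Omega$ has unit area, rescaling otherwise). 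The passage from torus to a square with prescribed linear boundary data uses the Concentration Lemma of Section~7: since all height functions concentrate around their average, the free boundary and the pinned-linear boundary give the same exponential order of $Z$, so the square contributes the same $\sigma(s,t)$ per unit area regardless of which linear boundary condition one imposes (as long as it is the one whose maximizer has that slope).

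Next I would handle piecewise-linear $g$. Triangulate $\Omega$ by the $\ell$-mesh and use Claim~\ref{analitic1} to replace $g$ by its piecewise-linear interpolant $\hat g$ on all but an $\epsilon$-fraction of triangles, at a cost $o_\delta(1) + o_\ell(1)$ in the functional $\mathcal F$ (using continuity of $\sigma$ and the Lipschitz bound $\nabla g \in N_G$, which keeps $\sigma(\nabla g)$ bounded). Then apply the Cutting Rule (Proposition~1) along the edges of the triangulation: this writes $Z(\hat g,\delta\,|\,A_n)$, up to summing over the finitely many boundary height functions along the cut curves, as a product over triangles of partition functions with linear boundary data. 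Taking $n^{-2}\log$, the sum over cut-data contributes only $O(n^{-1}\log n)$ (there are $O(n)$ faces along the cuts, each with $O(n)$ possible values), and each triangle contributes $(\mathrm{area})\cdot\sigma(\text{slope})$ by the square/torus case applied to each triangle (approximating the triangle by small squares — the second of the three approximation steps mentioned in the introduction). Summing over triangles gives $\sum_\triangle \mathrm{area}(\triangle)\,\sigma(\nabla\hat g|_\triangle) \to \iint_\Omega \sigma(\nabla g)$ as $\ell\to 0$.

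Finally, for general $g \in \mathscr{H}(\Omega,\chi)$, combine the Density Lemma (Theorem~\ref{density}) — which supplies normalized height functions $\eta_n$ with $\|\eta_n - g\| \le C/n$, hence automatically $\delta$-close to $g$ — with the concentration estimates of Corollary~\ref{concentration} to argue that $Z(g,\delta\,|\,A_n)$ and $Z(\hat g,\delta\,|\,A_n)$ have the same exponential order up to $o_\delta(1)$; then invoke the piecewise-linear case and let $\ell,\delta\to 0$. I expect the main obstacle to be the \emph{lower bound} in the torus/square step: showing that $Z(g,\delta\,|\,A_n)$ is not exponentially smaller than $\exp(n^2\sigma(s,t))$, i.e. that there genuinely exist exponentially many dimer covers with the right slope that stay uniformly $\delta$-close to the linear profile $g$ and match the boundary data. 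This requires controlling boundary layers of width $o(n)$ and gluing torus-typical configurations to the prescribed boundary condition; the Density Lemma gives one such configuration, and a local surgery / entropy-reservoir argument (perturbing in the bulk away from the boundary layer, as in \cite{CKP,CEP}) upgrades this to the correct exponential count. The upper bound, by contrast, follows more easily from subadditivity and the cutting rule together with concavity of $\sigma$.
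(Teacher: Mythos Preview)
Your three-stage architecture (torus/square $\to$ triangle $\to$ general domain, glued by the Cutting Rule and piecewise-linear approximation via Claim~\ref{analitic1}) is exactly the paper's route, and your bookkeeping for the cutting sum and the ``bad'' triangles is correct in spirit.

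The one real gap is the square base case, and you have correctly located it. Your proposed use of the Concentration Lemma here does not work as stated: that lemma controls $|h(v)-\bar h(v)|$ for a \emph{fixed} boundary condition, so it cannot by itself compare the free-boundary torus partition function with the pinned-linear-boundary square partition function --- you would need to know a priori that the average height function on the square \emph{is} the linear profile, which is essentially the limit-shape statement you are trying to feed into. The paper does not use concentration at this step at all. Instead it proves a small geometric lemma (the \emph{Slide Lemma}, Proposition~\ref{slide}): a linear profile on a concentric square of side $1-2\ell$ can be extended to an asymptotic height function on the full unit square matching any boundary data that is $\delta$-close to linear, provided $\ell\sim\sqrt{\delta}$. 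Via the Density Lemma this yields explicit inclusions of configuration spaces
\[
\mathfrak{D}_{s,t}(T_{n-l(n)})\ \subseteq\ \mathfrak{D}(\mathcal S_n\mid\delta)\ \subseteq\ \mathfrak{D}_{s,t}(T_{n+l(n)}),
\]
sandwiching the square between two tori whose sizes differ by $O(\sqrt{\delta}\,n)$; taking $n^{-2}\log$ of the corresponding partition functions and using the known torus asymptotics gives both bounds at once, with error $O(\sqrt{\delta})$. This is precisely the ``boundary-layer gluing'' you anticipated in your last paragraph, made concrete. Once you have this, the triangle and general-domain steps go through just as you outlined; your final appeal to concentration to pass from $g$ to $\hat g$ is unnecessary, since the triangle lemma is applied directly on the ``good'' triangles where $\|g-\hat g\|\le \ell\epsilon$.
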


Before the proof we need two lemmas, \textit{the Square lemma} and \textit{the Triangle lemma}. 
Before the Square lemma we need to prove an auxiliary proposition.

\subsection{The Slide lemma}
 Suppose that we have a unit square $\cS$ and a smaller square of length $1-2\ell$ inside $\cS$ with the same center( see Figure \ref{squaresquare}). 
Let $\psi$ be a linear function on the smaller square and let $\chi$ be an asymptotic height function that $\delta$-close to a linear function with the same slope as $\psi$.

\begin{proposition}[The slide lemma]
In the above notations there is an extension of $\psi$ to an asymptotic height function on $(\cS,\chi)$ for sufficiently small $\delta$ and a proper choice of $\ell$.
\label{slide}
\end{proposition}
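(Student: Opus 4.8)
The plan is to build the desired extension explicitly, gluing the prescribed linear function $\psi$ on the inner square to the boundary data $\chi$ across the ``frame'' of width $\ell$, and to check that the interpolation stays in $\mathscr{H}(\cS)$, i.e. that its gradient lies in $N_G$ almost everywhere. First I would introduce a linear function $L$ with the same slope as $\psi$ such that $\norm{\chi - L} \le \delta$ on $\pp\cS$; since $\psi$ and $L$ have the same gradient, $\psi - L$ is a constant $c$ on the inner square, and because $\psi$ itself is (by hypothesis) realized as part of an asymptotic height function we have the bound $|c| = O(\delta)$ after a harmless normalization. So the real task is: given two functions on the inner boundary and outer boundary that are within $O(\delta)$ of the single linear function $L$, interpolate across the annular frame with gradient in $N_G$.

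Next I would carry out the interpolation using the support function $\theta$ of $N_G$, in the same style as the maximal-extension constructions already proven in the paper. Concretely, on the frame I would define
\bb
h(x):=\min\Bigl(\min_{y\in\pp\cS}\bigl(\chi(y)+\theta(x-y)\bigr),\ \min_{y\in\pp\cS'}\bigl(\psi(y)+\theta(x-y)\bigr)\Bigr),
\ee
where $\cS'$ is the inner square, and set $h=\psi$ on $\cS'$ itself. As a pointwise minimum of functions with gradient in $N_G$ this candidate automatically satisfies the Support-Lipschitz condition, hence $\nabla h \in N_G$ a.e.; the only thing to verify is that it genuinely restricts to $\chi$ on $\pp\cS$ and to $\psi$ on $\pp\cS'$, and that the two formulas agree where they must. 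For the outer boundary the argument is verbatim the one in the proposition on extensions of $\chi$ (it needs $\chi$ to satisfy the $\theta$-Lipschitz inequality, which holds once $\delta$ is small because $\chi$ is $\delta$-close to the linear $L$ whose slope lies in $N_G$). For matching $\psi$ on $\pp\cS'$ I would use that $\theta(x-y)$ grows linearly while the discrepancy between $\psi$ and the $\chi$-generated part is only $O(\delta)$, so for a point $x\in\pp\cS'$ the minimum is attained by the $y=x$ term in the second family, giving value exactly $\psi(x)$, provided the frame width $\ell$ is large enough relative to $\delta$ that the first family cannot undercut it: this is where the ``proper choice of $\ell$'' and ``sufficiently small $\delta$'' enter — quantitatively one needs $\ell \cdot (\text{gap of } \theta) \gtrsim \delta$, e.g. $\delta \le c\,\ell$ for a constant depending only on $N_G$.

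The main obstacle I anticipate is not the existence of \emph{some} $N_G$-gradient interpolation — that is forced by the convexity/minimum trick — but the clean matching at the \emph{inner} boundary: one must ensure the globally-defined minimum does not ``cut the corner'' and drop below $\psi$ near $\pp\cS'$, which is exactly the competition between the two families above and requires the width $\ell$ of the frame to dominate the $O(\delta)$ mismatch. A secondary technical point is handling the normalization constant $c$ between $\psi$ and $L$: since asymptotic height functions are only defined up to an additive constant this is harmless, but it must be absorbed consistently so that ``$\delta$-close'' statements survive. I would also remark that the inner square being strictly inside $\cS$ (the $1-2\ell$ hypothesis) is precisely what provides the room for this interpolation, and that the construction degrades gracefully: as $\delta\to 0$ with $\ell$ fixed the extension converges to the exact linear function on all of $\cS$, which is the form in which the Slide lemma will be used to feed the Square lemma.
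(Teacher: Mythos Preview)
Your approach is correct but genuinely different from the paper's. The paper does not use the support function $\theta$ at all here: after subtracting the common linear function $L$, it builds the extension \emph{by hand}, interpolating linearly along each ray from the inner square to $\pp\cS$, and then checks directly that the resulting gradient (of size $\leq \delta/\ell$ in the radial direction) lands back in $N_G$ after adding $(s,t)$. This forces the choice $\ell=\tfrac{\sqrt\delta}{4}$, so that $\delta/\ell=4\sqrt\delta$ is small; the remark that the frame has area $O(\sqrt\delta)$ is then fed into the Square lemma. Your min-of-support-functions construction is cleaner in that $\nabla h\in N_G$ is automatic and no explicit gradient computation is needed; it also yields the sharper relation $\ell\gtrsim \delta$ rather than $\ell\gtrsim\sqrt\delta$, hence a smaller frame.

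One point in your write-up should be tightened. The sentence ``$\theta(x-y)$ grows linearly while the discrepancy \dots\ is only $O(\delta)$'' is not quite the right inequality: what you actually need for $A(x)\ge\psi(x)$ on $\pp\cS'$ is that the \emph{excess} $\theta(v)-\langle(s,t),v\rangle$ grows linearly in $|v|$, and this holds only when $(s,t)$ lies in the interior of $N_G$, with rate $r=\mathrm{dist}((s,t),\pp N_G)$. So your constant $c$ in ``$\delta\le c\,\ell$'' depends on $r$ and degenerates as $(s,t)\to\pp N_G$; you should state this explicitly. The paper's radial construction has the same limitation (see its footnote), so this is not a defect of your method, but your phrase ``gap of $\theta$'' hides exactly this dependence and should be unpacked. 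Also, your side remark that $\chi$ is $\theta$-Lipschitz ``once $\delta$ is small'' is unnecessary: $\chi$ is by hypothesis an asymptotic height function, so it already satisfies the Support Lipschitz condition.
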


\begin{figure}[ht]
\centering
\includegraphics[width=0.45\textwidth]{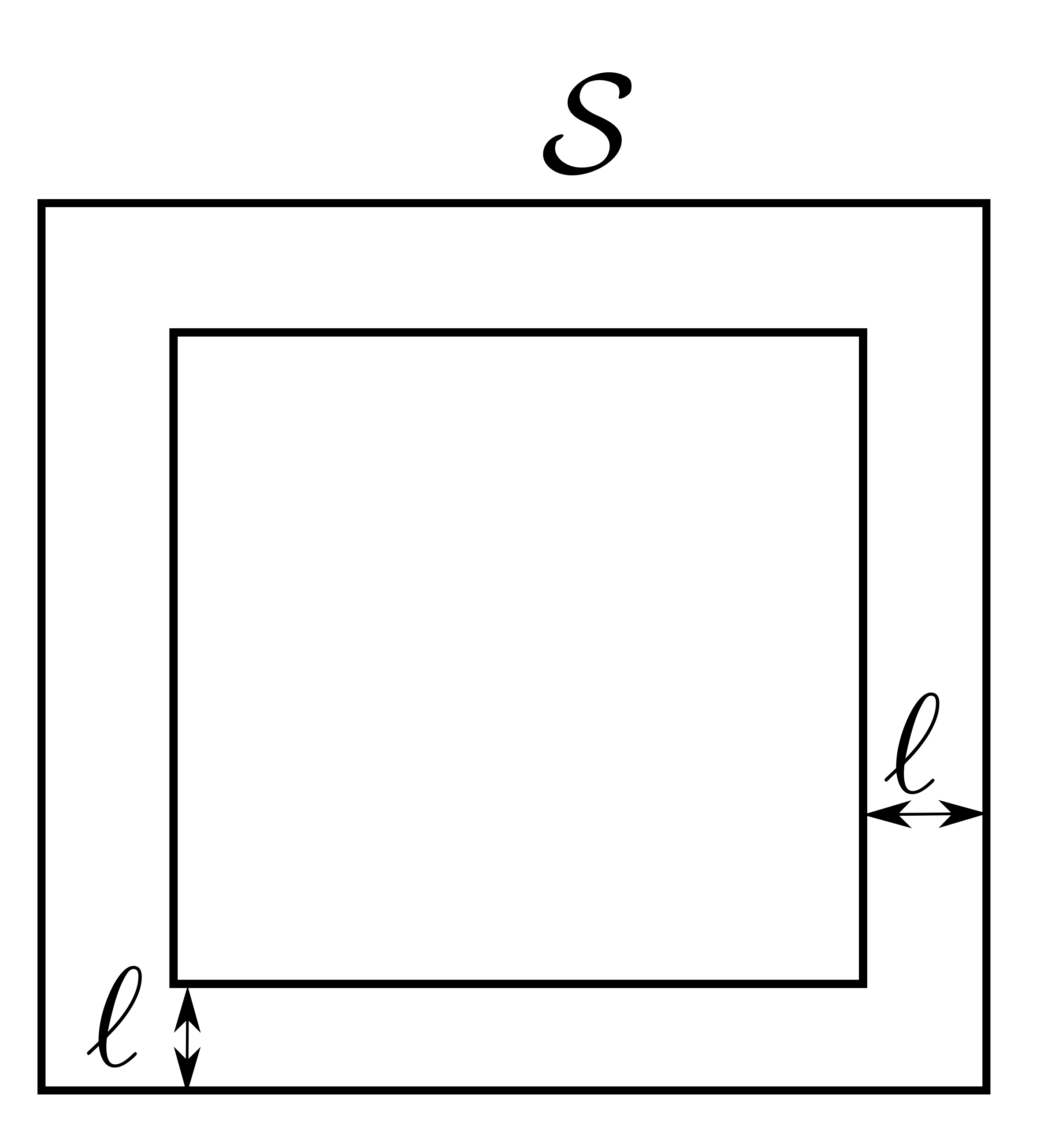}
\caption{A unit square containing a square of size $1-2\ell$}
\label{squaresquare}
\end{figure}

\begin{proof}
We need to extend $\psi$ to an asymptotic height function $\hat \psi$ that coincides with $\chi$ on $\pp\cS$. We will construct a Lipschitz extension and then choose parameters to make it an asymptotic height function.
First we make a shift of all functions by the linear function with the same slope as $\psi$. Then we have zero on the smaller square and the boundary condition on $\cS$ bounded in modulus by $\delta$, without loss of generality let us denote the shifted boundary condition by the same letter $\chi$. 

Let us connect two values on every ray by the unique linear function that has a value $0$ on the boundary of the smaller square and a value of $\chi$ on $\pp\cS$. Let us denote the continuation by $h$.
\begin{figure}[ht]
\centering
\includegraphics[width=0.45\textwidth]{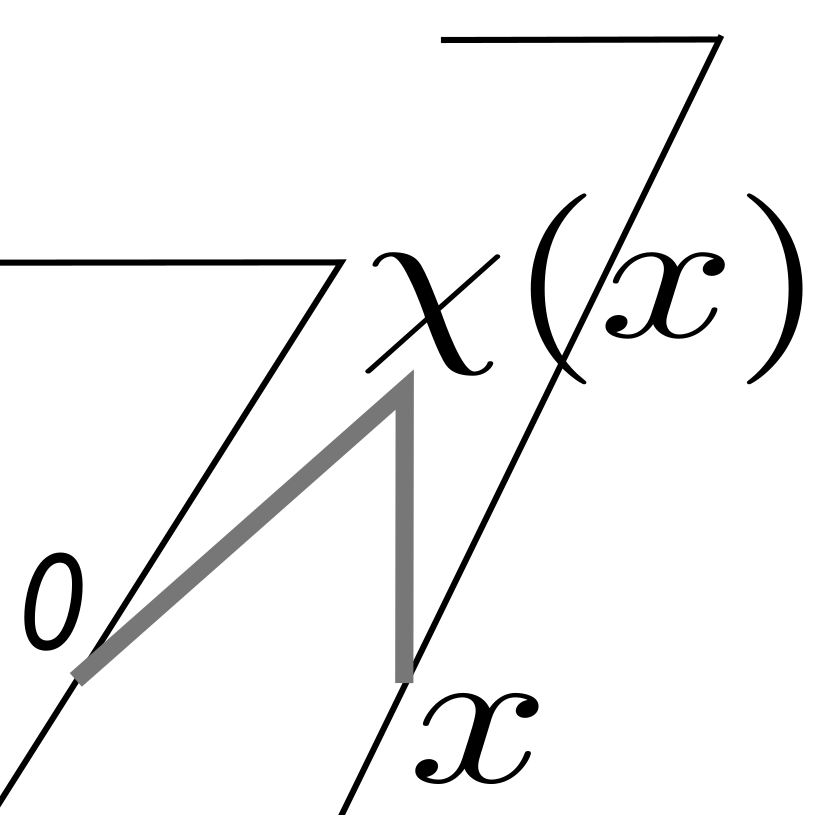}
\caption{A continuation}
\label{squarehill}
\end{figure}

We have to prove that $h$ is an asymptotic height function. To do this we need to check that $\nabla h(x) \in N_G$. 

On each ray $\nabla h(x)$ is a constant equal to $\frac{\chi(x)}{\ell}$ and in direction orthogonal to the ray, $h$ has the same slope as $\chi$.
Let us choose $\ell=\frac{\sqrt[]{\delta}}{4}$, then for sufficiently small $\delta$ the slope after inverse shift lies on $N_G$.
\footnote{Note that not all boundary height functions $\chi$ are possible to appear. For example in case of $(s,t) \in \pp N_G$ and $s,t \geq 0$ height functions on the north boundary are less than zero and on the south are more than zero due to the fact that the slope before the linear shift lies in $N_G$.}
\end{proof}
\begin{remark}
Note that the area of continuation is of order \ $\sqrt[]{\delta}$.
\end{remark}

\subsection{The square lemma}
\begin{proposition}[The square lemma]
Let $(\cS,\chi)$ be a domain with a boundary condition, where $\cS$ is a unit square and $\chi$ is $\delta$-close to a linear function with the slope $(s,t)$.

Suppose that $(\cS_{n},\chi_{n})$ is an approximation of $(\cS,\chi)$ and let $g \in \mathscr{H}(\cS,\chi)$ be $\delta$-close to a linear with the slope $(s,t)$. We recall that $Z(g,\delta \ | \cS_n)$ is the partition function of $\cS_n$ where we take into account only configurations with height functions $\delta$-close to $g$.
Then,
\label{square_case}
\begin{equation}
\lim_{ \delta \to 0}\lim_{n \to \infty}{ n^{-2} \log Z(g,\delta \ | \cS_n)}= \sigma(s,t)
\end{equation}
\end{proposition}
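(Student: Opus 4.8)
The plan is to establish the two matching bounds
\[
\sigma(s,t)-o_\delta(1)\ \le\ \liminf_{n\to\infty}n^{-2}\log Z(g,\delta\,|\,\cS_n),\qquad \limsup_{n\to\infty}n^{-2}\log Z(g,\delta\,|\,\cS_n)\ \le\ \sigma(s,t)+o_\delta(1),
\]
by comparing the square $\cS_n$ with the torus graph $G(n)$ and reading $\sigma(s,t)$ off the thermodynamic limit on the torus. The first ingredient is the torus--square dictionary. Cutting the torus $G(n)$ along a horizontal and a vertical cycle produces, by the cutting rule (\ref{Cutting_boundary}), a disk-type graph $\square_n$ with
\bb
\sum_{D\in\fD_{s,t}(G(n))}w(D)=\sum_{\psi}Z(\square_n,\psi),
\ee
the sum being over boundary height functions $\psi$ on $\partial\square_n$ whose left--right and top--bottom increments equal $\lfloor ns\rfloor$ and $\lfloor nt\rfloor$. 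After the dilation $d_n$ the graph $\square_n$ fills a unit square, and by the Hausdorff and area conditions in the definition of an approximation together with the standard Lipschitz bound on $\log Z$ under modifications of a graph on $o(n^2)$ edges (locality of the model), $n^{-2}\log Z$ on $\square_n$ and on any approximation $\cS_n$ of $(\cS,\chi)$ differ by $o(1)$. Since restricting the partition function only decreases it, $Z(g,\delta\,|\,\cS_n)\le Z(\cS_n,\chi_n)$, which coincides up to $\exp(o(n^2))$ with the single summand $Z(\square_n,\chi_n)$ of $\sum_{D\in\fD_{s',t'}(G(n))}w(D)$ for some $(s',t')$ within $\delta+o(1)$ of $(s,t)$ (the increments of the $\delta$-almost-linear $\chi_n$ are $\delta$-close to $(ns,nt)$). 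Hence
\bb
\limsup_{n\to\infty}n^{-2}\log Z(g,\delta\,|\,\cS_n)\le \sigma(s',t')=\sigma(s,t)+o_\delta(1),
\ee
using that $\sigma$ is continuous on $N_G$, being concave (\cite{KOS}); this is the upper bound after $\delta\to0$.

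For the lower bound I would first use the slide lemma (Proposition \ref{slide}) to reduce to an exactly linear boundary condition on a slightly smaller square. Put $\ell=\sqrt\delta/4$, let $\cS'$ be the concentric square of side $1-2\ell$, and set $m=\lfloor(1-2\ell)n\rfloor$. On $G(m)$ almost every slope-$(s,t)$ configuration has rescaled height function $\delta$-close to the linear function of slope $(s,t)$ (concentration of the dimer measure at a fixed slope, via decay of correlations, \cite{KOS}); summing $w(D')$ over these only already gives $\exp\!\left(m^2\sigma(s,t)+o(m^2)\right)$. Cutting such $D'$ open along the two cycles produces dimer covers of $\square_m$ whose boundary height functions are $\delta$-close to linear with the correct increments, and there are at most $\exp(O(m))$ such boundary height functions, so by pigeonhole some fixed $\psi^0_m$ satisfies $Z\bigl(\square_m,\psi^0_m;\ \delta\text{-close to linear}\bigr)\ge\exp\!\left(m^2\sigma(s,t)+o(m^2)\right)$. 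Finally I would glue: $\psi^0_m$ placed on $\cS'_n$ and $\chi_n$ on $\partial\cS_n$ are both $\delta$-close to the same linear function and separated by an annulus of width $\asymp\sqrt\delta$, hence by the slide-lemma construction they cobound an asymptotic height function on the annulus realized by a fixed dimer configuration $D_{\mathrm{ann}}$ of weight $w(D_{\mathrm{ann}})\ge\exp(-C\sqrt\delta\,n^2)$ (bounded weights). Applying the cutting rule (\ref{Cutting_boundary}) along $\partial\cS'_n$ and keeping only configurations equal to $D_{\mathrm{ann}}$ on the annulus and to a cover counted above on $\cS'_n$ — all of which have rescaled height function $O(\delta)$-close to $g$ — one gets
\bb
n^{-2}\log Z(g,\delta\,|\,\cS_n)\ge (1-2\ell)^2\sigma(s,t)-C\sqrt\delta+o(1)\ \xrightarrow[\ n\to\infty,\ \delta\to0\ ]{}\ \sigma(s,t).
\ee

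The hard part is the input invoked at the start of the lower bound: that the dimer measure on $G(m)$ conditioned on slope $(s,t)$ concentrates, as $m\to\infty$, around the linear profile, so that the boundary height functions obtained by cutting the dominant torus configurations are themselves $\delta$-close to linear and can be matched to $\chi_n$ through an annulus of vanishing area. Without it, pigeonhole only yields a boundary height function realizing $\sigma(s,t)$ that may be far from linear, which cannot be interpolated to $\chi_n$ across a thin annulus; so this step genuinely relies on quantitative control of height fluctuations at a fixed slope (a law of large numbers for the gradient field), which I would import from \cite{KOS}. The other points — the $o(n^2)$ identification of $\cS_n$ with the cut torus via locality of the model, and the continuity of $\sigma$ absorbing the $O(\delta)$ slope mismatch — are routine.
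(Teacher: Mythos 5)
Your proposal is correct in spirit but takes a genuinely different route from the paper's, which I describe for comparison. The paper also sandwiches $Z(g,\delta\,|\,\cS_n)$ between torus partition functions, but it does so by introducing \emph{two} tori $T_{n\pm l(n)}$ with $l(n)/n\to\ell=\sqrt{\delta}/4$ and attempting to exhibit the weight-preserving set inclusions $\fD_{s,t}(T_{n-l(n)})\hookrightarrow\fD(\cS_n|\delta)\hookrightarrow\fD_{s,t}(T_{n+l(n)})$. The second inclusion (the paper's upper bound) is built by an ``inverse'' application of the slide lemma, whereas you instead cut $G(n)$ open, identify $\cS_n$ with the cut torus up to $o(n^2)$ edge modifications, use monotonicity $Z(g,\delta|\cS_n)\le Z(\cS_n,\chi_n)\le Z_{s',t'}(G(n))$, and absorb the slope mismatch by continuity of $\sigma$. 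These are comparable in difficulty; yours leans on a locality estimate for $\log Z$ under graph modifications that the paper doesn't spell out but would also need.

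The lower bound is where your route really diverges. The paper extends \emph{every} slope-$(s,t)$ dimer cover of $T_{n-l(n)}$ to a dimer cover of $\cS_n$ via the slide lemma plus the density lemma and claims the result lands in $\fD(\cS_n|\delta)$. As you correctly sense, this is the delicate point: a torus dimer cover of slope $(s,t)$, cut open, need not have a height profile $\delta$-close to linear, so the extension need not be $\delta$-close to $g$; the inclusion $\fD_{s,t}(T_{n-l(n)})\subseteq\fD(\cS_n|\delta)$ as stated cannot hold without a concentration input. Your version makes this input explicit — you restrict to the dominant torus configurations, which concentrate around the linear profile, then use pigeonhole over boundary traces ($\exp(O(m))$ of them) to find a single good boundary condition, and glue across an annulus of area $O(\sqrt\delta)$ whose weight contributes only $\exp(-C\sqrt\delta\,n^2)$. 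The price is that you genuinely need a law of large numbers for the slope-conditioned torus height field (variance $O(\log n)$, hence rescaled fluctuations $\to 0$), which is not proved in the paper; the paper's Concentration Lemma is for graphs with boundary, not for the torus measure conditioned on slope, so this must be imported from \cite{KOS} as you say. In short: your upper bound is a streamlined rephrasing, your lower bound is a more careful (and, in my view, more honest) argument that surfaces a concentration assumption the paper's inclusion-based argument leaves implicit.
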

In this proposition all graphs will be subgraphs of $G_n$ for a fixed $n>0$ which we hold until the end of the proposition and then send it to infinity.
Before the proof we need to introduce two sequences of graphs embedded into the torus.
\bb
T_{n\pm l(n)}:=\phi_{n}(G/(n\pm l(n))\ZZ^2)
\ee
Where $l(n)$ is a sequence such that $\lim_{n \to \infty}\frac{l(n)}{n}=\ell$.

\begin{proof}
We want to approximate the partition function of $\cS_n$ by the partition functions of $T_{n\pm l(n)}$ that include dimer cover of the slope $(s,t)$ (let us denote the partition function of configuration with this slope by $Z_{(s,t)}$). The weights of the dimer covers are the same, so we have to show that the number of such dimer covers is approximately the same. Let us recall that $\mathfrak{D}_{s,t}(T_{n- l(n)})$ is the set of dimer configurations on $T_{n- l(n)}$ that have the slope $(\lfloor{ns}\rfloor,\lfloor {nt}\rfloor)$. We want to show that there are two inclusions,
\bb
\mathfrak{D}_{s,t}(T_{n- l(n)})\subseteq \mathfrak{D}(\cS_{n}|\delta)\subseteq \mathfrak{D}_{s,t}(T_{n+ l(n)}).
\label{inclusion}
\ee
Then we can conclude the following relation between the partition functions\footnote{Note that the weight of dimer covers and the weight of their extension are not the same. However, the dimer covers differ only on the area of continuation that is of order $\delta$. Thus the weights differ by $C^{\text{Area}} =C^{\prime}\exp(o(\delta))$ },
\bb
\sum_{D\in \mathfrak{D}_{s,t}(T_{n-l(n)})} w(D)\leq \sum_{D\in \mathfrak{D}(\cS_{n}|\delta)} w(D)\leq \sum_{D\in \mathfrak{D}_{s,t}(T_{n+l(n)})} w(D).
\ee
And after the normalization,
\bb
n^{-2}\log(Z_{s,t}(T_{n-l(n)}))\leq n^{-2}\log( Z(\cS_{n}|\delta))\leq n^{-2}\log (Z_{s,t}(T_{n+l(n)}).
\label{normalization}
\ee
We know that $\lim_{n \to \infty}n^{-2}\log(Z^{(s,t)}(T_n)=\sigma(s,t)$ from \ref{surface3}.

So we can conclude that 
\bb
\lim_{n\to \infty}(n\pm l(n))^2 \times(n\pm l(n))^{-2}  n^{-2}\log(Z_{s,t}(T_{n\pm l(n)}))=\sigma(s,t) \times \lim_{n \to \infty}\frac{(n\pm l(n))^2}{n^2}=\sigma(s,t)(1+o(\delta^{1/2}).
\label{normalization_tori}
\ee

Let us prove the first inclusion in \hyperref[inclusion]{(\ref{inclusion})} , to do it we have to show that each height function on $(T^{s,t}_{n- l(n)})$ extends to a height function on $(\cS_n|\delta,g)$.
Note that $T^{s,t}_{n-l(n)}$ is an approximation of the smaller square inside $\cS$ (let us denote the smaller square by $\mathfrak{S}$ ) with the linear boundary condition $\chi$ that has the slope $(s,t)$.
First let us choose the extension of the boundary condition on $\pp \mathfrak{S}$ to an asymptotic height function on $\cS$.
From the previous proposition we know that there is a choice of $\ell$ such that the extension of the boundary condition to an asymptotic height function on $\cS$ exists.

Then using the density lemma we can build a sequence of height functions on $(\cS_n,\chi_n)$ such that on $T_{n-l(n)}$ the sequence coincides with the approximatively linear height function.

To prove the second inclusion we have to show that each height function on $(\cS_n|\delta,g)$ extends to a height function on $(T^{s,t}_{n+l(n)})$. The proof is almost the same as for the first inclusion except the way of using the previous proposition. We have to invert \hyperref[slide]{the slide lemma} to the case where the boundary condition on the boundary of the bigger square is linear with the slope $(s,t)$ and the boundary condition on the boundary of the smaller square is bounded in modulus by $\delta$ \footnote{The extension is the same, we make the shift by the linear function with the slope $(s,t)$ and on each ray between boundaries we connect two boundary conditions by the unique linear function that coincides with the boundary conditions. }.
\end{proof}
\begin{remark}
\hyperref[square_case]{The Square lemma} also works for tori of different sizes.
Let us look at the square of size $k\times k$ with embedded $G_n$,let us denote the image of the embedding by $S_n(k)$. Doing the same as for \hyperref[normalization_tori]{the normalization of tori} we get an area factor,
\bb
\lim_{n\to \infty} (kn)^{2} \times(kn)^{-2}  n^{-2}\log(Z^{(s,t)}(S_n(k))=\sigma(s,t) \times k^2.
\ee
\end{remark}

\subsection{The triangle lemma.}

\begin{claim}
Let $\Omega$ be an equilateral triangle in $\mathbb{R}^2$ and $\chi \in \mathscr{H}(\Omega)$. Let $g \in \mathscr{H}(\Omega,\chi)$ be an asymptotic height function $\delta$-close to a linear with the slope $(s,t)$ and let $A_{n}=(\Omega_{n},\chi_{n})$ be an approximation of $(\Omega,\chi)$.

\begin{equation} \label{Functional-triangle}
\lim_{ \delta \to 0}\lim_{n \to \infty}{ n^{-2} \log Z(g,\delta \ | A_{n})}= \sigma(s,t)
\end{equation} 

\label{triangle_case}

\end{claim}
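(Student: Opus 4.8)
The plan is to reduce the triangle to the square case (Proposition~\ref{square_case}) by tiling the equilateral triangle $\Omega$ with small squares and controlling the boundary layer near the slanted sides. First I would fix a small mesh size $k>0$ and cover $\Omega$ by an array of axis-parallel squares of side $k$: the squares entirely contained in $\Omega$ (call this region $\Omega^{\mathrm{in}}_k$), together with a thin strip $\Omega\setminus\Omega^{\mathrm{in}}_k$ along $\pp\Omega$ whose area is $O(k)$ (since the three sides have finite length). On each interior $k$-square, the target asymptotic height function $g$ is $\delta$-close to the linear function of slope $(s,t)$; rescaling by the \emph{Remark} after the Square lemma, the partition function of the $G_n$-graph inside a $k$-square with height functions $\delta$-close to $g$ contributes $n^{-2}\log Z = \sigma(s,t)\,k^2 + o_\delta(1)k^2$ as $n\to\infty$.

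The second step is to glue these contributions using the \hyperref[Cutting_boundary]{Cutting Rule} (Proposition on cutting graphs with boundary conditions). Cutting $\Omega_n$ along the grid lines of the $k$-mesh, we obtain
\bb
Z(g,\delta\,|\,A_n)=\sum_{\{\chi_\rho\}}\ \prod_{\text{pieces } P}Z\bigl(P_n,\chi^P_\rho\bigr),
\ee
where the sum is over compatible boundary height functions along the cuts. The number of such $\chi_\rho$ is subexponential in $n^2$ (polynomial in $n$ for each of the $O(k^{-2})$ cut segments, since the height function is Lipschitz), so after taking $n^{-2}\log$ the entropy of the gluing sum is negligible and one may estimate $n^{-2}\log Z$ by the \emph{maximum} over $\{\chi_\rho\}$ of the sum $\sum_P n^{-2}\log Z(P_n,\chi^P_\rho)$. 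For the interior squares this maximum is achieved (up to $o(1)$) by the restriction of $g$, giving the total $\sum \sigma(s,t)k^2 = \sigma(s,t)\,\mathrm{Area}(\Omega^{\mathrm{in}}_k)=\sigma(s,t)(\mathrm{Area}(\Omega)+O(k))$. For the boundary strip one uses only the crude bound $|n^{-2}\log Z(P_n,\chi_\rho)|\le C\,\mathrm{Area}(P)$, valid because the number of dimer covers on any graph with $O(n^2\mathrm{Area})$ vertices and fixed boundary is at most $C^{n^2\mathrm{Area}}$ and the weights are bounded above and below; this contributes $O(k)$.

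Finally I would take the iterated limits in the order $n\to\infty$, then $\delta\to0$, then $k\to0$: the $n\to\infty$ limit gives $\sigma(s,t)\,\mathrm{Area}(\Omega)+o_\delta(1)+O(k)$, the $\delta\to0$ limit kills the $o_\delta(1)$ term, and the $k\to0$ limit kills the $O(k)$ boundary-strip error (here one must also check, via the Density Lemma, that a sequence of height functions realizing $g$ on the interior squares \emph{and} matching $\chi_n$ on $\pp\Omega_n$ actually exists, so that the lower bound is not vacuous — this is where the slanted boundary is handled, by extending $g$ across the thin strip as in the Density Lemma between the maximal and minimal extensions). Since $\Omega$ is an equilateral triangle, $\mathrm{Area}(\Omega)$ is a fixed constant and the right-hand side is $\sigma(s,t)$ only after we observe that for a \emph{unit-size} statement the area is absorbed; more precisely the claim as stated should read $\int_\Omega\sigma(\nabla g)\,dx\,dy=\sigma(s,t)\,\mathrm{Area}(\Omega)$ since $\nabla g$ is $\delta$-close to $(s,t)$ throughout, and the stated form $=\sigma(s,t)$ holds under the normalization that the triangle has unit area. \textbf{The main obstacle} I anticipate is the boundary strip along the three slanted sides: one must simultaneously ensure (a) that the crude area bound there is genuinely $O(k)$ and does not interact badly with the non-axis-parallel boundary condition $\chi_n$, and (b) that admissible height functions extending $g$ from the interior grid to the full triangulated region with the prescribed boundary values exist, which requires the Support–Lipschitz compatibility of $\chi$ guaranteed by $g\in\mathscr H(\Omega,\chi)$ together with the maximal/minimal extension construction from the Density Lemma.
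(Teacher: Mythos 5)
Your proposal follows essentially the same route as the paper: tile the triangle by a fine square mesh, cut along the grid with the Cutting Rule, apply the Square lemma (and its area-rescaling remark) on interior squares, make a crude weight/entropy bound on the thin boundary strip and on the number of gluing summands, and sandwich the normalized log-partition function between upper and lower bounds before sending the parameters to zero. The only cosmetic difference is that you keep an independent mesh parameter $k$ with its own limit, whereas the paper couples the mesh $\ell$ to $\delta$ (taking $\ell=o(\delta)$). You are also right that the limit produced by this argument is $\sigma(s,t)\cdot\mathrm{Area}(\Omega)$, which the paper's own proof indeed derives even though the claim's display omits the area factor.
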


\begin{proof}
 Let us take a square lattice with the mesh $\ell$ \footnote{Where we take $\ell \leq \delta$ such that $\ell=o(\delta)$} and use the cutting rule for the curve $\rho$ obtained from the intersection of the square lattice and $\Omega$.

\bb
Z(\Omega_n,\chi_n)=\sum_{\chi_\rho}{Z(\chi_{\rho})}
=\sum_{\chi_\rho} \prod_{j}{Z(\cS^{j},\chi_{\rho}^{j})},
\label{cutting_sum_square}
\ee
where $\cS_j$ is the square from the Square lattice with the boundary height function $\chi_{\rho}^{j}$.

We have two types of squares: included squares that do not intersect $\pp(\Omega)$ and excluded squares that intersect $\pp(\Omega)$. We want to build an upper and a lower bounds for the partition function of $(\Omega_n,\chi_n|g,\delta)$, let us denote the upper bound by $Z_L(\Omega_n)$ and the lower bound by $Z_U(\Omega_n)$.
We estimate included squares using the Square lemma and make a rough bound for the excluded squares. 
\begin{figure}[ht]
\centering
\includegraphics[width=0.45\textwidth]{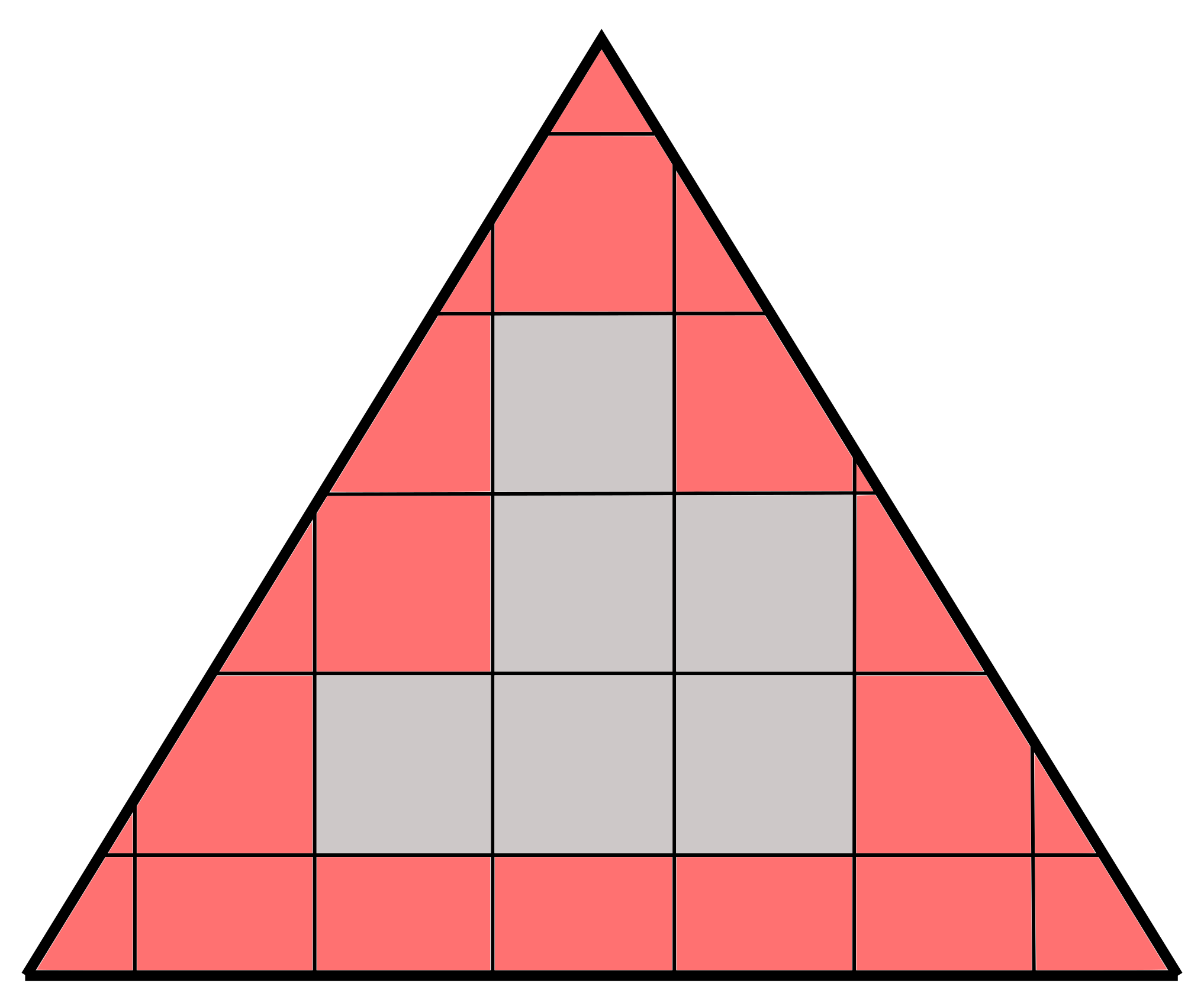}
\caption{An intersection of a triangle with a square lattice. Included squares in gray and excluded ones in red.}
\end{figure}

\subsubsection{The lower bound.}
We want to make a lower bound of sum from \hyperref[cutting_sum_square]{the cutting rule}, let us take only one summand correspondent to a boundary height function $\chi_n$.\footnote{Existence of such height function follows from the density lemma applied to $g$. A boundary height function is a restriction of the height function that approximates $g$.}

Here we estimate the excluded squares by the minimal weight of an edge (let us denote it by $w_{\min})$ to the power of number of edges in these squares.

\bb
\log Z(\Omega_n,\chi_n) \geq  \log Z_L(\Omega_n):= \log{w_{\min}}S_{ex} + \sum_{k}\log{Z_n^k(s,t)},
\ee
where $S_{ex}$ is the total number of edges in the excluded squares that we estimate by $o(\ell)$. And $Z_n^k(s,t)$ is the partition function of the square $S^k$.
After the normalization we have $\lim_{n \to \infty} {n^{-2}\log{Z_n^k(s,t)}}= \sigma(s,t)\times\mathcal{A}(S_k)+o(\delta)$ from the square lemma and remark 1. 
Finally, the lower bound is,
\bb
n^{-2}\log Z_L(\Omega_n)=\sum_k{n^{-2}\log{Z_n^k(s,t)}}+o(\delta).
\ee

\subsubsection{The upper bound.}
In the upper bound we estimate the sum in \hyperref[cutting_sum_square]{the Cutting Rule} by taking the maximal summand times the number of summands, which is the number of the all boundary height functions.

We bound the excluded squares by the maximal weight of the edge to the power of total number of edges. For the included squares we estimate the same way as for the lower bound. Let us bound the number of boundary height function by $2^{o(n)}$

\bb
Z(\Omega_n,\chi_n) \leq Z_U(\Omega_n):=w_{\max}^{o(\ell)}\times 2^{o(n)}\times \sum_j{n^{-2}\log{Z_n^j(s,t)}}
\ee

Finally after taking the limit as $n\to \infty$ the first summands differ from each other by $o(\delta)$ . Thus, both estimates differ from $\sigma(s,t)\times \textrm{Area of the triangle}$ by $o(\ell)$. So after taking limit as $\ell \to 0$ and $\delta \to 0$ we have the proposition. 
\end{proof}

\subsection{The proof of the Surface Tension theorem}
In this section we will use a triangular lattice with a mesh $\ell$ and peacewise linear approximations of Lipschitz functions from \hyperref[analitic1]{Claim \ref{analitic1}}. The idea of the proof is almost the same as for the triangular lemma. 

\begin{theorem}
Let $\Omega$ be compact, connected, simply-connected domain in $\mathbb{R}^2$ and $\chi \in \mathscr{H}(\Omega)$. Let $g$ be an asymptotic height function $ g \in \mathscr{H}(\Omega,\chi)$  and let $A_{n}=(\Omega_{n},\chi_{n})$ be an approximation of $(\Omega,\chi)$. Then

\begin{equation} \label{Functional}
\lim_{ \delta \to 0}\lim_{n \to \infty}{ n^{-2} \log Z(g,\delta \ | A_{n})}= \int_{\Omega}{\sigma(\nabla g) dx dy}
\end{equation} 
\end{theorem}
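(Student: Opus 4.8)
The plan is to reduce the general domain to the triangle case already handled by Claim \ref{triangle_case}, using the piecewise-linear approximation $\hat g$ of $g$ on a triangular $\ell$-mesh together with the cutting rule along the boundary of that mesh. First I would fix $\ell>0$ small and let $\rho$ be the curve obtained as the intersection of the edges of the $\ell$-mesh of equilateral triangles with $\Omega$; cutting $\Omega_n$ along $\rho$ via Proposition 1 (the Cutting Rule) gives
\begin{equation}
Z(g,\delta \mid A_n)=\sum_{\chi_\rho}\prod_{j} Z\bigl(\Delta^j,\chi_\rho^j\bigr),
\end{equation}
where $\Delta^j$ ranges over the triangles of the mesh (interior triangles entirely inside $\Omega$, and boundary triangles meeting $\pp\Omega$). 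On each interior triangle $\Delta^j$ the restriction of $g$ is, by Claim \ref{analitic1}, within $\ell\epsilon$ of the linear function $\hat g|_{\Delta^j}$ on at least a $(1-\epsilon)$-fraction of the triangles, so on those triangles the hypotheses of the Triangle lemma are met with slope $\nabla\hat g|_{\Delta^j}$ and parameter $\delta'=\max(\delta,\ell\epsilon)$.

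The main body of the argument is then the same upper/lower bound sandwich as in the proof of Claim \ref{triangle_case}. For the lower bound I would keep a single summand in the cutting sum, namely $\chi_\rho$ obtained by restricting to $\rho$ a fixed height function approximating $g$ (which exists by the Density Lemma, Theorem \ref{density}), bound the contribution of the boundary triangles and of the "bad" interior triangles (those where $\hat g$ fails to approximate $g$, a fraction $\le\epsilon$ of them) from below by $w_{\min}$ to the power of the number of their edges, and apply the Triangle lemma to each good interior triangle to get
\begin{equation}
\lim_{n\to\infty} n^{-2}\log Z(g,\delta\mid A_n)\ \ge\ \sum_{j\ \text{good}} \sigma\bigl(\nabla\hat g|_{\Delta^j}\bigr)\,\mathcal A(\Delta^j)\ -\ C\bigl(\epsilon+o(\ell)\bigr).
\end{equation}
For the upper bound I would bound the cutting sum by (number of boundary height functions) $\times$ (maximal summand); the number of boundary height functions on $\rho$ is $2^{o(n)}$ since $\rho$ has length $O(1)$ and height functions are Lipschitz, so after dividing by $n^2$ this factor vanishes, and the maximal summand is again controlled triangle-by-triangle by the Triangle lemma, with the boundary triangles now bounded above by $w_{\max}$ to the power of their edge count, contributing $o(\ell)$ after normalization. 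This yields a matching upper bound with the same error terms.

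Finally I would let $n\to\infty$, then $\delta\to 0$, then $\ell\to 0$ (and $\epsilon\to 0$). As $\ell\to0$ the Riemann-type sum $\sum_j \sigma(\nabla\hat g|_{\Delta^j})\,\mathcal A(\Delta^j)$ converges to $\iint_\Omega \sigma(\nabla g)\,dx\,dy$: indeed $\nabla\hat g\to\nabla g$ in $L^1(\Omega)$ as $\ell\to0$ by standard properties of piecewise-linear interpolation of Lipschitz functions (Rademacher plus dominated convergence, using that all gradients lie in the compact set $N_G$ and $\sigma$ is continuous there, being concave by Corollary 3.7 of \cite{KOS}), while the total area of the boundary triangles and of the bad triangles tends to $0$. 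The main obstacle I anticipate is the bookkeeping at the boundary: controlling the excluded (boundary) triangles and the mismatch between $\chi_n$ and the actual boundary values of the glued height functions there, so that their combined contribution is genuinely $o(\ell)$ uniformly and does not interfere with the two limits $n\to\infty$, $\delta\to0$; this is exactly where one must invoke the Density Lemma to produce a global height function with the prescribed boundary data restricting correctly to $\rho$, and where the order of limits ($n$ first, then $\delta$, then $\ell$) is essential.
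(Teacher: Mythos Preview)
Your proposal is correct and follows essentially the same approach as the paper: cut along a triangular $\ell$-mesh, separate good interior triangles from bad/boundary ones, use the Triangle lemma on the former and crude $w_{\min}/w_{\max}$ bounds on the latter, take a single summand (via the Density lemma) for the lower bound and the $2^{O(n)}$ count of boundary height functions for the upper bound. The only cosmetic difference is that the paper couples the parameters by setting $\epsilon=\delta/3$ in Claim \ref{analitic1} (so only the single limit $\delta\to0$ is taken at the end and the convergence $\mathcal F(\hat g)\to\mathcal F(g)$ is deferred to Lemma~2.2 of \cite{CKP}), whereas you keep $\ell,\epsilon,\delta$ independent and pass to the limit in stages; both organizations lead to the same estimate.
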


\begin{proof}
From $\hyperref[density]{ the Density \ lemma}$ we have a sequence of normalized height function $\{\hn \}$ that converges to $g$. 

Take $ \ell $ small enough such that $\norm{g_\ell-g}\leq \delta/3$ on $1-\delta /3$ fraction of the triangles\footnote{It corresponds to a choice$\epsilon=\delta/3$ in \hyperref[analitic1]{claim \ref{analitic1}}.}.

Let us use the $\hyperref[Cutting_boundary]{Cutting \ Rule}$ for curve $\rho$ obtained from the intersection of the triangular lattice of the mesh $\ell$ with the domain. Note that the curve cuts $\Omega_n$ into triangles with boundary conditions along $\rho$, let us denote the triangles by $\{T^{j}\}$ and their boundary height functions by $\{\chi_{\rho}^{j}\}$. (See figure \ref{general})

\bb
Z(\Omega_n,\chi_n)=\sum_{\chi_\rho}{Z(\chi_{\rho})}
=\sum_{\chi_\rho} \prod_{j}{Z(T^{j},\chi_{\rho}^{j})},
\label{cutting_sum}
\ee

There are two types of triangles $\{T^j\}$. The first type is triangles that do not intersect the boundary of the domain and where $g_\ell$ is $\delta/3$-close to $g$. The second type consists of triangles that intersect the boundary of the domain or where $g_\ell$ does not approximate $g$.

\begin{figure}[h!]
\centering
\includegraphics[width=0.3\textwidth]{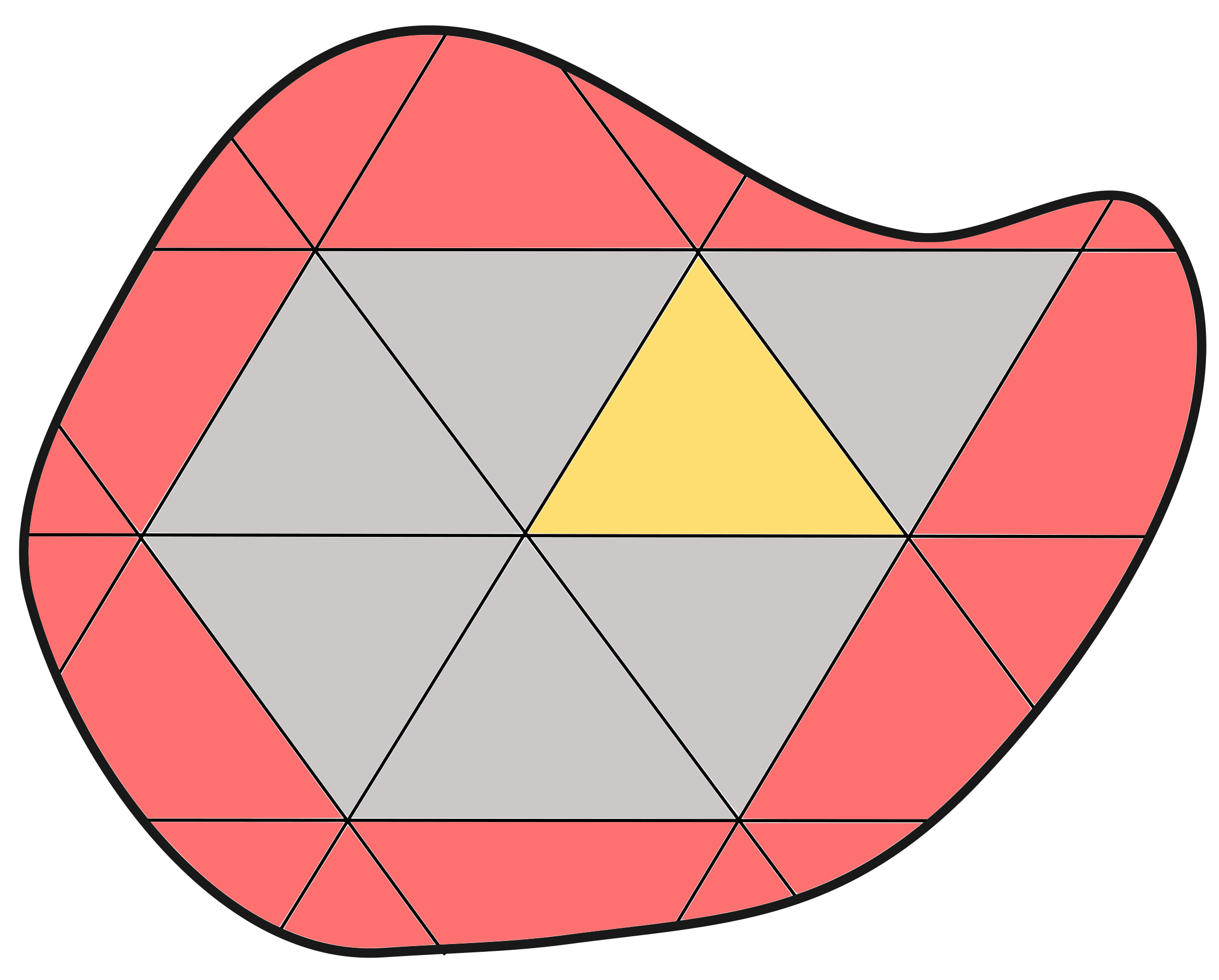}
\caption{Intersection of $\Omega_n$ with a triangular lattice. Triangles of the first type in gray and triangles of the second type in yellow and red.}
\label{general}
\end{figure}

Then we want to make an upper and a lower bounds for the normalized partition function, $n^{-2}{\log Z(\Omega_n,\chi_n|\delta)}$. In both cases we estimate two types of triangles separately. For the first type we can use  \hyperref[triangle_case]{claim \ref{triangle_case}} and for the second we make a rough estimate. Then after taking limit as $n\to \infty$ the normalized estimates will differ by $o(\delta)$.

\subsubsection{The lower bound.}

In a lower bound we have to include some height functions that $\delta$-close to $g$. 
To do this we can take only one summand from \hyperref[cutting_sum]{(\ref{cutting_sum})}  corresponding to one boundary height function $\chi_\rho$. Let us take the boundary height function obtained from the restriction of $\hn$, without loss of generality let us denote it  $\chi_\rho$.

Let us estimate the triangles of the first type by the product that includes only triangles of this type. We can bound the triangles of the second type by the weight of one height function with the minimal weight, that is the minimal weight of an edge to the power of total number of edges in triangles of the second type (let us denote it by $S_1$).

\bb
Z(\Omega_n,\chi_n)=\prod_{j}{Z(T^{j},\chi_{\rho}^{j})}\geq Z_L:=\prod_{k} {Z(T^{k},\chi_{\rho}^{k})}w_{\min}^{S_1}
\ee

Let us estimate the triangles of the first type using triangle case to count $\delta/3$-close height functions to make sure that we include only height functions $\delta$-close to $g$. For triangles of the first type we have the following,

\bb
\lim_{n \to \infty}n^{-2}\log \prod_j Z(T^{j},\chi_{\rho}^{j}|\delta/3)=\sum_k \sigma(s^k,t^k) \times \cA(T^{k})+o(\delta),
\ee

where $(s^k,t^k)$ is a slope of $g_{\ell}$ on the boundary of the triangle $T^k$ and $\cA(T^{k})$ is the area of the triangle $T^k$. 

Finally, the lower bound after taking the limit as $n\to \infty $ is the following, 

\bb
\sum_{j} \sigma(s^j,t^j) \cA(T^{j})+\log{w_{\min}}\cS+o(\delta),
\ee
where we bound the total number of edges in triangles of the second type by the area of such triangles that we denote by $\cS$\footnote{Note that $\cS\to 0$ as $\delta\to 0$ because the fraction of the triangles of the second type is $o(\delta)$.}.

\subsubsection{The upper bound.}

We can use almost the same strategy to make an upper bound. First, we have to include all height function $\delta$-close to $g$. Let us estimate the triangles of the first type by the same way, but count $3\delta$-close height functions. For the triangles of the second type we make a rough estimate taking the \textit{maximal weight} of an edge to the power of the number of edges in this triangles multiplied by a number of summands in the cutting rule that is $2^{o(n)}$.

\bb
Z(\Omega_n,\chi_n)\leq Z_U:=\prod_{j}{Z(T^{j},\chi_{\rho}^{j})}=\prod_{j}^{\prime}{Z(T^{j},\chi_{\rho}^{j})}w_{\max}^{S(T_j)}2^{o(n)}
\ee
And after taking limit as $n\to \infty$ the normalized upper bound is the following,

\bb
\sum_{j}^{\prime} \sigma(s^j,t^j) \cA(T^{j})+\log{w_{\max}}\cS+o(\delta).
\ee

Both bounds are $\mathcal{F}(g_{\ell})+o(\delta)$\footnote{$\int_{\Omega}{\sigma(\nabla g_{\ell}) dx dy}=\sum_{j}{\sigma(s_j,t_j)}$} that differs from $\mathcal{F}(g)$ by $o(\delta)$ from the lemma 2.2 from \cite{CKP}.
Thus, after taking the limit as $\delta \to 0$ we have the theorem.

\end{proof}

\section*{Appendix: Absolute height functions}

\subsection{Ribbon graph structure}

Every surface graph is also a ribbon graph: a ribbon graph is a graph with an additional structure given by, for each vertex, a cyclic order of the edges at this vertex. For the case of bipartite graphs we can fix an orientation simply orient edges at every black vertex clock-wise and at white vertices counter clock-wise.

An oriented path on a ribbon graph is an oriented path on a dual graph that preserves cyclic order at every vertex i.e. it goes in clock-wise direction at every black vertex and counter-clock-wise at white one. See an example on \href{zigzag}{Figure \ref{zigzag}}.

\begin{figure}[h!]
\centering
\includegraphics[width=0.5\textwidth]{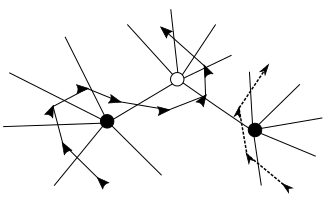}
\caption{Examples of oriented paths.}
\label{zigzag}
\end{figure}

\subsection{Absolute height functions}

Due to the dependence of $h_{D,D^{\prime}}$ and $N_{\GT}$ on $D^{\prime}$ it is more convenient to define an absolute homology class for $D$. One way to do it is to fix a 1-chain $\Phi$ such that $[D] − \Phi$ is a cycle.
\label{abs}

It can be done for every bipartite graph without self-intersections, see theorem 3.3 on page 28 in \cite{GK}.
We will focus on regular graphs, i.e. graphs with the same valence $\cN$ for all vertices.

In this case one can pick $\Phi=\sum_{e}\frac{1}{\cN} [e].$
Obtained height functions are called \textit{absolute height functions} \footnote{However, one needs to change coefficients of homology from $\ZZ$ to $\mathbb{Q}$ or to multiply coefficients by $\cN$ to make them integers.}.

One can notice that absolute height functions obey the local rule. It states that around each vertex they increase with the respect to cyclic order around this vertex: around black vertices they increase in clock-wise direction and around white vertices in counter-clock-wise direction. See an example on  \hyperref[fig:localrule]{Figure \ref{fig:localrule} }.

\begin{figure}[h!]
\centering
\includegraphics[width=0.45\textwidth]{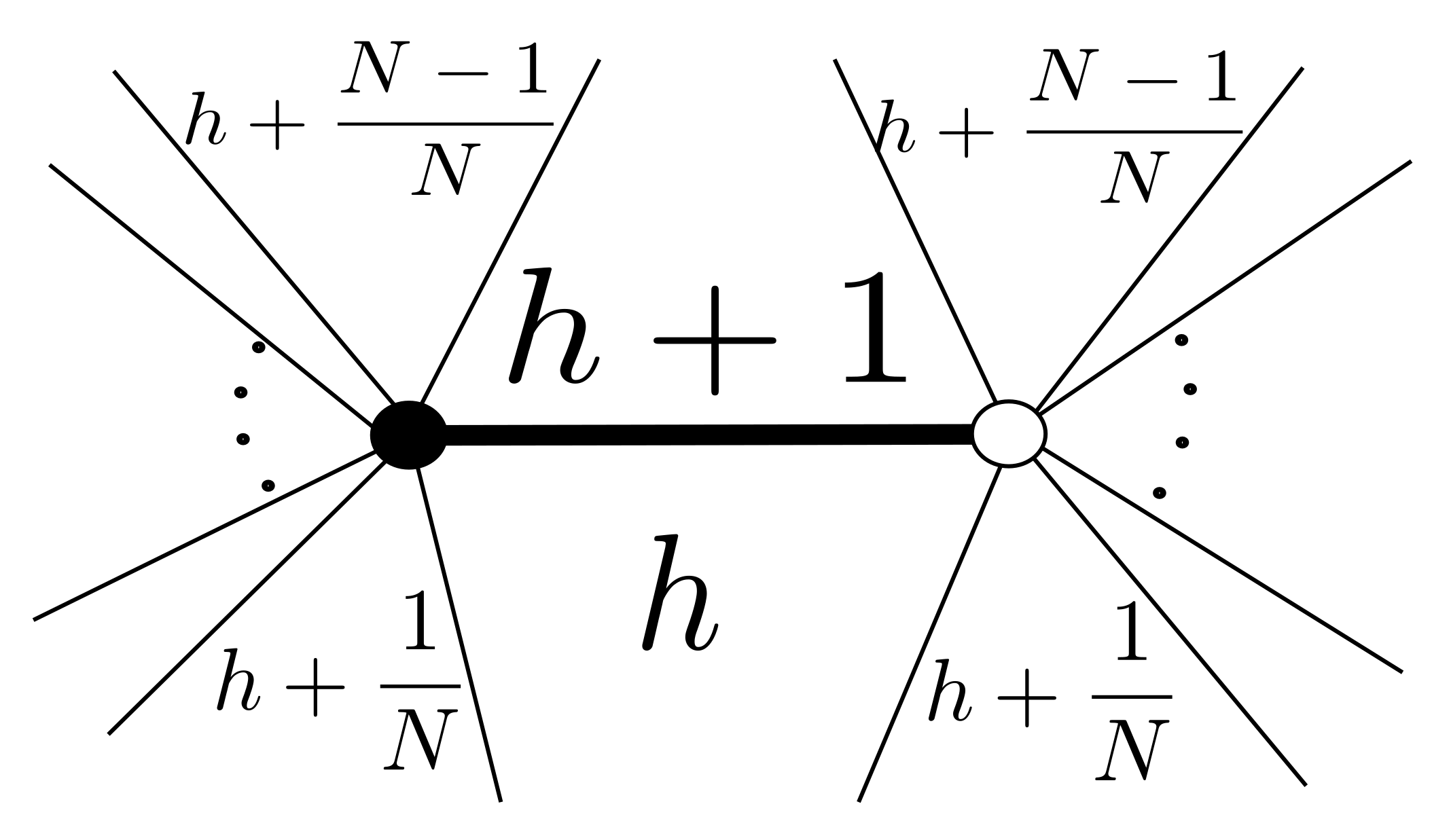}
\caption{The local rule for an absolute height function}
\label{fig:localrule}
\end{figure}

Moreover, it is not hard to check that every function on faces that satisfy the local rule is an absolute height function of some dimer cover.

\begin{proposition}
There is a bijection between absolute height functions and functions on faces
that satisfy the local rule.
\end{proposition}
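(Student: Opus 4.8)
The plan is to reduce the claim to a local computation of how an absolute height function changes across a single edge, together with an elementary counting argument around each vertex. First I would spell out the jumps: if $D$ is a dimer cover and $\sigma_D \in C_2$ satisfies $\partial\sigma_D = [D]-\Phi$ — such $\sigma_D$ exists precisely because $\Phi = \sum_e \frac1\cN [e]$ was chosen so that $[D]-\Phi$ is a $1$-cycle, see \cite{GK} — then the value of the associated cochain $h_D$ changes across an edge $e$ by exactly the coefficient of $e$ in $\partial\sigma_D = [D]-\Phi$. In the bipartite orientation this coefficient is $1-\frac1\cN$ when $e\in D$ and $-\frac1\cN$ when $e\notin D$. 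So the jump of $h_D$ across any edge is one of only two numbers, a \emph{big step} of size $\frac{\cN-1}{\cN}$ (for dimers) and a \emph{small step} of size $\frac1\cN$ (for non-dimers), up to the sign fixed by the edge orientation; and $e\in D$ if and only if the jump across $e$ is the big one.

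Next I would check that $h_D$ obeys the local rule, which is really the assertion that, once the signs coming from the bipartite orientation and the ribbon (cyclic) structure are unwound, the $\cN-1$ small jumps around a black vertex all carry the \emph{same} sign $+\frac1\cN$ in the clockwise order. Granting this, list the edges $e_1,\dots,e_\cN$ at an internal black vertex $v$ in clockwise cyclic order; they separate the surrounding faces in the same order, and the jumps of $h_D$ across $e_1,\dots,e_\cN$ sum to $0$ since those faces form a closed loop. As $D$ covers $v$ by exactly one dimer, exactly one jump is the big step and the other $\cN-1$ equal $+\frac1\cN$; hence $h_D$ increases by $\frac1\cN$ at each clockwise step except for a single drop of $\frac{\cN-1}{\cN}$, which is precisely "increasing with respect to the clockwise cyclic order." The mirror argument at white vertices (counterclockwise, opposite overall sign) gives the white local rule. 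Thus every $h_D$ lies in the set of functions satisfying the local rule; one inclusion is done.

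For the reverse inclusion, take any function $h$ on faces satisfying the local rule and set $D := \{ e : \text{the jump of } h \text{ across } e \text{ is the big step}\}$. The crux is that $D$ is a dimer cover: at an internal vertex the jumps of $h$ around it sum to $0$, and if $k$ of them are big steps this sum equals $(\cN-k)\frac1\cN + k(\frac1\cN-1) = 1-k$, forcing $k=1$, so every internal vertex meets exactly one edge of $D$ (valence-one boundary vertices impose no constraint, consistently with the notion of boundary conditions). Then $[D]-\Phi$ is a $1$-cycle, so some $\sigma_D$, hence $h_D$, exists, and by the first paragraph its jumps across every edge agree with those of $h$; therefore $h-h_D$ is locally constant, hence constant as $\Gamma$ is connected, so $h$ is an absolute height function. (Moreover the map $D\mapsto h_D$ is injective modulo constants, since distinct covers differ on some edge and therefore in some jump.) This gives the equality of the two classes of functions, i.e. the asserted bijection.

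I expect the main obstacle to be this reverse inclusion together with the sign bookkeeping in the second paragraph: matching the chain-level data $\mathrm{sgn}(e)$, $\mathrm{sgn}(v)$ and the coefficients of $[D]-\Phi$ with the geometric phrasing "clockwise at black vertices, counterclockwise at white vertices," and making sure that when an edge's jump is read off at each of its two endpoints the two surrounding-face labels are used consistently. The counting identity $k=1$ is short and is the conceptual heart; the remaining work is purely the orientation conventions and the standard fact that the locally reconstructed jumps assemble into the single global chain $[D]-\Phi$, which is where planarity (so that the height function is a genuine function and $\sigma_D$ exists) is used.
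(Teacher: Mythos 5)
The paper offers no proof of this proposition at all --- it merely remarks that ``one can notice'' the forward direction and that the reverse ``is not hard to check.'' Your proposal is therefore a genuine filling-in rather than a variant of the paper's argument, and it takes the natural route. The observation that the jump of $h_D$ across an edge $e$ is exactly the coefficient of $e$ in $\partial\sigma_D = [D]-\Phi$ (so $1-\tfrac1\cN$ or $-\tfrac1\cN$ up to an orientation sign) is the right key; the telescoping identity around a vertex, giving $1-k=0$ and hence $k=1$, is the conceptual core of the reverse inclusion and is correct. Two small remarks: (i) the reverse direction uses connectedness of $\Gamma$ to pass from ``$h-h_D$ locally constant'' to ``constant,'' which is harmless for surface graphs with connected underlying surface, but is worth stating; (ii) you are right that the only residual work is the orientation bookkeeping --- one must fix a convention relating the bipartite orientation, the surface orientation, and the ``left/right face of an edge'' so that the small steps carry a consistent sign in the clockwise (resp.\ counterclockwise) traversal at black (resp.\ white) vertices, and here the ribbon/bipartite structure is precisely what makes all $\cN-1$ small steps around a given internal vertex carry the \emph{same} sign. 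With that convention fixed, your proof is complete.
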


Also note that absolute height functions are Lipschitz functions in a sense described below, which is a consequence of the local rule.

Let $\pi(f_1,f_2)$  be the length of the shortest oriented path on dual graph of $\Gamma$ that connects faces $f_1$ and $f_2$.

\begin{proposition}
Every absolute height function satisfies modified Lipschitz condition:
\bb
h_D(f_1) - h_D(f_2) \leq \frac{1}{\cN} \pi(f_1,f_2)
\ee
\end{proposition}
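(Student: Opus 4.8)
The plan is to deduce the modified Lipschitz condition directly from the local rule for absolute height functions, which is available by the previous proposition. The key observation is that the local rule controls how an absolute height function changes as one crosses a single edge while moving along an oriented path in the dual graph: when we step from a face $f$ to an adjacent face $f'$ across an edge $e$ in a way that respects the cyclic orientation (clock-wise at black vertices, counter-clock-wise at white vertices), the value of $h_D$ changes by exactly $\frac{1}{\cN}$ times a quantity that is bounded in absolute value by $1$. More precisely, crossing one edge along an oriented path changes $h_D$ by at most $\frac{1}{\cN}$ in the \emph{favorable} direction, because the local rule forces monotone increase along the cyclic order and the total increment around a single vertex, summed over all $\cN$ incident edges, telescopes to an integer bounded by $1$ (it records whether the dimer at that vertex is crossed).

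First I would make precise the per-edge estimate: fix an oriented path $\gamma = (f_1 = g_0, g_1, \dots, g_k = f_2)$ on the dual graph, crossing edges $e_1, \dots, e_k$ of $\Gamma$, with $\pi(f_1,f_2) = k$ when $\gamma$ is the shortest such path. For each step I claim $h_D(g_{i-1}) - h_D(g_i) \leq \frac{1}{\cN}$. This follows from the bijection proposition: an absolute height function differs across edge $e_i$ by $\frac{1}{\cN}$ times $(\text{indicator terms})$ coming from $\Phi = \sum_e \frac{1}{\cN}[e]$ and from $D$ itself, and the orientation convention of the oriented path is exactly the one that makes this difference at most $\frac{1}{\cN}$ (it equals $\frac{1}{\cN}$ minus the contribution of the dimer, which is $0$ or $\frac{1}{\cN}$). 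I would phrase this by unwinding the definition $h_D$ as the function of levels of the 2-chain $\sigma$ with $\pp \sigma = D - \Phi$, so that crossing $e_i$ changes $h_D$ by the coefficient of $e_i$ in $D - \Phi$ up to sign, and check that with the oriented-path convention this coefficient lies in $\{-\frac{1}{\cN}, \frac{\cN - 1}{\cN}\} \cdot(\pm 1)$ arranged so the bound $\leq \frac{1}{\cN}$ holds in the traversed direction.

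Then I would simply telescope: summing $h_D(g_{i-1}) - h_D(g_i) \leq \frac{1}{\cN}$ over $i = 1, \dots, k$ gives $h_D(f_1) - h_D(f_2) = \sum_{i=1}^{k} \big(h_D(g_{i-1}) - h_D(g_i)\big) \leq \frac{k}{\cN} = \frac{1}{\cN}\pi(f_1,f_2)$, taking $\gamma$ to be a shortest oriented path realizing $\pi(f_1,f_2)$. This is exactly the claimed inequality.

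The main obstacle is the per-edge estimate, i.e. carefully matching the cyclic-orientation convention that defines oriented paths on the ribbon graph with the sign/coefficient bookkeeping in $D - \Phi$, so that the asymmetry is in the right direction (this is why the statement is a one-sided inequality rather than an absolute-value bound, and why it is \emph{not} symmetric in $f_1, f_2$: reversing the path is generally not an oriented path). Everything else — the telescoping and the choice of a shortest oriented path — is routine. I would also remark that applying the inequality with the roles of $f_1$ and $f_2$ swapped, using a shortest oriented path in the other direction, recovers a two-sided bound $|h_D(f_1) - h_D(f_2)| \leq \frac{1}{\cN}\max\{\pi(f_1,f_2), \pi(f_2,f_1)\}$, which is the form in which the Lipschitz property is used elsewhere.
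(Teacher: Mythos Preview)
Your approach is correct and is precisely the one the paper intends: the paper does not actually supply a proof of this proposition, only the remark preceding it that the inequality ``is a consequence of the local rule.'' Your proposal spells out that consequence in the standard way --- a per-edge bound coming from the coefficient of $e$ in $D-\Phi$, followed by telescoping along a shortest oriented path --- so there is nothing to compare beyond noting that you have filled in details the paper omits.

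One small point worth tightening: your description of the per-edge increment is slightly hedged (``$\frac{1}{\cN}$ times a quantity bounded in absolute value by $1$'', then later the set $\{-\tfrac{1}{\cN},\tfrac{\cN-1}{\cN}\}\cdot(\pm1)$). It is cleaner to say directly that, with the oriented-path convention, crossing an edge $e$ from $g_{i-1}$ to $g_i$ changes $h_D$ by exactly $\tfrac{1}{\cN}$ if $e\notin D$ and by $-\tfrac{\cN-1}{\cN}$ if $e\in D$ (or the reverse, depending on the sign convention in the definition of $\sigma$); in either case the one-sided bound you need at each step is immediate, and the telescoping goes through as you wrote.
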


\begin{remark}

It is important that there are such functions that satisfy Lipschitz condition, but are not absolute height functions. For example, a constant function on faces satisfies Lipschitz condition, but does not satisfy the local rule.
\end{remark}

\end{document}